\newif\ifdraft
\def\naive{na\"{\i}ve}
\newcommand{\figref}[1]{\figurename~\ref{fig:#1}}
\newcommand{\secref}[1]{Sec.~\ref{sec:#1}}
\newcommand{\appref}[1]{App.~\ref{appendix:#1}}
\newcommand{\defref}[1]{Def.~\ref{defn:#1}}
\newcommand{\defcaseref}[2]{Def.~\ref{defn:#1}(\ref{defn:#1:#2})}
\newcommand{\exref}[1]{Example~\ref{ex:#1}}
\newcommand{\thmref}[1]{Theorem~\ref{thm:#1}}
\newcommand{\problem}[1]{Challenge~#1}
\newcommand\noeditcolour{} 
\newcommand{\myedit}[2]{\ifdefined\noeditcolour{#2}\else{\color{#1}{#2}}\fi}
\DeclareRobustCommand{\myout}[2]{\ifdefined\noeditcolour{}\else\myedit{#1}{\texorpdfstring{\sout{#2}}{#2}}\fi}
\newcommand{\myfootnote}[3]{\ifdefined\noeditcolour{}\else\myedit{#2}{\footnote{\myedit{#2}{#1:~ #3}}}\fi}
\newcommand{\mytodo}[3]{\mytodogeneral{#1}{#2}{#3}{}}
\newcommand{\mytodogeneral}[4]{\ifdefined\noeditcolour{}\else\todo[#4color=#2,author=\color{white}{#1}]{\color{white}{#3}}\xspace\fi}
\def\ascolour{orange!70!black}
\def\as{\myedit{\ascolour}}
\def\asout{\myout{\ascolour}}
\def\asfootnote{\myfootnote{AS}{\ascolour}}
\def\astodo{\mytodo{AS}{\ascolour}}
\def\zgcolour{violet!80}
\def\zg{\myedit{\zgcolour}}
\def\zgout{\myout{\zgcolour}}
\newcommand{\zgedit}[2]{\zgout{#1 }{\zg{#2}}}
\def\nvcolour{blue}
\def\nv{\myedit{\nvcolour}}
\def\nvout{\myout{\nvcolour}}
\def\nvtodo{\mytodo{NV}{\nvcolour}}
\def\operator#1{\@ifnextchar\bgroup{\operatorarg{\ensuremath{#1}}}{\ensuremath{#1}}}
\def\operatorarg#1#2{\@ifnextchar\bgroup{\operatorargmore{#1}{#2}}{\operatorargtypeset{#1}{#2}}}
\def\operatorargmore#1#2#3{\operatorarg{#1}{#2,#3}}
\def\operatorargtypeset#1#2{{#1}{\ensuremath{(#2)}\xspace}}
\def\sqoperator#1{\@ifnextchar\bgroup{\sqoperatorarg{\ensuremath{#1}}}{\ensuremath{#1}}}
\def\sqoperatorarg#1#2{\ensuremath{#1{[}#2{]}}}
\def\fixedoperator#1{\@ifnextchar\bgroup {\fixedoperatorarg{#1}}{\ensuremath{#1}}}
\def\fixedoperatorarg#1#2{\fixedoperatorparse{#1}#2~}
\def\fixedoperatorparse#1#2,#3~{\ensuremath{{#2}{.}{#1}{(#3)}}}
\def\infixoperator#1{\@ifnextchar\bgroup {\infixoperatorarg{#1}}{\ensuremath{#1}}}
\def\infixoperatorarg#1#2{\infixoperatorparse{#1}#2~}
\def\infixoperatorparse#1#2#3~{\ensuremath{{#2}~{#1}~{#3}}}
\def\sat{\operator{\satc}}
\def\satc{\textit{sat}}
\def\ocrefine{\operator{\ocrefinec}}
\def\ocrefinec{\textit{refine}}
\newskip \point \point =1pt
\newcommand{\callout}[2]{
\begin{center}
  \shadowbox{
    \begin{minipage}{0.9\textwidth}
      \textbf{#1}\ #2
    \end{minipage}
  }
\end{center}}
\newcommand{\union}{\cup}
\newcommand{\intersect}{\cap}
\newcommand{\funorder}{\ensuremath{>}}
\newcommand{\ts}{\textit{ts}}
\newcommand{\corders}{\ensuremath{O}}
\newcommand{\genordername}{\ensuremath{\Gamma}}
\newcommand{\corderstop}{\genordername}
\newcommand{\rwpath}[2]{\ensuremath{#1_1 \rightarrow \ldots \rightarrow #1_#2}}
\newcommand{\gtpath}[2]{\ensuremath{#1_1 >_{\trms} \ldots >_{\trms} #1_#2}}
\newcommand\grange{\ensuremath{\corderstop}\xspace}
\def\withcolor{}
  \definecolor{fstarblue}{rgb}{0.0, 0.0, 1.0}
  \definecolor{haskellstr}{rgb}{0.2, 0.2, 0.6}
  \definecolor{haskellred}{rgb}{1.0, 0.0, 0.0}
  \definecolor{gray_ulisses}{gray}{0.55}
  \definecolor{castanho_ulisses}{rgb}{0.59,0.42,0.15}
  \definecolor{preto_ulisses}{rgb}{0.55,0.28,0.59}
  \definecolor{green_ulises}{rgb}{0.59,0.42,0.15}
	\definecolor{fstarblue}{gray}{0.1}
	\definecolor{haskellstr}{gray}{0.1}
	\definecolor{haskellred}{gray}{0.1}
	\definecolor{gray_ulisses}{gray}{0.1}
	\definecolor{castanho_ulisses}{gray}{0.1}
	\definecolor{preto_ulisses}{gray}{0.1}
	\definecolor{green_ulisses}{gray}{0.1}
\def\codesize{\small}
\lstdefinelanguage{HaskellUlisses} {
	basicstyle=\ttfamily\codesize,
	sensitive=true,
	belowskip={0.1em},
	aboveskip={0.1em},
    morecomment=[s]{(*}{*)},
	morecomment=[l][\color{gray_ulisses}\ttfamily\itshape\codesize]{--},
	morestring=[b]",
	stringstyle=\color{haskellstr},
	basewidth={0.2em},
	showstringspaces=false,
	numberstyle=\codesize,
	numberblanklines=true,
	showspaces=false,
	breaklines=true,
	showtabs=false,
	tabsize=4,
    literate={ 
             {->}{{$\rightarrow$}}2
			 {<=>}{{$\Leftrightarrow$}}1
             {~int}{{$\mathbb{Z}$}}1
             {~nat}{{$\mathbb{N}$}}1
			 {==>}{{$\Longrightarrow$}}1
			 {`feq`}{{$\eqinfix$}}1
			 {ka}{{k${}_a$}}1
			 {kb}{{k${}_b$}}1
			 {dollar}{{$\$$}}1
			 {dsl}{{d$_{sl}$}}2
			 {dfs}{{d$_{fs}$}}2
			 {rsl}{{r$_{sl}$}}2
			 {rfs}{{r$_{fs}$}}2
			 {dlm}{{d$_{lm}$}}2
           },
	emph=
	{[1] Tot, Type, bool, Lemma, ensures, requires, Ifc, IFC, IfcClearance, GlobalInt, GTot
	},
	emphstyle={[1]\color{fstarblue}},
	emph=
	{[2] class, match, with, if, then, else, let, rec, type, val, in, instance, data, measure, where, effect,noeq, private
	},
	emphstyle={[2]\color{castanho_ulisses}},
	emph=
	{[3]
        lattice, value, equals, canFlow, meet, join, bottom, top, 
        lawBot, lawFlowReflexivity, lawFlowAntisymetry, lawFlowTransitivity, 
        lawMeet, lawJoin, labels, 
        lt, lmeet, ljoin, lcanFlow, eq,
        labeled, labeledTCB
	},
	emphstyle={[3]\color{preto_ulisses}\textbf},
	emph=
	{[4]
        Low, Medium, High
	},
	emphstyle={[4]\color{green_ulises}\textbf},
	emph=
	{[5] assume, admit, admitP
	},
	emphstyle=[5]\color{red}\textbf,
	emph={[6] leq, equals, join', c_0, c_1
	},
	emphstyle=[6]\color{green}\textbf,
}
\newtheorem{invariant}{\algoname Invariant}
\title{REST: Integrating Term Rewriting with Program Verification}
\author{Zachary {Grannan}}{University of British Columbia, Vancouver, Canada}{zgrannan@cs.ubc.ca}{https://orcid.org/0000-0000-0000-0000}{}
\author{Niki {Vazou}}{IMDEA Software Institute, Madrid, Spain}{niki.vazou@imdea.org}{https://orcid.org/0000-0000-0000-0000}{}
\author{Eva {Darulova}\thanks{This work was partly done while the author was at MPI-SWS}}{Uppsala University, Uppsala, Sweden}{eva.darulova@it.uu.se}{https://orcid.org/0000-0000-0000-0000}{}
\author{Alexander J.~{Summers}}{University of British Columbia, Vancouver, Canada}{alex.summers@ubc.ca}{https://orcid.org/0000-0000-0000-0000}{}
\authorrunning{Z. Grannan, N. Vazou, E. Darulova, and A. J. Summers}
\begin{document}

\keywords{term rewriting, program verification, theorem proving}
\begin{CCSXML}
    <ccs2012>
    <concept>
    <concept_id>10003752.10010124.10010138.10010142</concept_id>
    <concept_desc>Theory of computation~Program verification</concept_desc>
    <concept_significance>500</concept_significance>
    </concept>
    </ccs2012>
\end{CCSXML}

\ccsdesc[500]{Theory of computation~Program verification}
\maketitle
\begin{abstract}
We introduce \acronym, a novel term rewriting technique for theorem proving that
uses online termination checking and can be integrated with existing program
verifiers.
\acronym{} enables flexible but terminating term rewriting for theorem proving by:
(1) exploiting newly-introduced term orderings that are more permissive than
standard rewrite simplification orderings;
(2) dynamically and iteratively selecting orderings based on the path of rewrites taken so far; and
(3) integrating external oracles that allow steps that cannot be justified with rewrite rules.
\as{Our \algoname approach is designed around an easily implementable core algorithm, parameterizable by choices of term orderings and their implementations; in this way our approach can be easily integrated into existing tools.} We implemented \algoname as a Haskell library and
incorporated it into Liquid Haskell's evaluation strategy, \asout{thus }extending Liquid Haskell with rewriting rules.
We evaluated our \acronym implementation
by comparing it against both existing rewriting techniques and E-matching
and by showing that it can be used to supplant manual lemma application in many existing Liquid Haskell proofs.
\end{abstract}

\section{Introduction}
\label{sec:introduction}

For all disjoint sets $s_0$ and $s_1$, the identity $(s_0 \union s_1) \intersect s_0 = s_0$ can be proven in many ways.
Informally accepting this property is easy,
but a machine-checked formal proof may require the instantiation of multiple set
theoretic axioms. Analogously, further proofs relying on this identity may
themselves need to apply it as a previously-proven lemma.
For example, proving functional correctness
of any program that relies on a set data structure typically requires the instantiation of set-related lemmas.
Manual instantiation of such universally quantified equalities is tedious, and
the burden becomes substantial for more complex proofs: a proof author needs to identify exactly which equalities to instantiate and with which arguments; in the context of program verification, a wide variety of such lemmas are typically available.
Given this need, most
program verifiers provide some automated technique or heuristics for instantiating universally quantified equalities.

For the wide range of practical program verifiers that are built upon SMT
solvers (\eg{}
\cite{dafny,FilliatreP13,LiquidHaskell,MuellerSchwerhoffSummers16,FStar,framac}),
quantified equalities can naturally be expressed in the SMT solver's logic.
However, relying solely on such solvers' E-matching techniques \cite{Detlefs05}
for quantifier instantiation (as the majority of these verifiers do) can lead to
both non-termination and incompletenesses that may be unpredictable
\cite{leino_trigger_2016} and challenging to diagnose
\cite{BeckerMuellerSummers19}. \as{The theory of how to prove that an E-matching-based encoding of equality reasoning guarantees termination and completeness is difficult and relatively unexplored \cite{dross2016adding}.}

A classical alternative approach to automating equality reasoning is \emph{term
rewriting}~\cite{Huet77}, which can be used to encode
lemma properties as (directed) rewrite rules, matching terms against the
existing set of rules to identify potential rewrites; the termination
of these systems is a well-studied problem~\cite{dershowitz1987termination}.
Although SMT solvers often perform rewriting as an internal simplification step,
verifiers built on top typically cannot access or customize these rules, \eg{}
to add previously-proved lemmas as rewrite rules. By contrast, many mainstream
proof assistants (\eg \Coq\cite{Coq}, \Isabelle\cite{Isabelle},
\Lean\cite{Lean}) 
provide automated, customizable term rewriting tactics.
However, the rewriting functionalities of
mainstream proof assistants either do not ensure the termination of rewriting
(potentially resulting in divergence, for example Isabelle) or enforce termination checks that
are overly restrictive in general, potentially rejecting necessary rewrite
steps (for example, Lean).

In this paper, we present \emph{\acronym (REwriting and Selecting Termination
orderings)}: a novel technique that equips program verifiers with automatic
lemma application facilities via term rewriting, enabling equational reasoning
with complementary strengths to E-matching-based techniques. While term
rewriting in general does not guarantee termination, our technique weaves together
three key technical ingredients to automatically generate and explore
guaranteed-terminating restrictions of a given rewriting system while typically
retaining the rewrites needed in practice:
(1) \algoname compares terms using well-quasi-orderings derived from (strict)
simplification orderings; thereby facilitating common and important rules such as
commutativity and associativity properties.
(2) \algoname simultaneously considers an entire family of term orderings; selecting the
appropriate term ordering to justify rewrite steps \emph{during term rewriting
itself}.
(3) \algoname allows integration of an \emph{external oracle} that generates additional
steps outside of the term rewriting system. This allows the incorporation of
reasoning steps awkward or impossible to justify via rewriting rules, all
without compromising the termination and relative completeness guarantees of our
overall technique.

\subparagraph*{Contributions and Overview} We make the following contributions:
\begin{enumerate}
\item We design and present a new approach (\acronym) for applying term rewriting rules and simultaneously selecting appropriate term orderings to permit as many rewriting steps as possible while guaranteeing termination (\secref{approach}).
\item We introduce ordering constraint algebras, an abstraction for reasoning
        effectively about multiple (and possibly infinitely
        many) term orderings simultaneously (\secref{orderings}).
\item We introduce and formalize recursive path quasi-orderings (\rpqo{}s)
derived from the well-known recursive path ordering
\cite{dershowitz1982orderings} (\secref{orderings:def}). RPQOs are more
permissive than classical RPOs, and so let us prove more properties.
\item We formalize and prove key results for our technique: soundness, relative completeness, and termination (\secref{metaprop}).
\item We implement \algoname as a stand-alone library, and integrate the
        \algoname library into Liquid Haskell to facilitate automatic lemma instantiation
(\secref{implementation}).
\item We evaluate \acronym{} by comparing it to other term
rewriting tactics and E-matching-based axiomatization, and show that it can substantially simplify equational reasoning proofs (\secref{evaluation}).
\end{enumerate}
We discuss related work in \secref{related}; we begin (\secref{overview}) by identifying \numchallenges key problems that all need solving for a reliable and automatic integration of term rewriting into a program verification tool.
%

\section{\Numchallenges Challenges for Automating Term Rewriting}
\label{sec:challenges}\label{sec:overview}
In this section, we describe \emph{\numchallenges key challenges} that naturally
arise when term rewriting is used for program verification and outline
how \algoname is designed to address them.
To illustrate the challenges, we use
simple verification goals that involve uninterpreted functions and the
set operators ($\emptyset$, $\union$, $\intersect$)
that satisfy the standard properties of \figref{identities}.
The variables $x,y,z$ are implicitly
quantified\footnote{over sets; we omit explicit types in such formulas, whose
type-checking is standard.} in these rules. In formalizations of set theory,
such properties may be assumed as (quantified) axioms, or proven as
lemmas and then used in future proofs.

\begin{figure}[t]
\begin{center}
$$
 \begin{array}{|l|rcl|}
    \hline
    \textit{Name} & \multicolumn{3}{c|}{\textit{Formula}}  \\
  \hline
\textit{idem-union} & X \union X &=& X \\
\textit{idem-inter} & X \intersect X &=& X \\
\textit{empty-union} & X \union \emptyset &=& X  \\
\textit{empty-inter} & X \intersect \emptyset &=& \emptyset \\
\textit{commut-union}  & X \union Y &=& Y \union X \\
\textit{symm-inter} & X \intersect Y &=& Y \intersect X\\
\textit{distrib-union} &(X \union Y) \intersect Z &=& (X \intersect Z) \union (Y \intersect Z) \\
\textit{distrib-inter} &(X \intersect Y) \union Z &=& (X \union Z) \intersect (Y \union Z) \\
\textit{assoc-union} &X \union (Y \union Z) &=& (X \union Y) \union Z \\
  \hline
  \end{array}
$$
\end{center}
\caption{Set identities used for examples in this section. Variables $X,Y,Z$ are implicitly quantified. We write the
    binary functions $\union, \intersect$ infix; along with (nullary)
    $\emptyset$ these are fixed function symbols.}\label{fig:identities}
\end{figure}

Term rewriting systems (defined formally in \secref{background}) are a standard
approach for formally expressing and applying equational reasoning (rewriting
terms via known identities). A term rewriting system consists of a finite set of
\emph{rewrite rules}, each consisting of a pair of a \emph{source term} and a
\emph{target term}, representing that terms matching a rule's source can be
replaced by corresponding terms matching its target. For example, the rewrite rule
$X \union \emptyset \rightarrow X$ can replace set unions of some set $X$
and the empty set with the corresponding set $X$. Rewrite rules are applied to a
term $t$ by identifying some subterm of $t$ which is equal to a rule's source
under some substitution of the source's free variables (here, $X$, but not
constants such as $\emptyset$); the subterm is then replaced with the
correspondingly substituted target term. This rewriting step \emph{induces an
equality} between the original and new terms. For instance, the example rewrite
rule above can be used to rewrite a term $f(s_0 \union \emptyset)$ into
$f(s_0)$, inducing an equality between the two.

Rewrite rules classically come with two restrictions: the free variables of the
target must all occur in the source and the source must not be a single variable. This
precludes rewrite rules which invent terms, such as
$\emptyset \rightarrow X \intersect \emptyset$, and those that trivially lead to
infinite derivations. Under these restrictions, the first four identities induce
rewrite rules from left-to-right (which we denote by \eg{} \textit{idem-inter$\rightarrow$}), while the remaining induce rewrite rules in
both directions (\eg{} \textit{assoc-union$\rightarrow$} vs.~\textit{assoc-union$\leftarrow$}).

Next, we present a simple proof obligation taken from \cite{leino2013verified} in the style of equational reasoning (\emph{calculational proofs}) supported in the Dafny program verifier \cite{dafny}.
\begin{example}
\label{ex:identities}
We aim to prove, for two sets $s_0$ and $s_1$ and some unary function $f$ on sets, that, if the sets are disjoint (that is, $s_1 \intersect s_0 = \emptyset$), then $f((s_0 \union s_1) \intersect s_0) = f(s_0)$.
\[
\begin{array}{rrcll}
\textit{Equational Proof:}\quad&f((s_0 \union s_1) \intersect s_0) & = & f((s_0 \intersect s_0) \union (s_1 \intersect s_0)) \quad& \textit{(distrib-union}{\rightarrow}\textit{)}\\
&& = & f(s_0 \union (s_1 \intersect s_0)) \quad& \textit{(idem-inter}{\rightarrow}\textit{)}\\
&& = & f(s_0 \union \emptyset) \quad& \textit{(disjointness ass.}{\rightarrow}\textit{)}\\
&& = & f(s_0) \quad& \textit{(empty-union}{\rightarrow}\textit{)}\\
\multicolumn{5}{l}{\color{gray}{\as{\text{(Possible Term Ordering, as explained shortly: RPO instance with }\intersect \funorder \union\text{)}}}}
\end{array}
\]

\end{example}

This manual proof closely follows the user annotations employed in the corresponding Dafny proof \cite{leino2013verified}; the application of the function $f$ serves only to illustrate equational reasoning on subterms. Every step of the proof could be explained by term rewriting, hinting at the possibility of an \emph{automated} proof in which term rewriting is used to solve such proof obligations. In particular, taking the term rewriting system naturally induced by the set identities of \figref{identities} \emph{along with} the assumed equality expressing disjointness of $s_0$ and $s_1$ results in a term rewriting system in which the four proof steps are all valid rewriting steps.

In the remainder of the section, we consider what it would take to make term
rewriting effective for reliably automating such verification tasks. Perhaps unsurprisingly, there
are multiple problems with the simplistic approach outlined so far. The first
and most serious is that term rewriting systems in general \emph{do not
guarantee termination}; a proof search may continue
indefinitely by repeatedly applying rewrite rules. For example,
the rules $\textit{distrib-union}$ and $\textit{distrib-inter}$ can lead to an
infinite derivation
$ (s_{0} \cup s_{1}) \cap s_{2} \rightarrow
  (s_{0} \cap s_{2}) \cup (s_{1} \cap s_{2}) \rightarrow
  (s_{0} \cup (s_{1} \cap s_{2})) \cap (s_{2} \cup (s_{1} \cap s_{2})) \rightarrow \ldots
$

\callout{Challenge 1:}{Unrestricted term rewriting systems do not guarantee termination.}

To ensure termination (as proved in~\thmref{terminating}) \algoname follows
the classical approach \as{of restricting a term-rewriting system to a variant in which sequences of term rewrites (\emph{rewrite paths}) are allowed only if each consecutive pair of terms is \emph{ordered} according to some term ordering which rules out infinite paths.}\asout{
of terminating  term rewriting systems and requires
that rewrite applications decrease the size of the term with respect to
a well-founded order on terms.}

\as{For example,} \emph{Recursive path orderings} (RPOs)~\cite{dershowitz1982orderings}
define well-founded orders $>_{\trms}$ on terms $\trms$
based on an underlying well-founded strict partial order $\funorder$ on
\emph{function symbols}. Intuitively, such orderings use $\funorder$ to order terms with
different top-level function symbols, combined with the properties of a
\emph{simplification order} \cite{dershowitz_simplification_1979} (\eg{}
compatibility with the subterm relation). \as{Different choices of the underlying $>$ parameter yield different RPO instances that order different pairs of terms; in particular, potentially allowing or disallowing certain rewrite paths.}

In~\exref{identities}, an RPO based on a partial order where
$\intersect \funorder \union$ and $\intersect \funorder \emptyset$
permits all the rewriting steps, that is,
the left-hand-side of each equation is greater than the right-hand-side.

Sadly, this ordering will not permit the rewriting
steps required by our next example.

\begin{example}
\label{ex:identitiestwo}
We aim to prove, for two sets $s_0$ and $s_1$ and some unary function $f$ on sets, that, if $s_1$ is a subset of $s_0$ (that is, $s_0 \union s_1 = s_0$), then $f((s_0 \intersect s_1) \union s_0) = f(s_0)$.
\[
\begin{array}{rrcll}
\textit{Equational Proof:}\quad&f((s_0 \intersect s_1) \union s_0) & = & f((s_0 \union s_0) \intersect (s_1 \union s_0)) \quad& \textit{(distrib-inter}{\rightarrow}\textit{)}\\
&&  = & f(s_0 \intersect (s_1 \union s_0)) \quad& \textit{(idem-union}{\rightarrow}\textit{)}\\
&& = & f(s_0 \intersect (s_0 \union s_1)) \quad& \textit{(commut-union}{\rightarrow}\textit{)}\\
&& = & f(s_0 \intersect s_0) \quad& \textit{(subset ass.}{\rightarrow}\textit{)}\\
&& = & f(s_0) \quad& \textit{(idem-inter}{\rightarrow}\textit{)}\\
\multicolumn{5}{l}{\color{gray}{\as{\text{(Possible Term Ordering: RPQO instance, explained shortly, with }\intersect \funorder \union\text{)}}}}
\end{array}
\]
\end{example}
An \rpo based on an ordering where $\intersect \funorder \union$ (as required by
\exref{identities}) will not permit the first step of this proof (since the \rpo
ordering first compares the top level function symbols).
Instead,  this step requires an \rpo based on an ordering where $\union \funorder \intersect$.
To accept \textit{both} this proof step
\textit{and} the \exref{identities}
we need \asout{a rewriting system that permits rewrites}\as{different restrictions of the rewrite rules for different proofs; in particular, different rewrite paths may be}\asout{
that are} ordered according to \rpos that are based on
different function orderings.

To generalize this problem we will call \rpos a term ordering \textit{family}
that is \textit{parametric} with respect to the underlying function ordering.
Thus, a concrete \rpo term ordering \as{(called an \emph{instance} of the family)} is obtained after the parametric function
ordering is instantiated. With this terminology, the next challenge \as{can be stated as follows:}

\callout{Challenge \challengeDiffInstantiations:}{Different proofs require
   different term orderings within a family.}\label{challenge:diffinstantiations}
\as{Note that enumerating all term orderings in a term ordering family is typically impractical (this set is often very large and may be infinite).} To address this challenge,
\algoname uses a novel algebraic structure (\secref{oca})
to \as{allow for an abstract representation of sets of term orderings with which one can} efficiently check whether \as{any instance of a chosen}\asout{a} term ordering family \asout{can be instantiated
with a parameter such that the resulting ordering} can orient the necessary rewrite
steps to complete a proof.

Going back to \exref{identitiestwo},
the \rpo instance with $\union \funorder \intersect$ will permit all the steps,
apart from the commutativity axiom expressed by $\textit{(commut-union}{\rightarrow}\textit{)}$.
To permit this step we need an ordering for which
$t_1 \union t_2 >_{\trms} t_2 \union t_1$.
But for \rpo instances, as well as for many other term orderings,
the terms $t_1 \union t_2$ and $t_2 \union t_1$ are equivalent and thus cannot
be oriented; associativity axioms are also similarly challenging.
Since many proofs require such \as{properties, it is important in practice for rewriting to support them.}

\callout{Challenge \challengeWQO:}{Strict orderings restrict
commutativity and associativity steps.}
To address this challenge \algoname relaxes the strictness constraint \as{by
requiring the chosen term ordering family to consist (only) of \emph{thin
well-quasi-orderings}} (\defref{orderings:prop}). Intuitively, such orderings
permit rewriting to \as{terms which are \emph{equal} according to the ordering,
but such equivalence classes of terms must be finite. In \secref{orderings} we
show how to lift well-known families of term orderings to analogous and
more-permissive families of thin well-quasi-orders. In particular, we show how
to lift RPOs to a particularly powerful family of term orderings that we call
\emph{recursive path quasi-orderings (RPQOs)} (\defref{rpqo}), whose
instances allow us} to accept \exref{identitiestwo}.

\as{Despite the permissiveness of RPQOs}, there \as{remain some}\asout{are many} rewrite derivations that will be rejected by
all term orderings in the RPQO family.
For example, consider the following proof that set union is
monotonic with respect to the subset relation:
\begin{example}
\label{ex:setmono}
We aim to prove, for sets $s_0$, $s_1$, and $s_{2}$, that, if $s_1$ is a subset
of $s_0$ (that is, $s_0 \union s_1 = s_0$), then
$(s_{2} \cup s_{1}) \cup (s_{2} \cup s_{0}) = s_{2} \cup s_{0}$.
\[
\begin{array}{rrcll}
\textit{Equational Proof:}\quad&(s_{2} \cup s_{1}) \cup (s_{2} \cup s_{0}) & = & s_{2} \cup (s_{1} \cup (s_{2} \cup s_{0})) \quad& \textit{(assoc-union}{\leftarrow}\textit{)}\\
&& = & s_{2} \cup ((s_{1} \cup s_{2}) \cup s_{0}) \quad& \textit{(assoc-union}{\rightarrow}\textit{)}\\
&& = & s_{2} \cup ((s_{2} \cup s_{1}) \cup s_{0}) \quad& \textit{(commut-union}{\rightarrow}\textit{)}\\
&& = & s_{2} \cup (s_{2} \cup (s_{1} \cup s_{0})) \quad& \textit{(assoc-union}{\leftarrow}\textit{)}\\
&& = & s_{2} \cup (s_{2} \cup (s_{0} \cup s_{1})) \quad& \textit{(commut-union}{\rightarrow}\textit{)}\\
&& = & s_{2} \cup (s_{2} \cup s_{0}) \quad& \textit{(subset ass.}{\rightarrow}\textit{)}\\
&& = & (s_{2} \cup s_{2}) \cup s_{0} \quad& \textit{(assoc-union}{\rightarrow}\textit{)}\\
&& = & s_{2} \cup s_{0} \quad& \textit{(idem-union}{\rightarrow}\textit{)}\\
\multicolumn{5}{l}{\color{gray}{\as{\text{(Possible Term Ordering: any KBQO instance)}}}}
\end{array}
\]
\end{example}
The above rewrite rule steps cannot be oriented by any RPQO,
but are trivially oriented by a quasi-ordering that is based
on the syntactic size of the term,
\eg a quasi-ordering based on the \as{well-known} Knuth-Bendix family \as{of term orderings \cite{kbo}}.
Yet, a Knuth-Bendix quasi-ordering (KBQO\as{, defined in \secref{orderings}}) cannot be used on our previous two examples\as{; fixing even a single choice of term ordering \emph{family} would still be too restrictive in general}.

\callout{Challenge \challengeDiffOrders:}{Some proofs require different families of
  term orderings.}
To address this challenge, \algoname (\secref{algo2}) is defined parametrically
\as{in the choice and representation of a} term ordering family.

Finally, although equational reasoning is powerful enough for these examples,
general verification problems usually require reasoning beyond the scope of
simple rewriting. For example, simply altering \exref{identities} to express the
disjointness hypothesis instead via cardinality as $|s_0 \intersect s_1| = 0$
means that, to achieve a similar proof, reasoning within the theory of sets is
necessary to deduce that this hypothesis implies the equality needed for the
proof; this is beyond the abilities of term rewriting.
\callout{Challenge \challengeOracle:}{Program verification needs proof steps not expressible by rewriting.}
To address this challenge, \as{our} \algoname~\as{approach allows the integration of an external oracle that can generate equalities not justifiable by term rewriting, while still guaranteeing  termination}\asout{combines rewriting with an external oracle}
(\secref{oracle}).

\section{The \acronym{} Approach}\label{sec:algo}\label{sec:approach}
We develop \acronym{} to tackle the above \numchallenges challenges
and integrate a flexible, expressive, and guaranteed-terminating
term rewriting system with a verification tool.
%
\acronym consists of an interface for defining term orderings and an
algorithm for exploring the rewrite paths supported by the term orderings.
In \secref{termorder} we describe the representation of term orderings in
\algoname and how they address Challenges \challengeDiffInstantiations and \challengeDiffOrders.
In \secref{algo2} we describe the \algoname algorithm that is parametric to
these orderings  and
\secref{oracle} describes the integration with external oracles
(\problem{\challengeOracle}).

\subsection{Representation of Term Orderings in \algoname}\label{sec:termorder}

Rather than considering individual term orderings, \algoname operates on indexed
sets (families) of term orderings (whose instances must all be thin well-quasi-orderings [\defref{orderings:prop}]).

\begin{definition}[Term Ordering Family]\label{defn:tofamily}
  A \emph{term ordering family} $\corderstop$  is a set of \as{thin well-quasi-orderings} on terms, indexed by some
  parameters $P$. \as{An \emph{instance} of the family is a term ordering} obtained
  by a particular instantiation of $P$.
\end{definition}

For example, the recursive path ordering is defined parametrically with
respect to a precedence on function symbols, and therefore defines a term
ordering family \as{indexed by this choice of function symbol ordering}.

A core \as{concern}\asout{component} of \algoname is determining whether any \as{instance of a given term ordering family}\asout{ inside a
family} can orient a rewrite path. However, term ordering families cannot
directly compare terms; doing so requires choosing an ordering inside the
family. The root of \problem{\challengeDiffInstantiations} is that choosing an
ordering in advance is too restrictive: different orderings are necessary to
complete different proofs.
The idea behind \algoname's search algorithm is to
address this challenge by simultaneously considering all orderings in the family
when considering rewrite paths and continuing the path so long as it can be
oriented by \emph{any} ordering.

To demonstrate the technique, we show how \algoname's approach can be derived
from a \naive~algorithm. The purpose of the algorithm is to determine if any ordering in
a family $\genordername$ can orient a path $\rwpath{t}{n}$; \ie if there is a
$>_{\trms}~\in~\corderstop$ such that $\gtpath{t}{n}$.

\begin{figure}[h]
  \begin{minipage}{0.55\textwidth}
  $$
  \begin{array}{|lc|c}
    \cline{1-2}
    \texttt{orients}: ( \settype{\corders}\times \listtype{\trms}) \rightarrow \textit{Bool} & & \\
    \cline{1-2} 
  \texttt{orients}(\corderstop, ts) = & &  \\
  \quad os := \corderstop; & & \quad (1)\quad \\
  \quad \textbf{for}~i \in 1~\textbf {to}~|ts|-1~\{ & & \\
  \quad \quad os := \{>_{\trms}~\in~os~|~ts_{i} >_{\trms}
    ts_{i + 1} \}; & & \quad (2) \quad\\
  \quad \quad \textbf{if}~(os = \emptyset) &  & \quad (3) \quad \\
  \quad \quad \quad \textbf{return}~\textit{false}; & & \\
  \quad \} & & \\
  \quad \textbf{return}~\textit{true}; & &  \\
  \cline{1-2}
  \end{array}
  $$
\end{minipage}
\begin{minipage}{0.40\textwidth}
  $$
\begin{array}{|lc|}
  \hline
  \texttt{orients}: (\ocatype\times\listtype{\trms})) \rightarrow \textit{Bool} & \\
  \hline 
  \texttt{orients}(\langle \octop, \ocrefinec, \satc \rangle, ts) = &  \\
\quad c := \octop; & \\
\quad \textbf{for}~i \in 1~\textbf {to}~|ts|-1~\{ & \\
\quad \quad c := \ocrefinec(c, ts_{i}, ts_{i+1}); &  \\
\quad \quad \textbf{if}~(\textbf{not}(\satc(c))) & \\
\quad \quad \quad \textbf{return}~\textit{false}; & \\
\quad \} & \\
\quad \textbf{return}~\textit{true}; &  \\
\hline
\end{array}
$$
\end{minipage}
\caption{
  Two algorithms that determine if an ordering in
  the term ordering family $\corderstop$ can orient a path of terms $ts$.
  \textbf{Left} presents the \naive, exhaustive algorithm.
  \textbf{Right} is using the ordering
constraint algebra $\langle \octop, \ocrefinec, \satc \rangle$ that
returns true iff an ordering in $\corderstop$ can orient $ts$
without explicitly constructing any term orderings.
\corders is the type of a term ordering.}
\label{fig:naive-rewrite}
\label{fig:naive-rewrite-oca}
\end{figure}

\as{The \naive~algorithm is depicted on the left of \figref{naive-rewrite}.} The \naive~algorithm works
iteratively, computing the set of orderings $os$ that can orient an
increasingly-long \asout{prefix of the }path, short-circuiting if the set becomes empty.
\zgedit{Unfortunately, the algorithm is not practical as it}The algorithm \zgedit{requires enumerating}{enumerates} each ordering in
$\corderstop$ and \zgedit{comparing}{compares} terms with each ordering (potentially multiple times).
\as{\zgedit{This}{Unfortunately, this} enumeration is not practical:} some term ordering families \as{have infinite or prohibitively
large numbers of instances.} \algoname~avoids these issues by allowing the set of term orderings
to be abstracted via a structure called an \Ocalgebra (OCA, \defref{oca} of~\secref{oca}).

An OCA for a term ordering family $\genordername$ \as{consists of} a type $C$
\as{along with four parameters $\gamma : C \rightarrow  \mathcal{P}(\genordername)$}, $\octop : C$,
$\ocrefinec : C \rightarrow \trms \rightarrow \trms \rightarrow C$, and
$\satc : C \rightarrow \textit{Bool}$. \as{$C$ is a type whose elements \emph{represent} subsets of $\genordername$. The function $\gamma$ is the \emph{concretisation function} of the OCA, not needed programmatically but instead defining the \emph{meaning} of elements of $C$ in terms of the subsets of the term ordering family they represent. The remaining three functions} correspond to the operations on sets
of term orderings used in lines (1), (2), and (3) of the \naive~algorithm.
 \octop represents the set of all
term orderings in $\genordername$, $\ocrefinec(c, t, u)$ filters the set of orderings represented by $c$ to include only
those where $t >_{\trms} u$, and $\satc(c)$ is a predicate that returns true if
the set of orderings represented by $c$ is nonempty.
\figref{naive-rewrite-oca} on the right shows how the \oca can be used to perform an
equivalent computation to the \naive~algorithm, without explicitly instantiating
sets of term orderings.
The OCA plays a role similar to abstract interpretation in a program analysis, where $C$ is an abstraction over sets of
term orderings, and the results of the abstract
operations on $C$ correspond to their concrete equivalents.
\zg{Namely, we have
$\gamma(\octop) = \corderstop$,
$\gamma(\ocrefine{c}{\tt_l}{\tt_r}) = \{ \ord \;|\; \ord~\in \gamma(c) \;\land\; {\tt_l}  \ord  {\tt_r}\}$, and
$\satc(c) \;\Leftrightarrow\; \gamma(c)\neq\emptyset$.}

The ordering constraint algebra enables three main advantages compared to direct
computation with sets of term orderings:
\begin{enumerate}
      \item The number of term orderings can be very large, or even
        infinite, thus making enumeration of the entire set intractable.
      \item An OCA can provide efficient implementations for \ocrefinec~and \satc~by
        exploiting properties of the term ordering family. Comparing terms
        using the constituent term orderings requires repeating the comparison for each
        ordering, despite the fact that most orderings will differ in ways
        that are irrelevant for the comparison.
      \item The OCA does not impose any requirements on the type of $C$ or the
implementation of $\octop, \ocrefinec$, and $\satc$. For example, an OCA can use
$\octop$ and $\ocrefinec$ to construct logical formulas, with $\satc$ using an
external solver to check their satisfiability. Alternatively, it could define
$C$ to be sets of term orderings that are reasoned about explicitly, and
        implement $\octop, \ocrefinec$, and $\satc$ as the operations of the
        \naive~algorithm.
\end{enumerate}

We now describe how the \algoname algorithm uses the OCA to explore rewrite
paths.

\subsection{The \algoname Algorithm}\label{sec:algo2}

\begin{figure}[t]
  $$
  \begin{array}{|lcl|}
  \hline
  \algoname: (\ocatype\times \rreltype \times \trms\;\times\; (\trms \rightarrow \settype{\trms})) \rightarrow \settype{\trms} & & \\
  \hline 
  \algoname(\langle \octop, \ocrefinec, \satc \rangle,\rrel, \tt_0, \evalSymb) = & &  \\
  \quad o := \emptyset; & & \\
  \quad p := [([\tt_{0}], \octop)]; & & \\
  \quad \textbf{while } ( p \text{ is not empty} )\{& & \\
  \quad \quad \textbf{pop} (ts, c) \text{ from } p; & & \\
  \quad \quad \tt := \text{last } \ts; & & \\
  \quad \quad o := o \cup \{ \tt \}; & & \\
  \quad \quad \textbf{foreach } (\tt' \textit{such that } \tt' \not \in \ts \;\land\; (\goestor{\rrel
  }{\tt}{\tt'} \;\lor\; \tt'\in\evalSymb(\tt))) \{  & & \\
  \quad \quad \quad \textbf{if}\ (
                      \tt'\in\evalSymb(\tt)
                       \lor
                        (\goestor{\rrel}{\tt}{\tt'} \land \sat{\ocrefine{c}{\tt}{\tt'}})
  )\{ & & \\
  \quad\quad \quad \quad \textbf{push } (\ts \concat [\tt'],\ocrefine{c}{\tt}{\tt'}) \textbf{ to } p & & \\
  \quad\quad \quad \} & & \\
  \quad\quad \} & & \\
  \quad  \} & & \\
  \quad \textbf{return}\ o ; & & \\
  \hline
  \end{array}
  $$
\caption{The \algoname algorithm.}
\label{fig:algo-rewrite}
\end{figure}

\figref{algo-rewrite} presents the \algoname algorithm. The algorithm takes
four parameters.
The first parameter is an OCA $\langle \octop, \ocrefinec, \satc \rangle$, as
discussed above.
The algorithm's second parameter, $\rrel$, is a finite set of term rewriting rules
(not required to be terminating); for example, we could pass the oriented
rewrite rules corresponding to \figref{identities}.
The third parameter
$\tt_{0}$ is the term from which term rewrites are sought.
The final parameter \evalSymb{} acts as
an external oracle, generating additional rewrite steps that need \emph{not}
follow from the term rewriting rules $R$. To simplify the explanation, we will
initially assume that $\evalSymb = \lambda t.\emptyset$, \ie{} this parameter
has no effect. Our algorithm produces a set of terms, each of which are
reachable by \emph{some} rewrite path beginning from $\tt_{0}$, and for which
\emph{some} ordering allows the rewrite path. The algorithm addresses
\problem{\challengeTermination} (termination; \thmref{terminating}) because every path must be finite:
no ordering could orient an infinite path.

Our algorithm operates in worklist fashion, storing in $p$ a list of pairs $(\ts,\ordconstraints)$ where $\ts$ is a non-empty list of terms representing a rewrite path already explored (the head of which is always $\tt_{0}$) and $\ordconstraints$ tracks the ordering constraints of the path so far.
The set $o$ records the output terms (initially empty): all terms discovered (down any rewrite path) equal to $\tt_{0}$ via the rewriting paths explored.

While there are still rewrite paths to be extended, \ie $p$ is not empty, a tuple
$(ts, \ordconstraints)$ is popped from $p$.
\algoname puts \tt, \ie the last term of the path, into the set of output terms $o$ and considers all terms $\tt'$
that are: (a) not \emph{already} in the path and (b) reachable by a single rewrite step of $\rrel$ (or returned by the function $\evalSymb$ explained later).
The crucial decision of whether or not to extend a rewrite path with the
additional step $\tt \rightarrow \tt'$ is handled in the \textbf{if} check of \algoname.
This check is to guarantee termination, by enforcing that we only add rewrite steps which would leave the extended path still justifiable by \emph{some} term ordering, as enforced by the \sat~check.

\begin{figure}[t]
  \includegraphics[width=\textwidth]{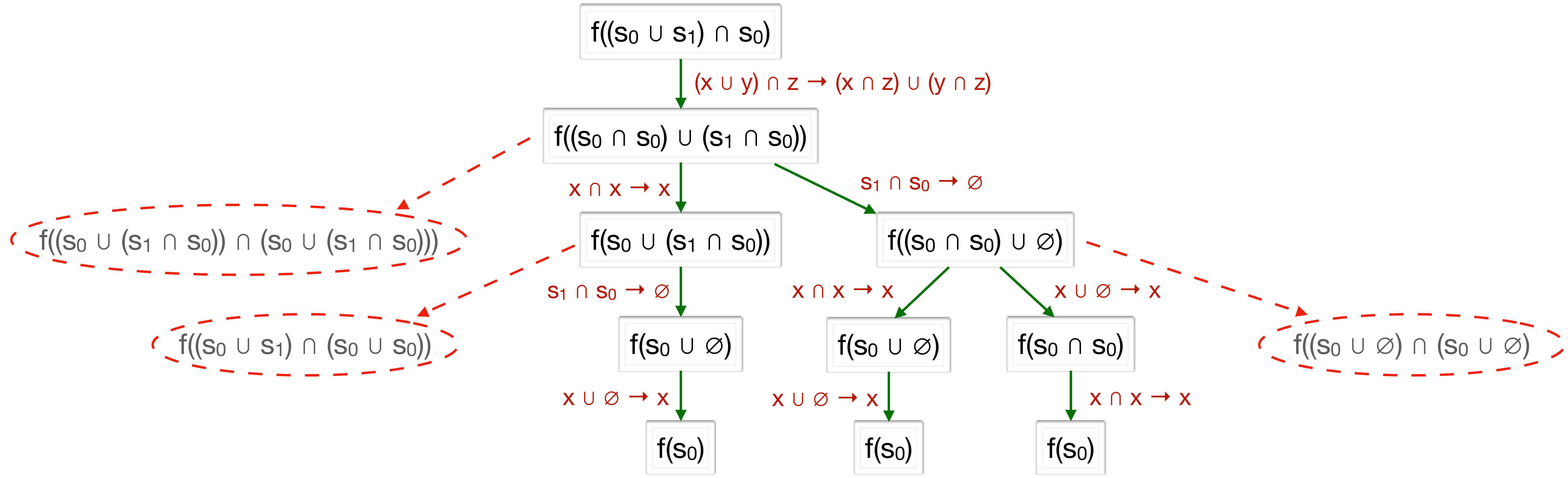}
  \caption{A visualization of \acronym{} running on the term from
    \exref{identities}. Each path through the tree shown represents a rewrite
    path uncovered by our algorithm; the edge labels show the rewrite rule
    applied. The red dotted lines indicate rewrite steps rejected by \algoname.}\label{fig:vis-example1}
\end{figure}

\figref{vis-example1} visualizes the rewrite paths explored by our algorithm for
a run corresponding to the problem from \exref{identities}, using the OCA for the
recursive path quasi-ordering (\secref{orderings:oc})\footnote{We omit the
commutativity rules from this run, just to keep the diagram easy to visualize,
but our implementation handles the example easily with or without them.}. The
manual proof in \exref{identities} corresponds to the right-most path in this
tree; the other paths apply the same reasoning steps in different orders. In our
implementation, we optimize the algorithm to avoid re-exploring the same term
multiple times unless this could lead to further rewrites being discovered
(\cf{} \secref{implementation}).

The arrow from the root of the tree to its child corresponds to the first
rewrite \algoname applies:
$f((s_0 \union s_1) \intersect s_0) \rightarrow f((s_0 \intersect s_0) \union (s_1 \intersect s_0))$.
This rewrite step can only be oriented by RPQOs with precedence
$\intersect > \union$; therefore applying this rewrite constrains the set of
RPQOs that \algoname must consider in subsequent applications. For example, the
rewrite to the left child of
$f((s_0 \intersect s_0) \union (s_1 \intersect s_0))$ can only be oriented by
RPQOs with precedence $\union > \intersect$. Since no RPQO can have both
$\intersect > \union$ and $\union > \intersect$, no RPQO can orient the entire
path from the root; \algoname must therefore reject the rewrite. On the other
hand, the rewrite to the right child can be oriented by any RPQO where
$s_{0} > \emptyset$, $s_{1} > \emptyset$, or $\intersect > \emptyset$. The path
from the root can thus continue down the right-hand side, as there are RPQOs that
satisfy both $\intersect > \union$ and the other conditions. The subsequent
rewrites down the right-hand side do not impose any new constraints on the
ordering:
$f((s_{0} \intersect s_{0}) \cup \emptyset) >_{\trms} f(s_{0} \intersect s_{0}) >_{\trms} f(s_{0})$
in all RPQOs.

Similarly, \algoname will prove \exref{identitiestwo} but will reject
\exref{setmono} when the input OCA represents RPQO orderings.
As shown in our benchmarks (Table \ref{table:evaluation:case-studies} of \secref{evaluation}),
\exref{setmono} is solved by \algoname with an OCA for the Knuth-Bendix term ordering family.

\subsection{Integrating an External Oracle}\label{sec:oracle}

Finally, to tackle \problem{\challengeOracle}, we turn to the (so far ignored) third parameter
of the algorithm, the external oracle \evalSymb{}. In the example variant presented at the end of \secref{challenges}, such a function might
supply the rewrite step $s_0 \intersect s_1 \rightarrow \emptyset$ by analysis
of the logical assumption $|s_0 \intersect s_1| = 0$, which goes beyond
term-rewriting. More generally, any external solver capable of producing rewrite
steps (equal terms) can be connected to our algorithm via \evalSymb{}. In our
implementation in Liquid Haskell, we use the
pre-existing \emph{Proof by Logical Evaluation (PLE)} technique \cite{PLE},
which complements rewriting with the expansion of program function definitions,
under certain checks made via SMT solving. Our only requirements on the oracle
$\evalSymb{}$ are that the binary relation on terms generated by calls to it is bounded (finitely-branching) and strongly normalizing (\cf{} \secref{metaprop}).

Our algorithm therefore flexibly allows the interleaving of term rewriting steps
and those justified by the external oracle; we avoid the potential for this
interaction to cause non-termination by conditioning any further rewriting steps
on the fact that the entire path (including the steps inserted by the oracle)
can be oriented by at least one candidate term ordering.

The combination of our interface for defining term orderings via \oca{}s, a
search algorithm that effectively explores all rewrites enabled by the
orderings, and the flexible possibility of combination with external solvers via
the oracle parameter makes \acronym{} very adaptable and powerful in practice.


\section{Well-Quasi-Orderings and the Ordering Constraint Algebra}\label{sec:term-order}\label{sec:orderings}

Term orderings are typically defined as \textit{strict well-founded} orderings; this
requirement ensures that rewriting will obtain a normal form. However, as
mentioned in \problem{\challengeWQO}, the restriction to strict orderings
limits what can be achieved with rewriting. In this section we
describe the derivation of well-quasi-orderings from strict orderings (\secref{wqo}) and
introduce
Knuth-Bendix quasi-orderings (\secref{kbqo}) and
recursive path quasi-orderings (\secref{orderings:def}), two novel term ordering families
respectively based on the classical recursive path and Knuth-Bendix orderings.
In addition, we formally introduce ordering constraint algebras (\secref{oca})
and use them to develop an efficient ordering constraint algebra for
RPQOs.

\subsection{Well-Quasi-Orderings}
\label{sec:wqo}

We define well-quasi-orderings in the standard way.

\begin{definition}[Well-Quasi-Orderings]\label{defn:orderings:prop}

A relation $\geqslant$ is a quasi-order if it is reflexive and transitive. Given
elements $t$ and $u$ in $S$, we say $t \approx u$ if $t \geqslant u$ and
$u \geqslant t$. A quasi-order $\geqslant$ is also characterized as:
\begin{enumerate}
\item \textit{\wqo}, when for all infinite chains $x_1, x_2, \ldots$ there exists an
$i, j, i < j$ such that $x_j \geqslant x_i$,
\item \textit{thin}, when forall $t \in S$, the
set $\{u \in S~|~t \approx u \}$ is finite, and
\item \textit{total}, when for all $t, u \in S$ either $t \geqslant s$ or $s \geqslant t$.
\end{enumerate}
\end{definition}

Well-quasi-orderings are not required to be antisymmetric,
however the corresponding strict part of the ordering must be well-founded.
Hence, a WQO derives a strict ordering over equivalence classes of terms;
\algoname also requires that these equivalence classes are finite (\ie the
ordering is thin). With this
requirement, \algoname guarantees termination by exploring only duplicate-free paths.

Many simplification orderings can be converted into more permissive \wqo{}s.
Intuitively, given an ordering $>_{o}$ its quasi-ordering derivation also
accepts equal terms, so we denote it as $\geqslant_{o}$. We next present two
such derivations.

\subsubsection{Knuth-Bendix Quasi-Orderings (KBQO)}
\label{sec:kbqo}

The Knuth-Bendix ordering~\cite{kbo} is a well-known simplification ordering used
in the Knuth-Bendix completion procedure.
Here, we present a simplified version of the ordering, used by \algoname
that is using ordering to only compare ground terms.

\begin{definition}
  A \textit{weight function} $w$ is a function $\ops \rightarrow \mathbb{N}$,
where $w(f) > 0$ for all nullary functions symbols, and $w(f) = 0$ for at most one unary
function symbol. $w$ is \textit{compatible} with a quasi-ordering $\geqslant_{\ops}$ on
$\ops$ if, for any unary function $f$ such that $w(f) > 0$, we have
$f >_{\ops} g$ for all $g$. $w(t)$ denotes the weight of a
term $t$, such that $w(f(t_{1}, \ldots, t_{n})) = w(f) + \sum\limits_{1 \leqslant i \leqslant n} w(t_{i})$
\end{definition}

\begin{definition}[Knuth-Bendix ordering (KBO) on ground terms]
The Knuth-Bendix Ordering $>_{kbo}$ for a given weight function $w$ and
compatible precedence order $\geqslant_{\ops}$ is defined as
$f(t_{1}, \ldots, t_{m}) = t >_{kbo} u = g(u_{1}, \ldots, u_{n})$ iff
$w(t) \geqslant w(u)$, and:
  \begin{enumerate}
    \item $w(t) > w(u)$, or
    \item $f >_{\ops} g$, or
    \item $f \geqslant_{\ops} g$, and
$(t_{1}, \ldots, t_{m}) >_{\textit{kbolex}} (u_{1}, \ldots, u_{n})$.
  \end{enumerate}
  Where $>_{\textit{kbolex}}$ performs a lexicographic comparison using $>_{kbo}$ as the underlying ordering.
\end{definition}

Intuitively, KBO compares terms by their weights, using $\geqslant_{\ops}$ and
the lexicographic comparison as ``tie-breakers'' for cases when terms have equal
weights. However, as $\geqslant$ is already a well-quasi-ordering on
$\mathbb{N}$, we can derive a more general ordering by removing these
tie-breakers and the need for a precedence ordering at all.

\begin{definition}[Knuth-Bendix Quasi-ordering (KBQO)]
  Given a weight function $w$, the Knuth-Bendix quasi-ordering $\geqslant_{kbo}$ is
defined as $t \geqslant_{kbo} u$ iff
$w(t) \geqslant w(u)$.
\end{definition}

The resulting quasi-ordering is considerably simpler to implement and is more
permissive: $t \kbo u$ implies $t \geqslant_{kbo} u$; and also enables arbitrary
associativity and commutativity axioms as rewrite rules, since it only considers
the weights of the function symbols and no structural components of the term.
However, one caveat is that \algoname operates on well-quasi-ordering that are
thin (\defref{orderings:prop}) and therefore can only consider KBQOs where
$w(f) > 0$ for all unary function symbols $f$.

However, the fact that KBO and KBQO largely ignore the structure of the term in
their comparison has a corresponding downside: it is not possible to
orient distributivity axioms, or  many other axioms that increase the
number of symbols in a term. Therefore, we have found that a WQO derived from
the recursive path ordering \cite{dershowitz1982orderings} to be more useful in
practice.

\subsubsection{Recursive Path Quasi-Orderings (RPQO)}
\label{sec:orderings:def}


In this section, we define a particular family of orderings designed to be
typically useful for term-rewriting via \algoname. Our family of orderings is a
novel extension of the classical notion of \rpo, designed to also be more
compatible with symmetrical rules such as commutativity and associativity (\cf{}
\problem{\challengeWQO}, \secref{overview}).

Like the classical \rpo notions,
our \emph{\rpqof{}} (\rpqo) is defined in three layers, derived from an
underlying ordering on function symbols:

\begin{itemize}
    \item The input ordering $\ord_\ops$ can be any quasi-ordering over $\ops$.
    \item The corresponding \emph{multiset quasi-ordering} $\ord_{M(X)}$ lifts an ordering $\ord_X$ over $X$ to an ordering $\ord_{M(X)}$
          over multisets of $X$.
          Intuitively $T \ord_{M(X)} U$ when $U$ can be obtained from $T$ by
          replacing zero or more elements in $T$ with the same number of equal
        (with respect to $\ord_{X}$) elements,
          and replacing zero or more elements in $T$ with a finite number of smaller ones
          (\defref{set:ordering}).
    \item Finally, the corresponding \emph{\rpqof} $\rpoc$ is an ordering over terms.
    Intuitively $f(ts) \rpoc g(us)$ uses $\ord_\ops$
    to compare the function symbols $f$ and $g$
    and the corresponding $\rpomc$ to compare the argument sets $ts$ and $us$
    (\defref{rpo}).
\end{itemize}

Below we provide the formal definitions of the multiset quasi-ordering and \rpqof
respectively generalized from
the multiset ordering of~\cite{dershowitz_proving_1979}
and the recursive path ordering~\cite{dershowitz1982orderings}
to operate on quasi-orderings.
For all the three orderings, we write
$x_l < x_r \doteq x_l \not \ord x_r$ and
$x_l > x_r \doteq x_l \ord x_r \land x_r \not \ord x_l$.

\begin{definition}[Multiset Ordering]
    \label{defn:set:ordering}
    Given a ordering $\ord_{X}$ over a set $X$, the \textit{derived multiset ordering} $\ord_{M(X)}$  over finite multisets of
$X$ is defined as $T \ord_{M(X)} U$ iff:
    \begin{enumerate}
            \item\label{defn:set:ordering:casei} $U = \emptyset$, or
            \item\label{defn:set:ordering:caseii}
            $t \in T \land u \in U \land t \approx u \land (T - t) \ord_{M(X)} (U - u)$, or
            \item\label{defn:set:ordering:casei}
            $t \in T \land (T - t) \ord_{M(X)} (U \setminus \{ u \in U~|~u <_{X} t \})$.
    \end{enumerate}
\end{definition}

\begin{definition}[Recursive Path Quasi-Ordering]\label{defn:rpo}\label{defn:rpqo}
Given a basic ordering $\ord_\ops$,
the \textit{\rpqof (\rpqo)} is the ordering $\rpoc$ over \trms{} defined as follows:
$f(\tt_1,\ldots, \tt_m) \rpoc  g(\uu_1, \ldots, \uu_n)$ iff
\begin{enumerate}
    \item\label{defn:rpo:casei} $f >_\ops g$  and $ \{ f(\tt_1,\ldots, \tt_m) \} \rpogtm {\{\uu_1, \ldots, \uu_n\}}$, or
    \item\label{defn:rpo:caseii} $g >_\ops f$  and $ \{\tt_1, \ldots, \tt_m\} \rpogeqm {\{  g(\uu_1, \ldots, \uu_n) \}}$, or
    \item\label{defn:rpo:caseiii} $f \approx g$ and $ \{\tt_1, \ldots, \tt_m\} \rpogeqm \{\uu_1,\ldots,\uu_n \}$.
\end{enumerate}
\end{definition}

\begin{example}
As a first example, any \rpqo $\ord_\trms$ used to restrict term rewriting will accept the rule
\goestor{}{X + Y}{Y + X}, since
$X + Y \ord_\trms Y + X$ always holds.
Since the top level function symbol is the same
$+ \approx +$, by \defcaseref{rpo}{caseiii}
we need to show $\{X,Y\} \rpogeqm \{Y,X\}$.
By \defcaseref{set:ordering}{caseii} (choosing both $t$ and $u$ to be $X$), we can reduce this to $\{Y\} \rpogeqm \{Y\}$; the same step applied to
$y$ reduces this to showing $\emptyset \rpogeqm \emptyset$ which follows directly from \defcaseref{set:ordering}{casei}.

\end{example}
From this example, we can see that both $X + Y \rpogeq Y + X$ and
$Y + X \rpogeq X + Y$ hold, in this case independently of the choice of input
ordering $\ord_\ops$ on function symbols. In our next example, the choice of
input ordering makes a difference.

\begin{example}
As a next example, we compare the terms
$s(X) + Y$ and $s(X + Y)$.
Now that the outer function symbols are \emph{not} equal,
the order relies on the ordering between $+$ and $s$.
Let's assume that $+ >_\ops s$.
Now to get $s(x) + y \rpogeq s(X + Y)$, the first case of Definition~\ref{defn:rpo}
further requires $\{ s(X) + Y \} \rpogtm \{ X + Y\}$, which holds if
$s(X) + y \rpogt X + Y$. The outermost symbol for both expressions is $+$, so we
must check the multiset ordering: $\stgt{\{s(X),Y\}}{\{X,Y\}}$, which holds
because by case splitting on the relation between $s$ and $X$, we can show that
$s(X)$ is always greater than $X$.
In short, if $+ >_\ops s$, then $s(X) + Y \rpogeq s(X + Y)$.
\end{example}

Developing on our $\rpqo$ notion (\defref{rpqo}), we consider the set of
\emph{all} such orderings that are generated by any total, well-quasi-ordering
over the operators. We prove that such term orderings satisfy the termination
requirements of \thmref{terminating}. Concretely:

\begin{theorem}
    If $\ord_\ops$ is a total, well-quasi-ordering,
    then
    \begin{enumerate}
        \item\label{thm:wqo} $\rpogeq$ is a well-quasi-ordering,
        \item\label{thm:thin} $\rpogeq$ is thin, and
        \item\label{thm:twf} $\rpogeq$ is thin well-founded.
    \end{enumerate}
\end{theorem}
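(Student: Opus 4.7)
The plan is to establish (1) first via Kruskal's tree theorem, then obtain (2) by structural induction on terms, and finally derive (3) from (1) and (2) together.

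For part (1), the key tool is Kruskal's tree theorem: if the set of labels for finite trees forms a WQO, then the homeomorphic embedding relation on those trees is itself a WQO. Applied here with $\ord_\ops$ as the label ordering, Kruskal guarantees that for any infinite sequence $t_1, t_2, \ldots$ of terms there exist indices $i < j$ with $t_i$ embedding homeomorphically into $t_j$. The classical result for the strict RPO shows that homeomorphic embedding is refined by $>_{\trms}$, using only the precedence and subterm clauses of the definition; this argument carries over essentially unchanged to our more permissive $\rpogeq$. Hence $t_j \rpogeq t_i$, so $\rpogeq$ is a WQO. A supporting lemma along the way is a Higman-style result: the multiset quasi-ordering $\ord_{M(X)}$ of \defref{set:ordering} lifts a WQO on $X$ to a WQO on finite multisets of $X$, which is what allows the recursive argument clauses of \defref{rpqo} to be handled inductively.

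For part (2), I proceed by induction on the structure of a fixed term $f(t_1, \ldots, t_m)$, working under the standard assumption that the signature $\ops$ is finite (so that the $\approx_\ops$-class of $f$ is finite). Inspecting \defref{rpqo}, any term equivalent to $f(t_1, \ldots, t_m)$ under $\rpogeq$ must have the form $g(u_1, \ldots, u_n)$ with $f \approx_\ops g$ and the argument multisets $\{t_1, \ldots, t_m\}$ and $\{u_1, \ldots, u_n\}$ equivalent under $\ord_{M(\trms)}$. Multiset equivalence, by unfolding \defref{set:ordering}, forces $n = m$ and a bijective pairing of the two multisets modulo element-wise $\approx_\trms$ (clauses (\ref{defn:set:ordering:casei}) and (\ref{defn:set:ordering:caseii}) are the only ones that can apply symmetrically). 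The inductive hypothesis supplies finite equivalence classes for each $t_i$, yielding finitely many candidate multisets and hence finitely many equivalent terms overall. For part (3), a \emph{thin well-founded} quasi-ordering is one that is thin and whose strict part is well-founded: thinness is exactly (2), and well-foundedness of $\rpogt$ follows immediately from (1), since an infinite strictly decreasing chain $x_1 \rpogt x_2 \rpogt \ldots$ would prevent the existence of indices $i < j$ with $x_j \rpogeq x_i$, contradicting the WQO property.

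\textbf{Main obstacle.} The delicate step is the adaptation in (1) of the standard Dershowitz-style well-foundedness proof for the strict RPO to the quasi-ordered setting. The multiset clauses now admit equivalence transitions (\defcaseref{set:ordering}{caseii}), and the usual minimal-bad-sequence construction must be checked to still go through despite these new transitions; this is exactly where thinness of the derived multiset ordering (inherited from thinness at the element level, itself inherited from finiteness of the signature) becomes essential to rule out pathological looping among equivalent terms. A secondary subtlety is keeping the induction hypotheses in parts (1) and (2) mutually consistent: since (2) is proved by induction on terms but appeals to the WQO from (1), the two results are cleanest when proved together by simultaneous induction on term depth.
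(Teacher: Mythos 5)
Your plan follows essentially the same route as the paper: part (1) via Kruskal's tree theorem (the paper instead cites Dershowitz's well-foundedness theorem for quasi-simplification orderings, which is the same mathematics packaged as a citable result), part (2) by induction on terms using finiteness of the signature and the forced shape $f \approx_\ops g$ plus a bijective $\approx$-pairing of argument multisets, and part (3) as a direct corollary of (1) and (2). The one place where your write-up papers over the real work is the clause ``this argument carries over essentially unchanged'': for the embedding argument to apply, $\rpogeq$ must first be shown to be a quasi-\emph{simplification} ordering, and in particular a quasi-order at all. Transitivity of $\rpogeq$ is not obvious --- in the paper it is a multi-case analysis over the possible precedence relationships among three head symbols, and it is precisely where the \emph{totality} hypothesis on $\ord_\ops$ is consumed; your proposal never explains why totality is needed, which suggests this step has not been confronted. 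Relatedly, your ``main obstacle'' paragraph worries that thinness is needed to make the minimal-bad-sequence argument of part (1) go through; it is not --- the WQO property holds independently of thinness, which enters only in parts (2) and (3). Neither point is a fatal flaw in the plan, but the quasi-simplification-ordering lemma (reflexivity, transitivity, subterm, replacement) is the bulk of the proof and should be stated and proved explicitly rather than assumed.
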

\begin{proof}
    The detailed proofs can be found in \appref{ordering:proofs}.
    (\ref{thm:wqo}) uses
the well-foundedness theorem of
Dershowitz~\cite{dershowitz1982orderings}
and the fact that $\rpogeq$ is a
quasi-simplification ordering.
(\ref{thm:thin}) relies on the fact that
a finite number of function symbols can only generate
a finite number of equal terms.
(\ref{thm:twf}) is a corollary of
(\ref{thm:wqo}) and (\ref{thm:thin}) combined.
\end{proof}

\subsection{Ordering Constraint Algebras}
\label{sec:oca}

Ordering constraint algebras play a crucial role in the \algoname algorithm
(\secref{algo2}), by enabling the algorithm to simultaneously consider an entire
family of term orderings during the exploration of rewrite paths. In this
section, we provide a formal definition for ordering constraint algebras and
describe the construction of an algebra for the RPQO.

\begin{definition}[Ordering Constraint Algebra]\label{defn:oca}
An \textit{\Ocalgebra (OCA) $\ordconstraintstype_{(\tty,\grange)}$ over a set of
  terms $\tty$ and \zgedit{set of candidate term orderings}{term ordering family} $\grange$}, is a five-tuple $\ordconstraintstype_{(\tty,\grange)} \doteq \langle \octype, \gamma, \octop, \ocrefinec, \satc \rangle$, where:
\begin{enumerate}
\item $\octype$, the \textit{constraint language}, can be any non-empty set. Elements of $\octype$ are called \textit{constraints}, and are ranged over by $c$.
\item $\gamma$, the \textit{concretization function} of $\ordconstraintstype_{(\tty,\grange)}$, is a function from elements of $\octype$ to subsets of $\grange$.
\item $\octop$, the \textit{top constraint}, is a distinguished constant from $\octype$, satisfying $\gamma(\octop) = \grange$.
\item $\ocrefinec$, the \textit{refinement function}, is a function $\octype \rightarrow \tty \rightarrow \tty \rightarrow C$, satisfying (for all $c,\tt_l,\tt_r$) $\gamma(\ocrefine{c}{\tt_l}{\tt_r}) = \{ \ord \;|\; \ord~\in \gamma(c) \;\land\; {\tt_l}  \ord  {\tt_r}\}.$
\item $\satc$, the \textit{satisfiability function}, is a function $\octype \rightarrow \textit{Bool}$, satisfying (for all $c$) $\satc(c) = \textit{true} \;\Leftrightarrow\; \gamma(c)\neq\emptyset$.
\end{enumerate}
\end{definition}
The functions $\octop$, $\ocrefinec$, and $\satc$ are all called from our \acronym algorithm (\figref{algo-rewrite}), and must be implemented as (terminating) functions when implementing \acronym.
Specifically, \algoname instantiates the initial path with constraints
$c = \octop$. When a path can be extended via a rewrite application
$\goestor{\rrel}{t_{l}}{t_{r}}$, \algoname refines the prior path constraints $c$ to
$c' \doteq \ocrefine{c}{\tt_l}{\tt_r}$. Then, the new term is added to the path
only if the new constraints are satisfiable ($\satc(c')$ holds); that is, if $c'$ admits an
ordering that orients the generated path.
The function $\gamma$ need \emph{not} be implemented in practice; it is purely a mathematical concept used to give semantics to the algebra.

Given terms $\tty$ and a finite \zgedit{set of candidate orderings}{term
ordering family} $\grange$, a trivial OCA is obtained by letting $\octype = \mathcal{P}(\grange)$, and making $\gamma$ the identity function; straightforward corresponding elements $\octop$, $\ocrefinec$, and $\satc$ can be directly read off from the constraints in the definition above.

However, for efficiency reasons (or in order to support potentially infinite
sets of \zgout{candidate} orderings, which our theory allows), tracking these sets
symbolically via some suitably chosen constraint language can be preferable. For
example, consider lexicographic orderings on pairs of constants, represented by
a set $\tty$ of terms of the form $p(q_1,q_2)$ for a fixed function symbol $p$
and $q_1,q_2$ chosen from some finite set of constant symbols $Q$.
We choose the \zgedit{candidate orderings}{term ordering family} $\grange = \{ \ord_{\textit{lex}(\ord)}\;\mid\;\ord \textit{is a total order on }Q\}$ writing $\ord_{\textit{lex}(\ord)}$ to mean the corresponding lexicographic ordering on $p(q_1,q_2)$ terms generated from an ordering $\ord$ on $Q$.

A possible OCA over these $\tty$ and $\grange$ can be defined by choosing the constraint language $\octype$ to be \emph{formulas}: conjunctions and disjunctions of atomic constraints of the forms $q_1 > q_2$ and $q_1 = q_2$ prescribing conditions on the underlying orderings on $Q$. The concretization $\gamma$ is given by $\gamma(c) = \{ \ord_{\textit{lex}(\ord)}\;\mid\;\ord \textit{ satisfies }c\}$, \ie{} a constraint maps to all lexicographic orders generated from orderings of $Q$ that satisfy the constraints described by $c$, defined in the natural way. We define $\octop$ to be \eg{} $q = q$ for some $q\in Q$. A satisfiability function $\satc$ can be implemented by checking the satisfiability of $c$ as a formula\asfootnote{Do we need to say more about how?}. Finally, by inverting the standard definition of lexicographic ordering, we define:
 \[ \ocrefine{c}{p(q_{1},q_{2})}{p(r_{1},r_{2})} = c \land (q_1 > r_1 \lor (q_1 = r_1 \land q_2 > r_2) )\]

Using this example algebra, suppose that \acronym explores two potential rewrite steps $p(a_1,a_2) \rightarrow p(b_1,a_2) \rightarrow p(a_1,a_1)$. Starting from the initial constraint $c_0 = \octop$,
the constraint for the first step $c_1 \doteq \ocrefine{c_0}{p(a_1,a_2)}{p(b_1,a_2)}
=  a_1 > b_1 \lor (a_1 = b_1 \land a_2 > a_2)$
is satisfiable, \eg for any total order for which $a_1 > b_1$. However, considering the subsequent step, the refined constraint
$c_2 \doteq \ocrefine{c_1}{p(b_{1},a_2)}{p(a_{1},a_{1})}$, computed as
$c_2 = c_1\land (a_2 > a_2 \lor (a_2 = a_2 \land b_1 > a_1))$
 is no longer satisfiable. Note that this allows us to conclude that there is no
lexicographic ordering allowing this sequence of two steps, even without explicitly
constructing any orderings.

We now describe an OCA for \rpqos (\secref{orderings:def}), based on a compact
representation of sets of these orderings.


\subsubsection{An \Ocalgebra for $\rpogeq$}\label{sec:orderings:oc}


The OCA for RPQOs enables their usage in \algoname's
proof search. One simple but computationally intractable approach would be to
enumerate the entire \zgedit{family}{set} of \rpqos that orient a path; continuing the path so
long as the set is not empty. This has two drawbacks. First, the number of
\rpqos grows at an extremely fast rate with respect to the number of function
symbols; for example there are $6,942$ \rpqos describing five function symbols,
and $209,527$ over six \cite{numwqos}. Second, most of these orderings differ in ways that are
not relevant to the comparisons made by \algoname.

Instead, we define a language to succinctly describe the set of candidate
\rpqos, by calculating the minimal constraints that would ensure orientation of
the path of terms; \algoname continues so long as there is some \rpqo that
satisfies the constraints. Crucially the satisfiability check can be performed
effectively using an SMT solver, as described in \secref{impl-oca},
without actually instantiating any orderings.

Before formally describing the language, we begin with some examples, showing
how the ordering constraints could be constructed to guide the termination check
of \algoname.

\begin{example}[Satisfiability of Ordering Constraints]
Consider the following rewrite path given by the rules
$r_{1} \doteq f(g(X), Y) \rightarrow g(f(X, X))$ and $ r_{2} \doteq f(X, X) \rightarrow f(k, X)$:
$$f(g(h), k) \rightarrow_{r_{1}} g(f(h, h)) \rightarrow_{r_{2}} g(f(k, h))$$

To perform the first rewrite \algoname has to ensure that there exists an \rpqo
$\rpogeq$ such that $f(g(h), k) \rpogeq g(f(h, h))$.
Following \zgedit{the}{from} Definition~\ref{defn:rpo}, we obtain three possibilities:
\begin{enumerate}
    \item\label{cons:casei} $f >_\ops g$  and $ \{ f(g(h), k) \} \rpogtm {\{ f(h, h) \}}$, or
    \item\label{cons:caseii} $g >_\ops f$  and $ \{g(h), k\} \rpogeqm {\{  g(f(h, h)) \}}$, or
    \item\label{cons:caseiii} $f \approx g$ and $ \{g(h), k\} \rpogeqm \{f(h, h) \}$.
\end{enumerate}
We can further simplify these using the definition of the multiset quasi-ordering (\defref{set:ordering}).
Concretely, the multiset comparison of (\ref{cons:casei}) always holds, while
the multiset comparisons of (\ref{cons:caseii}) and (\ref{cons:caseiii}) reduce to $k >_\ops f \land k >_\ops g \land k >_\ops h$.
Thus, we can define the exact constraints $c_{0}$ on $\rpogeq$ to satisfy
$f(g(h), k) \rpogeq g(f(h, h))$ as
$$
c_{0} \doteq f >_\ops g  \lor (k >_\ops f \land k >_\ops g \land k >_\ops h)
$$
Since there exist many quasi-orderings satisfying this formula
(trivially, the one containing the single relation $f >_{\ops} g$), the first
rewrite is satisfiable.

Similarly, for the second rewrite, the comparison
$g(f(z, z)) \rpogeq g(f(k, z)) $ entails the constraints
$c_{1} \doteq z \ord_\ops k$. To perform this second rewrite the conjunction of
$c_{0}$ and $c_{1}$ must be satisfiable. Since the second disjunct of $c_0$
contradicts $c_1$, the resulting constraints $f >_\ops g \land z \ord_\ops k$
is satisfiable by an \rpqo, thus the path is satisfiable.
\end{example}

\begin{example}[Unsatisfiable Ordering Constraint]
As a second example, consider the rewrite rules $r_{1} \doteq f(x) \rightarrow g(s(x))$
and $r_{2} \doteq g(s(x)) \rightarrow f(h(x))$. These rewrite rules can clearly cause
divergence, as applying rule $r_{1}$ followed by $r_{2}$ will enable a
subsequent application of $r_{1}$ to a larger term. Now let's examine how our
\oca can show the unsatisfiability of the diverging path:
$$ f(z) \rightarrow_{r_{1}} g(s(z)) \not \rightarrow_{r_{2}} f(h(z)) $$
$f(z) \rpogeq g(s(z))$ requires $c_0  \doteq f > g \land f > s$ which is satisfiable, but
$g(s(z)) \rpogeq f(h(z))$ requires
 $c_1  \doteq (g \geqslant f \land g \geqslant h) \lor
  (g \geqslant f \land s \geqslant h) \lor
  (s > f \land s > h)$, which, although satisfiable on it's own, conflicts
  with $c_0$. Since no $\rpqo$ can satisfy both $c_0$ and $c_1$,
  the rewrite path is not satisfiable.
\end{example}

Having primed intuition through the examples, we now present a way to compute
such constraints. First, it is clear that we can define an \rpqo based on the
precedence over symbols $\ops$. Therefore, we define our language of constraints
to include the standard logical operators as well as atoms representing the
relations between elements of $\ops$, as:
$$
C_\ops \doteq f >_{\ops} g~|~f \approx g~|~C_{\ops} \land C_{\ops}~|~C_{\ops}
\lor C_{\ops}~|~\top~|~\bot
$$

Next, we lift our definition of $\rpqo$ and the multiset
quasi-ordering to derive functions:
$\rpqoc : \trms \rightarrow \trms \rightarrow C_{\ops}$,  and
$\mulc : (\trms \rightarrow \trms \rightarrow C_{\ops}) \rightarrow M(\trms) \rightarrow M(\trms) \rightarrow C_{\ops}$.
$\rpqoc$ is derived by a straightforward translation of \defref{rpo}:
\[
\begin{array}{rrcl}
\rpqoc{f(\tt_1,\ldots, \tt_m)}{g(\uu_1, \ldots, \uu_n)} = &
        f >_\ops g
        &\land&
        \mulcs{\rpqoc}{\{ f(\tt_1,\ldots, \tt_m) \}}{\{\uu_1, \ldots, \uu_n\}}\;\lor \\
&        g >_\ops f
        &\land& \mulc{\rpqoc}{\set{\tt_1, \ldots, \tt_m}}{\set{g(\uu_1, \ldots, \uu_n)}}
        \;\lor \\
&        f \approx g
        &\land&
        \mulc{\rpqoc}{\{\tt_1, \ldots, \tt_m\}}{\{\uu_1,\ldots,\uu_n \}}
\end{array}
\]
\noindent
where $\mulcs$ is the strict multiset comparison:
$\mulcs{f}{T}{U} = \mulc{f}{T}{U}~\land~\neg \mulc{f}{U}{T}$.
$\neg : C_{\ops} \rightarrow C_{\ops}$ inverts the constraints, with
$\neg(f >_{\ops} g) = f \approx g \lor g >_{\ops} f$ and
$\neg(f \approx g) = f >_{\ops} g \lor g >_{\ops} f$; the other cases are
defined in the typical way.

The definition for $\mulc$ is more complex. Recall that
$T \ord_{M(X)} U$ when $U$ can be obtained from $T$ by replacing zero or more
elements in $T$ with the same number of equal (with respect to $\ord_{X}$)
elements, and by replacing zero or more elements in $T$ with a finite number of
smaller ones. Therefore each justification for
$\srange{t}{m} \ord_{M(X)} \srange{u}{n}$ can be represented by a bipartite
graph with nodes labeled \range{t}{m} and \range{u}{n}, such that:
\begin{enumerate}
  \item Each node $u_{i}$ has exactly one incoming edge from some node $t_{j}$\as{.}
  \item If a node $t_{i}$ has exactly one outgoing edge, it is labeled either
\texttt{GT} or \texttt{EQ}.
  \item If a node $t_{i}$ has more than one outgoing edge, it is labeled
\texttt{GT}.
\end{enumerate}

$\mulc{f}{\srange{t}{m}}{\srange{u}{n}}$ generates all such graphs: for each
graph converts each labeled edge $(t, u, \texttt{EQ})$ to the formula
$f(t, u) \land f(u, t)$, each edge $(t, u, \texttt{GT})$ to the formula
$f(t, u) \land \neg f(u, t)$, and finally joins the formulas for the graph via a
conjunction. The resulting constraint is defined to be the disjunction of the
formulas generated from all such graphs.

Having defined the lifting of \as{recursive path quasi-orderings} to the language
of constraints, we define our \oca
$\rpqooc$ as the tuple
${\langle}C_{\ops},\top,\ocrefine,\gamma,\sat\rangle$ where:
\begin{itemize}
    \item $\ocrefine{c}{t}{u} = c \land \rpqoc{t}{u}$,
    \item $\grange$ is the set of all \rpqos,
    \item $\gamma(c)$ is the set of \rpqos derived from the underlying quasi-orders
        $\ord_{\ops}$ that satisfy $c$, and
    \item $\sat(c) = $ \textit{true} if \zg{and only if} there exists a quasi-order $\ord_{\ops}$
        satisfying $c$\zgout{, \textit{false} otherwise}.
\end{itemize}
\zgedit{We can show that $\rpqooc$ is an OCA, \ie satisfies the requirements of
-Definition~\ref{defn:oca}}{That $\rpqooc$ is an OCA, \ie satisfies the requirements of
\defref{oca}, follows by construction.
Namely, the function $\rpqoc{t, u}$ produces constraints $c$ such that,
for any RPQO $\rpogeq$, $t \rpogeq u$ if and only if its underlying ordering
$\ord_{\ops}$ satisfies $c$.}
In \secref{impl-oca} we further discuss how the satisfiability check is
mechanized and implemented using an SMT solver.


Having shown that using \rpqos as a term ordering is useful for theorem
proving, satisfies the necessary properties for \algoname, and admits an
efficient \oca, we continue our formal work by stating
and proving the metaproperties of \algoname.

\section{\algoname Metaproperties: Soundness, Completeness, and Termination}
\label{sec:meta}\label{sec:metaprop}
We now present the metaproperties of the
\algoname algorithm defined in~Figure~\ref{fig:algo-rewrite}.
We show
correctness (\thmref{correctness}),
completeness (\thmref{relative:completeness})
relative to the \as{input term ordering family (recall that its instances must all be thin well-quasi-orderings)}\asout{relations checked},
and termination (\thmref{terminating})
\as{which requires that calls to the OCA functions used in the algorithm, as well as the external oracle function, themselves terminate. The property that the orderings are thin well-founded guarantees in particular that any \emph{duplicate-free} path (such as those that \acronym{} generates) that can be oriented by any of these orderings is guaranteed to be finite.}
\asout{checked ordering relations be
decidable and well-founded on duplicate-free paths (no term occurs more than once, \eg{} those that \acronym{} generates).}
\as{We provide here the key invariants and statements of the formal results, and relegate the detailed proofs to} \appref{metaprop:proofs}.

\subsection{Formal Definitions}
\label{sec:background}
\label{defn:sn}
Our formalism of rewriting is standard; based on the terminology
of~\cite{KlopTRS}. Our language consists of the following:
\newcommand{\XX}{\ensuremath{X}\xspace}
\newcommand{\YY}{\ensuremath{Y}\xspace}
\newcommand{\xx}{\ensuremath{x}\xspace}
\newcommand{\yy}{\ensuremath{y}\xspace}
\begin{enumerate}[leftmargin=*]
  \item An infinite set of meta-variables (the variables for rewrite rules) \vars with elements $\XX$, $\YY$, \ldots.
  \item A finite set of \as{function symbols} \ops with elements $f$, $g$, \ldots
    Each operator is associated with a fixed numeric arity and types for its arguments and result (elided here, for simplicity).
  \item A set of terms \trms with elements $\tt,\uu,\ldots$ inductively defined as follows:
      (a) $\XX \in \vars \Rightarrow \XX \in \trms$ and
      (b) $f \in \ops$, $f$ has arity $n$, $\tt_1,\ldots,\tt_n \in \trms \Rightarrow f(\tt_1,\ldots,\tt_n) \in \trms$.
\end{enumerate}
We use \FV{\tt} to refer to the set of meta-variables in \tt. A term $\tt$ is \emph{ground} if $\FV{\tt}=\emptyset$.

  A \textit{substitution} $\sub \subseteq \vars \times \trms$ is a mapping from
  meta-variables to terms. We write \appsub{\sub}{\tt} to denote the
  simultaneous application of the substitution: namely, \appsub{\sub}{\tt} replaces
  each occurrence of each meta-variable $\XX$ in $\tt$ with $\sub(\XX)$.
%
A substitution $\sub$ \textit{grounds} $\tt$ if, for all $X\in\FV{\tt}$, $\sub(\XX)$ is a ground term.
  A substitution $\sub$ \textit{unifies} two terms \tt and \uu if $\appsub{\sub}{\tt} = \appsub{\sub}{\uu}$.

  A \textit{context} \ctx is a term-like object that contains exactly one
  term placeholder $\bullet$. If $t$ is a term, then $\inctx{t}$ is the term generated
  by replacing the $\bullet$ in $\ctx$ with $t$.

%
A \textit{rewrite rule} \rrule is a pair of terms $\rrule \doteq (\tt,\uu)$ such that
   $\FV{u} \subseteq \FV{t}$ and
   $t \notin \vars$.
%
Each rewrite rule $\rrule \doteq (\tt,\uu)$ defines a binary relation
$\rightarrow_{\rrule}$ which is the smallest relation such that, for all
contexts $\ctx$ and substitutions $\sub$ grounding $\tt$ (and therefore $\uu$),
$\goestor{\rrule}{\inctx{\appsub{\sub}{\tt}}}{\inctx{\appsub{\sub}{\uu}}}$.
%
%
%
%

We use \rrel to range over sets of rewrite rules.
We write \goestor{\rrel}{v}{w} iff \goestor{\rrule}{v}{w} for some $\rrule\in\rrel$.

For oracle functions (\as{from} terms to sets of terms) \evalSymb,
we write \goestoe{\tt}{\tt'} \textit{iff} $\tt'\in\evalSymb(\tt)$.
We write \goestor{\rrel+\evalSymb}{\tt}{\tt'} if
\goestor{\rrel}{\tt}{\tt'} or \goestor{\evalSymb}{\tt}{\tt'}.
For a relation $\rightarrow$ we write $\rightarrow^{*}$ for its reflexive, transitive closure.
%
  A \textit{path} is a list of terms. A binary relation \ord \textit{orients} a path $\tt_1,\ldots,\tt_n$ if $\forall
  i, 1 \leq i < n, \tt_i~\ord~\tt_{i+1}$.

\subsection{Soundness}
\label{subsec:meta:correct}

Soundness of \algoname means that any term of the output
($u \in \algod{\tt_0}$)
can be derived from the original
input term by \as{some} combination of term rewriting steps from $\rrel$ and steps via the oracle function $\evalSymb$ (\as{in other words,} $\evalstor{\rrel + \evalSymb}{\tt_0}{u}$).

Our proof relies on the following simple invariant of \algoname: any path stored in the stack during the execution of the algorithm
  can be derived by the rewrite rules in \rrel or the external oracle \evalSymb.
  \begin{restatable}[Path
    Invariant]{invariant}{pathinvariant}\label{invariant:rewrite_P}
    For any execution of \algod{\tt_0}, at the start of any iteration of the main loop, for each $(ts, c) \in p$, the list $ts$
      is a path of $\rrel + \evalSymb$ starting from $t_0$.
  \end{restatable}
  \begin{proof}
    (Sketch:)
    By straightforward induction on iterations of the main loop.
  \end{proof}

\begin{theorem}[Soundness of \algoname]\label{thm:correctness}
  For all \rrel, $\uu$, and $\tt_0$, if $u \in \algod{\tt_0}$, then $\evalstor{\rrel + \evalSymb}{\tt_0}{u}$.
\end{theorem}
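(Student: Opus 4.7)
The plan is to obtain soundness as a near-immediate corollary of the Path Invariant already proved above. The first step is to locate where terms can enter the output set: the set $o$ is only extended by the statement $o := o \cup \{t\}$ immediately after popping a pair $(ts, c)$ from the worklist $p$ and taking $t$ to be the last element of $ts$. Consequently, if $u \in \algod{\tt_0}$, there must be some iteration of the main loop in which a pair $(ts, c)$ is popped from $p$ with $u$ as its last element.

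The second step is to apply the Path Invariant at the start of that iteration. The invariant tells us that $ts$ is a path of $\rrel + \evalSymb$ starting from $\tt_0$; writing $ts = [\tt_0, \tt_1, \ldots, \tt_n]$ with $\tt_n = u$, we have $\goestor{\rrel + \evalSymb}{\tt_i}{\tt_{i+1}}$ for each $0 \leq i < n$. Composing these single-step relations and appealing to the definition of reflexive--transitive closure yields $\evalstor{\rrel + \evalSymb}{\tt_0}{u}$, as required. The edge case $n = 0$, where $u = \tt_0$ itself, is handled by the reflexive component of $\rightarrow^{*}$ and corresponds to the very first iteration, which pops the initial pair $([\tt_0], \octop)$ that seeds the worklist.

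The only mild subtlety is ensuring that the Path Invariant is in force at the precise moment we inspect the popped pair. Since the invariant is phrased ``at the start of any iteration of the main loop'' and the pop is the first action of the iteration body, the popped $(ts, c)$ automatically enjoys the invariant's property before any mutation of $p$ takes place. I therefore do not anticipate any substantive obstacle beyond this purely mechanical unpacking; all of the genuine reasoning content is carried by the Path Invariant itself, whose proof is by a straightforward induction on loop iterations (each push preserves the path property because a push is only made after verifying a single-step rewrite relation, either from $\rrel$ or from $\evalSymb$, between the current last term and the extension).
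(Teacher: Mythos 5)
Your proposal is correct and follows essentially the same route as the paper's proof: both reduce soundness to the Path Invariant by observing that every term added to $o$ is the last element of a popped path, which the invariant guarantees is a path of $\rrel + \evalSymb$ starting from $\tt_0$. Your version merely spells out the composition into $\rightarrow^{*}$ and the $n=0$ edge case more explicitly than the paper does.
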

\begin{proof}
  In each iteration of $\algoname$, the term $t$ added to the output $o$ is the
  last element of the list $ts$ for the tuple $(ts, c) \in p$. By Invariant~\ref{invariant:rewrite_P}, $t$
  must be on the path of $\rrel + \evalSymb$ starting from $\tt_0$.
\end{proof}

\subsection{Completeness}
\label{subsec:meta:completeness}
A \naive{} completeness statement for \algoname might be that, for any terms $\tt_0$ and $u$, if $\evalstor{\rrel + \evalSymb}{\tt_0}{u}$ then $u$ is in our output ($u \in \algod{\tt_0}$). This result doesn't hold in general by design, since \algoname explores only paths permitted by at least one \as{candidate instance of its input term ordering family.}\asout{of its input candidate orderings.} We prove this \emph{relative} completeness result in two stages.
First (\thmref{rewrite_complete_eval}),
we show that completeness always holds if all steps
only involve the external oracle.
Then (\thmref{relative:completeness}),
we prove relative completeness of \algoname with respect to the \as{provided term ordering family.}\asout{relation.}
We begin by stating another simple invariant of our algorithm: that any term appearing
  in a path in the stack $p$, will belong to the final output:
  \begin{invariant}\label{invariant:output}
    For any execution of \algod{\tt_0}, at the start of any iteration of the main loop,
    if $\tt \in ts$ and $(ts,c) \in p$, then, when the algorithm terminates, we will have $\tt \in \algod{\tt_{0}}$.
  \end{invariant}
  \begin{proof}
  (Sketch:) We can prove inductively that terms contained in any list in $p$ either remain in $p$ or end up in $o$; since $p$ is empty on termination, the result follows.
  \end{proof}

\begin{restatable}[Completeness w.r.t.~\evalSymb]{theorem}{relcompleteness}
  \label{thm:rewrite_complete_eval}
  For all \rrel, $\uu$, and $\tt_0$,
  if $\evalstor{\evalSymb}{\tt_0}{u}$, then $u \in \algod{\tt_{0}}$.
\end{restatable}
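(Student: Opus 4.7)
The plan is to argue by induction on the length of an appropriately-chosen witnessing path. Completeness here is strictly easier than full relative completeness because oracle steps are \emph{unconditionally} admitted by the inner \textbf{if} check of the algorithm (the first disjunct $\tt'\in\evalSymb(\tt)$ suffices, bypassing any call to $\satc$). Hence the only obstruction to extending an oracle path inside the algorithm is the duplicate-check ($\tt'\notin\ts$), and this is easy to sidestep by a preliminary reduction.

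First I would observe that if $\evalstor{\evalSymb}{\tt_0}{u}$, then there exists a \emph{duplicate-free} oracle path $\tt_0 \to_\evalSymb \tt_1 \to_\evalSymb \cdots \to_\evalSymb \tt_n = u$ from $\tt_0$ to $u$: any repetition $\tt_i = \tt_j$ with $i<j$ can be collapsed by removing the intermediate segment, and iterating yields a duplicate-free witness (the set of positions strictly decreases each time). This lets me assume without loss of generality that the path I reason about has no repeated terms.

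Next I would prove, by induction on $n$, the following strengthened claim: \emph{for every prefix $\tt_0,\tt_1,\ldots,\tt_k$ ($0 \le k \le n$) of a duplicate-free oracle path from $\tt_0$, at some point during the execution of $\algoname$ a tuple of the form $([\tt_0,\ldots,\tt_k],\ordconstraints_k)$ is pushed onto $p$}. The base case $k=0$ holds by the initialization $p := [([\tt_0],\octop)]$. For the inductive step, given that $([\tt_0,\ldots,\tt_k],\ordconstraints_k)$ has been pushed, termination of the algorithm (Theorem~\ref{thm:terminating}) guarantees this tuple is eventually popped. When it is popped, the \textbf{foreach} loop iterates over all successors $\tt'\notin[\tt_0,\ldots,\tt_k]$ reachable by either a rewrite step or the oracle; in particular $\tt' = \tt_{k+1}$ qualifies, because $\tt_{k+1}\in\evalSymb(\tt_k)$ and $\tt_{k+1}\notin\{\tt_0,\ldots,\tt_k\}$ by duplicate-freeness. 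The inner \textbf{if} check is satisfied via its first disjunct ($\tt_{k+1}\in\evalSymb(\tt_k)$), so $([\tt_0,\ldots,\tt_k,\tt_{k+1}],\ocrefine{\ordconstraints_k}{\tt_k}{\tt_{k+1}})$ is pushed onto $p$, discharging the induction.

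Finally, instantiating the strengthened claim at $k=n$ gives that a tuple $([\tt_0,\ldots,\tt_n],\ordconstraints_n)$ is pushed onto $p$ at some point in the execution. Since $u = \tt_n$ appears in this list, Invariant~\ref{invariant:output} immediately yields $u \in \algod{\tt_0}$, as required. The main obstacle is really just the bookkeeping around the duplicate-free reduction; the rest follows almost mechanically from the shape of the \textbf{if} guard, which was designed precisely so that oracle steps are never blocked.
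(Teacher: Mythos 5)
Your proof is correct and follows essentially the same route as the paper's: induction on the length of the oracle path, with the inner \textbf{if} passing unconditionally via its first disjunct, concluding by Invariant~\ref{invariant:output}. The only cosmetic differences are that you collapse duplicates up front and carry the slightly stronger inductive claim that each prefix is pushed onto $p$, whereas the paper keeps the weaker claim $\tt_{k}\in\algod{\tt_0}$ and handles a possible duplicate by a case split inside the inductive step (its main-text sketch instead observes that strong normalization of $\evalSymb$ already rules out duplicates).
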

\begin{proof}
  (Sketch:) Since, $\evalSymb$ is strongly normalizing, the path of terms
  $\tt_0 \rightarrow_{\evalSymb} \ldots \rightarrow_{\evalSymb} u$ will
  not contain any duplicates; \algoname will therefore insert each term in the
  path into $ts$. Since $u$ is in that path, Invariant \ref{invariant:output}
  ensures $u \in \algod{\tt_{0}}$.
\end{proof}

\begin{restatable}[Relative Completeness]{theorem}{relcompletenesstwo}
  \label{thm:relative:completeness}
  For all \rrel, $\uu$, and $\tt_0$,
  if \evalstor{\rrel + \evalSymb}{\tt_0}{\uu}
  and there exists an ordering $\ord~\in\gamma(\octop)$ that orients the path justifying \evalstor{\rrel + \evalSymb}{\tt_0}{\uu}, then
$\uu \in \algod{t_{0}}$.
\end{restatable}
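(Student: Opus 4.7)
The plan is to fix a witnessing path $\pi = t_0 \to t_1 \to \cdots \to t_n = u$ of $\rrel + \evalSymb$ together with an orienting $\ord \in \gamma(\octop)$, and then to prove by induction on $i$ that during the execution of \algod{t_0} the worklist $p$ contains, at some iteration boundary, a tuple $([t_0,\ldots,t_i], c_i)$ with $\ord \in \gamma(c_i)$. Instantiating this at $i = n$ and invoking Invariant~\ref{invariant:output} will then yield $u = t_n \in \algod{t_0}$.

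Before the induction, I reduce to the case that $\pi$ is duplicate-free: if $t_i = t_j$ for some $i < j$, the shorter sequence $t_0,\ldots,t_i,t_{j+1},\ldots,t_n$ is still a path of $\rrel + \evalSymb$ (the step out of $t_j$ is syntactically a step out of $t_i$) and is still oriented by $\ord$ by transitivity. Iterating such contractions produces a duplicate-free oriented path from $t_0$ to $u$, and it suffices to prove the claim for this path.

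The base case $i = 0$ is immediate, since the initial worklist is $[([t_0], \octop)]$ and $\ord \in \gamma(\octop)$ by hypothesis. For the inductive step, suppose $([t_0,\ldots,t_i], c_i)$ with $\ord \in \gamma(c_i)$ is eventually pushed; since \algoname{} terminates (\thmref{terminating}), this tuple is eventually popped. At that point the outer guard admits $t_{i+1}$: the step $t_i \to t_{i+1}$ lies in $\rrel + \evalSymb$, and duplicate-freeness ensures $t_{i+1} \notin [t_0,\ldots,t_i]$. For the inner guard, oracle steps pass unconditionally; for a rewrite step, the only remaining obligation is $\sat(\ocrefine{c_i}{t_i}{t_{i+1}})$. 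Since $\ord$ orients $\pi$, we have $t_i \ord t_{i+1}$, so by the OCA specification (\defref{oca}) $\ord \in \gamma(\ocrefine{c_i}{t_i}{t_{i+1}})$; this concretisation is therefore non-empty and $\satc$ returns \emph{true}. The extended tuple is pushed with $\ord$ still in its concretisation, closing the induction.

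The main obstacle is not a single deep step but the careful interlocking of three components: preservation of the semantic witness $\ord \in \gamma(c_i)$ under refinement (which relies on the exact specification of $\ocrefinec$, not merely monotonic tightening of constraints), the duplicate-free reduction (which relies on transitivity of $\ord$ and the fact that any cycle collapses to a single $\approx$-equivalence class), and the appeal to termination to ensure every pushed tuple is eventually popped. Each piece is routine in isolation; the care is in sequencing them so that the appeal to \thmref{terminating} does not introduce a circularity with the completeness argument.
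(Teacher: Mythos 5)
Your proof is correct and follows essentially the same route as the paper's: reduce to a duplicate-free oriented path, then induct along the path maintaining the witness $\ord \in \gamma(c_i)$ via the OCA specification of $\ocrefinec$, and finish with Invariant~\ref{invariant:output}. The only cosmetic difference is that you make explicit the appeal to \thmref{terminating} to guarantee each pushed tuple is eventually popped, a step the paper's Invariant~\ref{invariant:three} leaves implicit; since termination is established independently of completeness, there is indeed no circularity.
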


\begin{proof}
  (Sketch:)
  The proof structure is similar to Theorem~\ref{thm:rewrite_complete_eval}; in
  this case the terms in the path are guaranteed to be in $ts$ because some
  ordering in $\gamma(\octop)$ can orient the path.
\end{proof}

\subsection{Termination}\label{subsec:termination}
Termination of \algoname requires appropriate conditions on \zgout{the candidate
orderings,} the external oracle $\evalSymb$ and the ordering constraint algebra
$\ordconstraintstype$ employed. We formally define these requirements and then
prove termination of \algoname.

\begin{definition}[Well-Founded \oca{}s]\label{defn:oc_wf}
  For ordering constraint algebras $\ordconstraintstype =
  \langle\octype,\octop,\ocrefine, \satc, \gamma\rangle$,
  for $c, c' \in C$, we say $c'$ \textit{strictly refines} $c$ (denoted
  $c' \sqsubset_{\ordconstraintstype} c$) if $c' = \ocrefine{c}{t}{u}$ for some terms $t$ and $u$,
  and $\gamma(c') \subset \gamma(c)$. Then, we say $\ordconstraintstype$ is
  \textit{well-founded} if $\sqsubset_{\ordconstraintstype}$ is.
\end{definition}
Down every path explored by \acronym, the tracked constraint is only ever refined; well-foundedness of $\ordconstraintstype$ guarantees that finitely many such refinements can be strict.

We note that if the OCA describes a finite set of orderings, then it is
trivially well-founded: $\subset$ is well-founded on finite sets. For example,
the ordering constraint algebra for RPQOs (\secref{orderings:oc}) is
well-founded when the set of functions symbols $\ops$ is finite, as there are a
only a finite number of possible RPQOs over a finite set of function symbols.

\begin{definition}
  A relation \goestor{}{\tt_l}{\tt_r} is \textit{normalizing} if
  it does not admit an infinite path and \textit{\bounded}
  if for each $\tt_l$ it only admits finite $\tt_r$.
  \zgout{A relation \ord is \textit{thin well-founded} if it cannot orient a duplicate-free infinite path.}
\end{definition}

\begin{restatable}[Termination of \algoname]{theorem}{thmterminating}
  \label{thm:terminating}
  For any finite set of rewriting rules \rrel, if:
  \begin{enumerate}
    \item\label{tass:eval} $\goestoesymb$ is normalizing and \bounded,
    \item\label{tass:funcs} The $\ocrefine$ and $\satc$ functions from $\ordconstraintstype$ are decidable (always-terminating, in an implementation),
    \item\label{tass:wf} $\ordconstraintstype$ is well-founded,
  \end{enumerate}
  then, for all terms $\tt_0$, $\algod{\tt_0}$ terminates.
\end{restatable}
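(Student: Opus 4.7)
The plan is to reduce termination of \algoname to finiteness of the tree of pairs $(ts, c)$ ever pushed onto the worklist $p$. Each iteration of the main loop processes exactly one such pair using only the decidable operations $\ocrefinec$ and $\satc$ (assumption (\ref{tass:funcs})), together with membership tests and iteration over a finite set of successors, so finiteness of the tree yields termination. I would apply K\"onig's lemma: show both that the tree is finitely branching and that it has no infinite branch.

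Finite branching is routine. At any node whose path ends in $t$, the children correspond to terms $t'$ with $\goestor{\rrel}{t}{t'}$ or $t' \in \evalSymb(t)$. The first set is finite because $\rrel$ is finite, $t$ has finitely many subterms, and each rule determines at most one matching substitution per subterm; the second is finite because $\goestoesymb$ is \bounded by assumption (\ref{tass:eval}).

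For the absence of infinite branches, I would argue by contradiction: suppose some infinite branch $(ts_0, c_0), (ts_1, c_1), \ldots$ exists, where each step extends the path by one term via either a rewrite in $\rrel$ or an oracle step. Every such path is duplicate-free thanks to the explicit $t' \notin ts$ guard in \algoname. I would then split on how many of these steps are rewrite steps. If only finitely many are, then from some index onward every step is an oracle step, producing an infinite $\goestoesymb$-path and contradicting normalization of $\goestoesymb$ in (\ref{tass:eval}). Otherwise there are infinitely many rewrite steps, and each such step $i \to i{+}1$ requires $\satc(c_{i+1})$ to hold, so $\gamma(c_{i+1}) \neq \emptyset$. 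Because $\gamma(\ocrefine{c}{t}{t'}) \subseteq \gamma(c)$ by the definition of an OCA, the sequence $(c_i)$ is non-increasing under $\gamma$, and by well-foundedness of $\sqsubset_{\ordconstraintstype}$ (assumption (\ref{tass:wf})) only finitely many of these inclusions can be strict. Picking some $N$ past which $\gamma(c_i)$ has stabilized at a non-empty set and fixing any $\ord \in \gamma(c_N)$, the concretization law for $\ocrefine$ forces $\ord$ to orient every consecutive pair in the infinite duplicate-free tail $t_N, t_{N+1}, \ldots$. But $\ord$ is a thin well-quasi-ordering, hence thin well-founded: its strict part is well-founded (any strict descending chain would contradict the WQO property) and each $\approx$-class is finite by thinness, so it cannot orient a duplicate-free infinite chain, a contradiction.

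The main obstacle is the interaction between rewrite steps and oracle steps: oracle steps also invoke $\ocrefine$ but bypass the $\satc$ guard, so well-foundedness of $\sqsubset_{\ordconstraintstype}$ by itself does not suffice to bound them; the case split on whether infinitely many rewrite steps occur, combined with the normalization property of $\goestoesymb$, is what ties the two termination arguments together into a single finiteness proof for the exploration tree.
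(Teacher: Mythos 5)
Your proposal is correct and follows essentially the same route as the paper's proof: both view the worklist as implicitly building a finitely-branching tree (finite $\rrel$ plus the \bounded{} oracle), apply K\"onig's lemma, and rule out an infinite duplicate-free branch by using well-foundedness of $\sqsubset_{\ordconstraintstype}$ to stabilize the constraint, normalization of $\goestoesymb$ to exclude an eventually-oracle-only tail, and the fact that a single thin well-quasi-ordering drawn from the stabilized non-empty concretization cannot orient an infinite duplicate-free path. Your explicit case split and the observation that oracle steps refine $c$ without the $\satc$ guard make the argument slightly more careful than the paper's sketch, but the underlying decomposition is identical.
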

\begin{proof}
  (Sketch:)
The paths constructed by \algoname implicitly constructs a finitely branching
tree, and the four restrictions ensures that all paths down the tree are finite.
This ensures that the resulting tree is finite; and thus that \algoname's
implicit construction of the tree will terminate.
\end{proof}

Note that any deterministic, terminating external oracle function satisfies the
first requirement. Having completed the formalization, we now move on to the
details of our implementation.

\section{Implementation of \acronym}
\label{sec:implementation}

We implemented \algoname as a standalone library, comprising 2337 lines of
Haskell code (\secref{implementation:lib}). Our implementation includes the \algoname
algorithm, several \oca implementations (including RPQOs [\defref{rpo}])
and exposes the API for implementing \oca{}s (\secref{impl-oca}). We integrated this library into
the Liquid Haskell program verifier \cite{LiquidHaskell} (\secref{impl-lh}), where we chose the
task of applying \emph{lemmas} in Liquid Haskell proofs as a suitable target
problem for automation via \algoname.

\subsection{The \acronym Library}
\label{sec:implementation:lib}

Our \acronym~implementation is developed in Haskell and can be used directly by
other Haskell projects. The library is designed modularly; for example, a client
of the library can decide to use \algoname only for comparing terms via an OCA,
without also using the proof search algorithm of \secref{algo2}. In addition,
our library has a small code footprint and \as{can be used with or without external solvers}\asout{does not require
external solvers}, making it ideal for integration into existing program analysis
tools and theorem provers.

\as{Furthermore, we} include in the library built-in helper utilities for encoding and
solving constraints on term orderings. Although the library enables integration
of arbitrary solvers; it provides several built-in solvers for constraints on
finite WQOs and also provides an interface for solving constraints with external
SMT solvers. These utilities comprise the majority of the code in the
\acronym~library (1369 out of the 2337 lines).

Our implementation defines the OCA interface of \secref{oca} and provides
three built-in instances for RPQOs, LPQOs (derived from the Lexicographic path
ordering), and KBQOs (\secref{kbqo}). The helper utilities included in the
library enable a concise implementation of these OCAs: the three OCA implementations consist of 200 lines of code in total.

To facilitate debugging and evaluation of OCAs, the library also provides a
standalone executable that produces visualizations of the rewrite paths that
\algoname~explores when using the OCA to compute the rewrites paths from a given
term. \figref{vis-example1} and \figref{impl:opt} were produced using this
functionality; we also note that the visualization is also capable of displaying
the accumulated constraints on the ordering at each node in the tree.

We now describe the interface for defining OCAs in our \acronym implementation,
via a presentation of the RPQO algebra in the library.

\subsection{Efficient Implementations of OCAs in \acronym}\label{sec:impl-oca}

\begin{figure}[t]
\begin{minipage}{0.485\textwidth}
\begin{mcode}
  -- Interface of OC Algebra
  data OC C T = OC
    { top    :: C
    , refine :: C -> T -> T -> C
    , sat    :: C -> IO Bool
    }

  -- Language of Logical Formulas
  data LF A = LTrue  | LFalse
          | A :>: A  | A :=: A
          | LF A :$\land$: LF A | LF A :$\lor$: LF A
\end{mcode}
\end{minipage}%
\begin{minipage}{0.515\textwidth}
\begin{mcode}
  -- Implementation of OC Algebra
  rpoOC :: OC (LF $\ops$) $\trms$
  rpoOC = OC LTrue refine sat where

    refine :: LF $\ops$ -> $\trms$ -> $\trms$ -> LF $\ops$
    refine c t u =
      c :$\land$: rpo t u -- As in Def 10

    sat :: LF $\ops$ -> IO Bool
    sat = smtSat . toSMT -- SMT Interface
\end{mcode}
\end{minipage}%

\caption{The implementation of our \rpqo Ordering Constraint Algebra}
\label{fig:impl:ocsat}
\end{figure}
Figure~\ref{fig:impl:ocsat} presents \algoname's library interface for ordering
constraint algebras and the implementation of RPQOs. The interface !OC! is parametric in the language of constraints !C! and the type of
terms !t!. The logical formulas !LF A! describe constraints on WQOs over !A!, in
the case of RPQOs, !LF $\ops$! tracks constraints on the underlying precedence of
function symbols.

Our implementation !rpoOC! defines the initial constraints !top! to
be !LTrue!, (intuitively, permitting any \rpqo). The function !refine c t u!
conjoins the current constraints !c! with the constraints !rpo t u!, ensuring
$t \ord u$. Finally the !sat! function converts the constraints into an
equisatisfiable SMT formula, by encoding each distinct function symbol as an SMT
integer variable, encoding the logical operators as their SMT equivalent, and
checking for satisfiability of the resulting formula.

\algoname's interface supports arbitrary implementations for ordering constraints and
is not dependent on any particular ordering, constraint language, or solver. For
example, the !sat! function for RPQOs could evaluate the formulas using an
alternative solver; in fact \algoname includes a built-in solver for this
purpose, although it does not achieve as high performance as the SMT-based approach.




\subsection{Integration of REST in Liquid Haskell}
\label{sec:impl-lh}
We used \algoname to automate \as{lemma application}\asout{theorem proving} in Liquid Haskell.
Here we provide \as{a brief}\asout{an} overview of Liquid Haskell (\secref{lh-lh}),
how \algoname is used to automate lemma instantiations (\secref{lh-rest})
and how it mutually interacts with the existing Liquid Haskell automation (\secref{lh-ple}).

\subsubsection{Liquid Haskell and Program Lemmas}\label{sec:lh-lh}
\begin{figure}[t]
\begin{mcode}
{-@ example1 :: s0 : Set -> { s1 : Set | IsDisjoint s0 s1 } -> f : (Set -> a) -> { f ((s0 \/ s1) /\ s0)  = f s0 } @-}
example1 :: Set -> Set -> (Set -> a) -> Unit
example1 s0 s1 f =
       f ((s0 \/ s1) /\ s0)         ? distribUnion s0 s1 s0
   === f ((s0 /\ s0) \/ (s1 /\ s0)) ? idemInter s0
   === f (s0 \/ (s1 /\ s0))         ? symmInter s1 s0
   === f (s0 \/ (s0 /\ s1))         -- Disjoint
   === f (s0 \/ emptySet)           ? emptyUnion s0
   === f s0
   *** QED
\end{mcode}
\caption{Liquid Haskell version of the proof from \exref{identities}.}\label{fig:exampleoneLH}
\end{figure}

Liquid Haskell performs program verification via \emph{refinement types} for
Haskell; function types can be annotated with refinements that capture
logical/value constraints about the function's parameters, return value and
their relation. For example, the function \texttt{example1} in~\figref{exampleoneLH}\asfootnote{The code in the
figure looks strange as the environment picks up /\ as a math op. Maybe we can
find a different listing environment?}
ports the
set example of \exref{identities}
to Liquid Haskell, without any use of \algoname.
User-defined lemmas amount to nothing more than additional program
functions, whose refinement types express the logical requirements of the lemma.
The first line of the figure is special comment syntax used in Liquid Haskell to
introduce refinement types; it expresses that the first parameter !s0! is
unconstrained, while the second !s1! is refined\asfootnote{I don't know whether
we need to explicitly introduce the syntax for refinements in general and
explain it.} in terms of !s0!: it must be some value such that
\texttt{IsDisjoint s0 s1} holds. The refinement type on the (unit) return value
expresses the proof goal; the body of the function provides the proof of this
lemma. The proof is written in equational style; the \texttt{?} annotations
specify lemmas used to justify proof steps \cite{vazou_theorem_2018}. The
penultimate step requires no lemma; the verifier can discharge it based on the
refinement on the \texttt{s1} parameter.

Lemmas already proven can be used in the proof of further lemmas; as is standard
for program verification, care needs to be taken to avoid circular reasoning.
Liquid Haskell ensures this via well-founded recursion: lemmas can only be
instantiated recursively with smaller arguments.

%
%

\subsubsection{\algoname for Automatic Lemma Application in Liquid Haskell}
\label{sec:impl:rewrite-rules}\label{sec:lh-rest}
We apply \algoname to automate the application of equality lemmas in the context of
Liquid Haskell. The basic idea is to extract a set of rewrite rules from a set
of refinement-typed functions, each of which must have a refinement type
signature of the following shape:
\smallskip
\begin{mcode}
{-@ rrule :: x$_1$:t$_1$ -> $\dots$ -> x$_n$:t$_n$ -> {v:() | $e_{l} = e_{r}$ } @-}
\end{mcode}
In particular, the equality !$e_{l} = e_{r}$! refinement of the (unit) return value generates potential rewrite rules to feed to \algoname, in both directions.
%
Let \FV{e} be the free variables of $e$,
if $\FV{e_r} \subseteq \FV{e_l}$ and $e_l \not \in \{x_1,\ldots,x_n\}$ then $e_l
\rightarrow e_r$ is generated as a rewrite rule. Symmetrically,
if $\FV{e_l} \subseteq \FV{e_r}$ and $e_r
\not \in \{x_1,\ldots,x_n\}$ then $e_r \rightarrow e_l$ is generated as a rewrite rule. These rewrite rules are fed to \algoname along with the current terms we are trying to equate in the proof goal; any rewrites performed by \algoname are fed back to the context of the verifier as assumed equalities.

Since the extracted rewrite rules are defined as refinement-typed expressions, our implementation technically goes beyond simple term rewriting, since instantiations of these rules in our implementation are also refinement-type-checked; \ie it instantiates only the rules with expressions of the proper refined type, achieving a
form of conditional rewriting \cite{kaplan1984}.

\subparagraph*{Selective Activation of Lemmas: Local and Global Rewrite Rules}
In our Liquid Haskell extension, the user can activate a rewrite rule globally or locally, using
the !rewrite! and !rewriteWith! pragmas, \resp. For example, with the below annotations
\smallskip
\begin{mcode}
  {-@ rewrite global @-}
  {-@ rewriteWith theorem [local] @-}
\end{mcode}
the rule !global! will be active when verifying every function in the current Haskell module,
while the rule !local! is used only when verifying !theorem!.

\subparagraph*{Preventing Circular Reasoning}
Our implementation finally ensures that rewrites cannot be used to justify
circular reasoning, by checking that there are no cycles induced by our !rewrite! and !rewriteWith! pragmas. For example, the below, unsound, circular dependency will be rejected
with a rewrite error by our implementation.
\smallskip
\begin{mcode}
  {-@ rewriteWith p1 [p2] @-}
  {-@ rewriteWith p2 [p1] @-}
  {-@ p1, p2 :: x:Int -> { x = x + 1 } @-}
  p1 _ = () ; p2 = p1
\end{mcode}
To prevent circular dependencies, we check that the dependency graph of the
rewrite rules (which are made available for proving with) has no cycles. This
simple restriction is stronger than strictly necessary; a more-complex
termination check could allow rewrites to be mutually justified by ensuring that
recursive rewrites are applied with smaller arguments. In practice, our coarse
check isn't too restrictive: because Haskell's module system enforces acyclicity
of imports, rewrite rules placed in their own module can be freely referenced
by importing the library.

\subparagraph*{Lemma Automation}
Using our implementation, the same \exref{identities} proven manually in \figref{exampleoneLH} can be alternatively proven (with all relevant rewrite rules in scope) as follows:
\smallskip
\begin{mcode}
{-@ example1 :: s0 : Set -> { s1 : Set | IsDisjoint s0 s1 } ->  f : (Set -> a) ->
                { f ((s0 \/ s1) /\ s0) = f s0 } @-}
example1 s0 s1 _ = ()
\end{mcode}
The proof is fully automatic: no manual lemma calls are needed as these are all handled by \algoname.
Integrating \algoname into Liquid Haskell
required around 500 lines of code, mainly for surface syntax.

%
%
%

\subsubsection{Mutual PLE and \algoname Interaction}
\label{subec:impl:ple}\label{sec:lh-ple}
\begin{figure}[t]
   \begin{center}
    \includegraphics[width=0.4\textwidth]{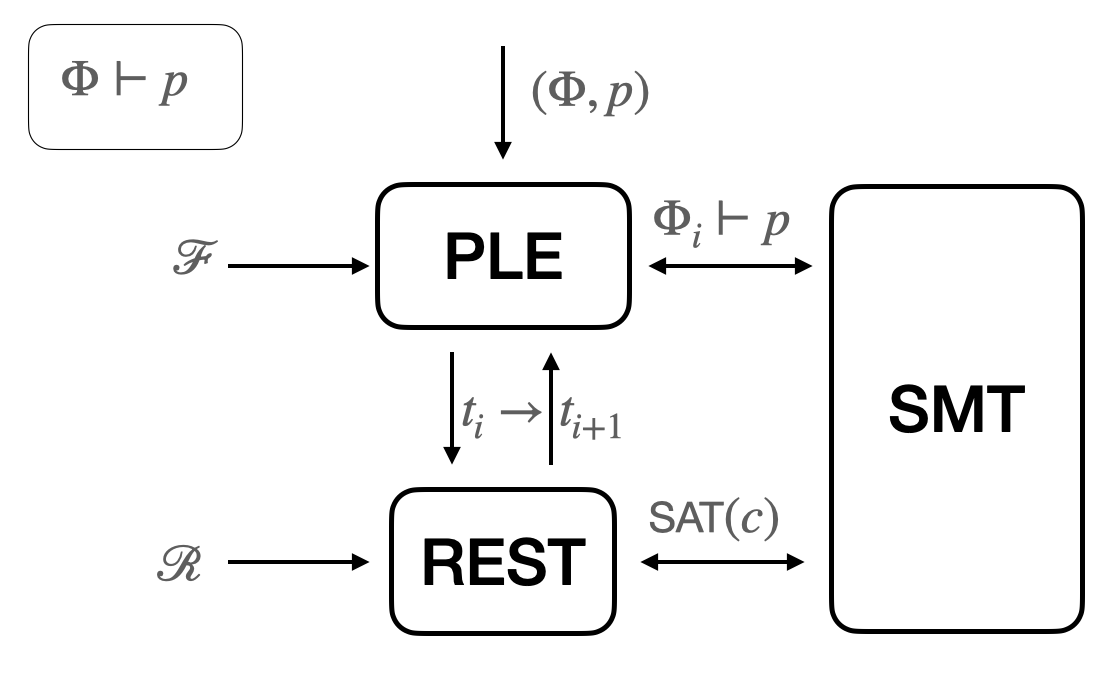}
  \end{center}
  \caption{Interaction between PLE and \algoname. }
  \label{fig:implementation}
\end{figure}
Liquid Haskell includes a technique called \emph{Proof by Logical Evaluation} (PLE) \cite{PLE} for automating the expansion of terminating program function definitions. PLE expands function calls into single cases of their (possibly conditional) bodies exactly when the verifier can prove that a unique case definitely applies. This check is performed via SMT and so can condition on arbitrary logical information; in our implementation, this forms a natural complement to the term rewriting of \algoname, and plays the role of its external oracle (\cf{} \secref{algo}). Since PLE is proven terminating \cite{PLE}, the termination of this collaboration is also guaranteed (\cf{} \secref{meta}).

Figure~\ref{fig:implementation} summarizes the mutual interaction between PLE and \algoname on a verification condition 
$\Phi \vdash p$, where $\Phi$
~is an environment of assumptions.
PLE also takes as input a set $\mathcal{F}$ of (provably) terminating, user-defined function definitions that it iteratively evaluates.
~Meanwhile, \algoname is provided with the rewrite rules extracted from in-scope lemmas in the program (\cf \secref{impl:rewrite-rules}); these two techniques can then generate paths of equal terms including steps justified by each technique.  For example, consider the following simple lemma \texttt{countPosExtra}, stating that the number of strictly positive values in \texttt{xs ++ [y]} is the number in \texttt{xs}, provided that \texttt{y <= 0}, and a lemma stating that \textit{countPos} of two lists appended gives the same result if their orders are swapped.
\smallskip
\begin{mcode}
{-@ lm :: xs : [Int] -> ys : [Int] -> { countPos (xs ++ ys) = countPos (ys ++ xs) } @-}

{-@ rewriteWith countPosExtra [lm] @-}
{-@ countPosExtra :: xs : [Int] -> {y : Int | y <= 0 } ->
                    { countPos (xs ++ [y]) = countPos xs } @-}
countPosExtra :: [Int] -> Int -> ()
countPosExtra _ _ = () -- proof is fully automatic!
\end{mcode}
The proof requires rewriting \texttt{countPos(xs ++ [y])} first via lemma \texttt{lm} (by \algoname), expanding the definition of \texttt{++} twice (via PLE) to give \texttt{countPos(y:xs)}, and finally one more PLE step evaluating \texttt{countPos}, using the logical fact that \texttt{y} is not positive. Note in particular that the first step requires applying an external lemma (out of scope for PLE) and the last requires SMT reasoning not expressible by term rewriting. The two techniques together allow for a fully automatic proof.

\subsection{Further Optimizing the \algoname Algorithm}
\label{subsec:impl:optim}

\begin{figure}[t]
  \begin{center}
  \includegraphics[width=0.6\textwidth]{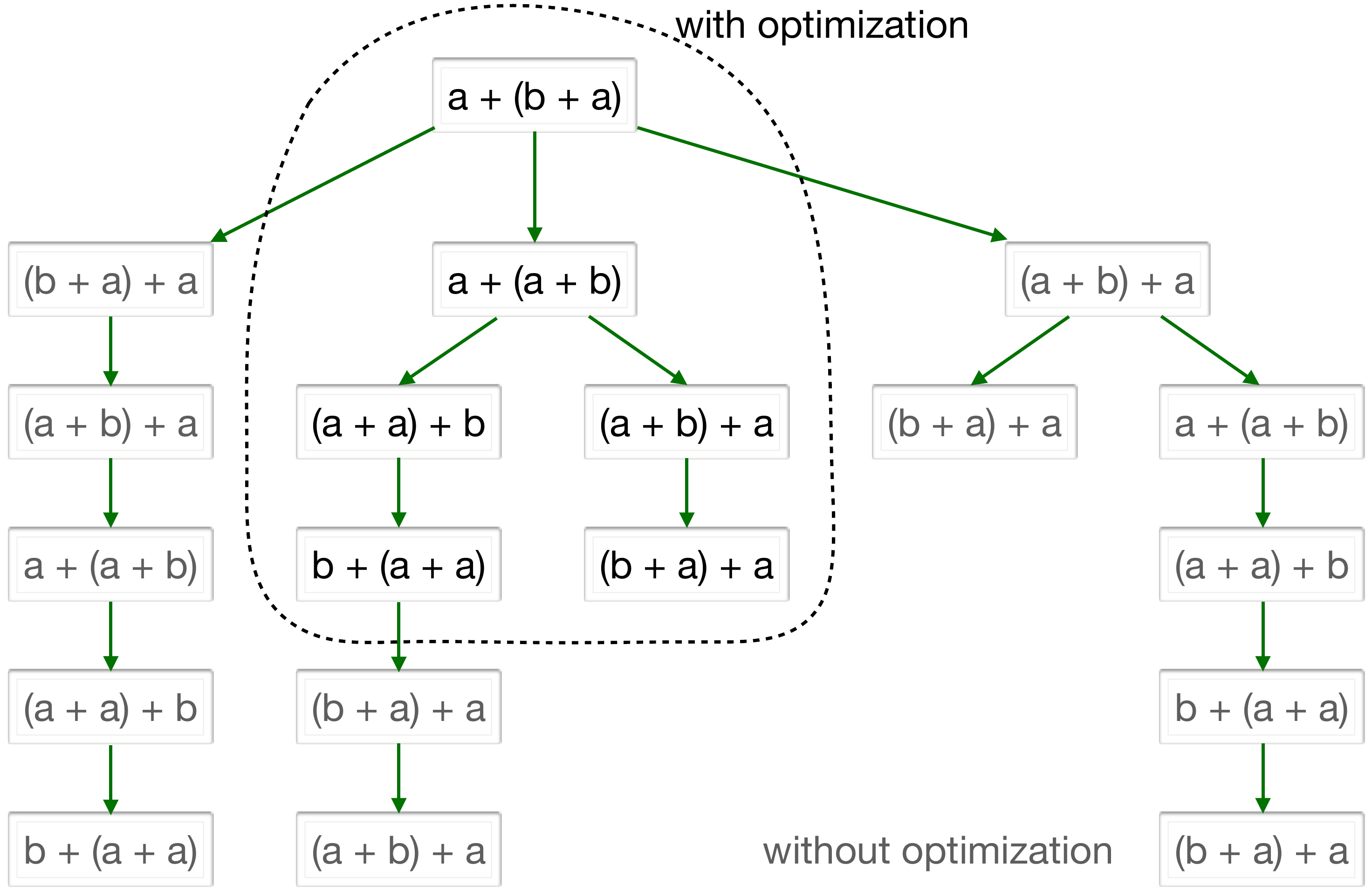}
  \end{center}
\caption{Associative-commutative rewrites of $a + (b + a)$ generated by
\algoname. Paths explored by \algoname with the explored terms optimization are
within the dashed line. Using the explored terms optimization, \algoname only
considers each term once.}
\label{fig:impl:opt}
\end{figure}
When a rewrite system is branching, \algoname may encounter different rewrite
paths from an initial term $t$ to an arbitrary term $u$. For example, in
Figure~\ref{fig:impl:opt}, the term $(b+a) + a$ is explored in 5 different
paths. In general, \algoname cannot always ignore the repeat encounters of $u$,
as \asout{the}\as{a new} path from $t$ to $u$ may impose ordering constraints enabling more
rewrites in the future. Nonetheless, reducing the number of explored paths \as{naturally improves}\asout{would
improve} performance. Therefore, we optimize \algoname based on the following
observations:

\begin{enumerate}
  \item A term $t$ does not need to be revisited if all of it's \asout{transitive
}rewrites have already been visited.
  \item If a term $t$ was previously visited at constraints $c$, revisiting $t$
        at constraints $c'$ is not necessary if $c$ permits all
        orderings permitted by $c'$, \ie $\gamma(c') \subseteq \gamma(c)$.
\end{enumerate}

To implement this optimization,
\algoname maintains a mapping $\cmap$ from terms to the logical constraints $c$ each term was explored with \as{(initially mapping all terms to }!top!).
To explore a term $\tt$  under logical constraints $c$, the algorithm checks that
this term is \emph{explorable}, formally defined \as{by:}\asout{as below.}
$$
\explorable{\tt}{c} \doteq \tt \not \in M \lor (\lnot (c \Rightarrow M[t])
\land \exists \uu. (\goestor{\rrel}{\tt}{\uu} \land \explorable{\uu}{c}))
$$
This predicate ensures that either this term was not explored before or
it comes with weaker constraints that can derive at least one new term in the path.

After exploring a new term, \algoname\ \as{weakens}\asout{updates} the mapping
$\cmap$ \as{for this term to the disjunction of the constraints under which it was newly explored and those previously mapped to in $\cmap$.}
%
With this optimization, a term will appear in more than one path in the \algoname graph
only when it can lead to different terms in the path\asfootnote{Not sure this is clear}.
This optimization critically reduces the number of explored terms \as{even for small examples:} as shown in Figure~\ref{fig:impl:opt}
where 19 vertices of the \algoname\ \as{graph}\asout{path} \asout{on the left}\as{shown} reduced to only \as{the} 6 \as{in the dotted region}\asout{on the right}.

\section{Evaluation}
\label{sec:evaluation}
Our evaluation seeks to answer three research questions:

\noindent
\textbf{\S~\ref{sec:eval:comparison}: How does \algoname compare to existing rewriting tactics?}

\noindent
\textbf{\S~\ref{sec:eval:ematching}: How does \algoname compare to E-matching based axiomatization?}

\noindent
\textbf{\S~\ref{sec:eval:simplification}: Does \algoname~simplify equational proofs?}

We evaluate \algoname using the Liquid Haskell implementation described in
\secref{implementation}. In \secref{eval:comparison}, we
compare our implementation's rewriting functionality with that of other theorem
provers, with respect to the challenges\nvtodo{make sure problems is changed to challenged} mentioned in \secref{challenges}. In
\secref{eval:ematching}, we compare against Dafny~\cite{dafny} by porting Dafny's
calculational proofs to Liquid Haskell, using rewriting to handle axiom
instantiation. Finally, in \secref{eval:simplification}, we port proofs from
various sources into Liquid Haskell both with and without rewriting, and compare
the performance and complexity of the resulting proofs.

\subsection{Comparison with Other Theorem Provers}
\label{sec:eval:comparison}
\begin{table}
  \begin{center}
    \begin{tabular}{|l| c|c|c|c|c|c|c |}
    \hline
    Property                    & \toolname  & \Coq      & \Agda & \Lean & \texttt{Isabelle} & \Zeno & \texttt{Isa+} \\
    \hline
    \texttt{Diverge}            & \checkmark 0.62s & loop      & loop        & fail      & loop       & \checkmark 0.47s & \checkmark 7.58s \\
    \texttt{Plus AC}            & \checkmark 1.13s & loop      & loop        & fail      & fail       & \checkmark 0.54s & \checkmark 4.30s \\
    \texttt{Congruence}         & \checkmark 0.69s & \checkmark 0.22s  & \checkmark 26.10s   & \checkmark 0.36s  & \checkmark 3.86s  & fail      & \checkmark 4.39s \\
    \hline
  \end{tabular}
 \end{center}
    \caption{\small{Comparison of \algoname with existing theorem provers.
    \toolname is Liquid Haskell with rewriting.
    The potential outcomes are
    \checkmark when the property is proved;
    \textbf{loop} when no answer is returned after 300
    sec; and
    \textbf{fail} when the property cannot be proven.
    \texttt{Isa+} is \Isabelle with \texttt{Sledgehammer}.}}
      \label{fig:tp_compare}
\end{table}

To compare \algoname with the rewriting functionality of other theorem provers,
we developed three examples to test the \numchallenges challenges described in
\secref{challenges} and compare our implementation to that of
other solvers. We chose to evaluate against \Agda~\cite{Agda}, \Coq\cite{Coq},
\Lean~\cite{Lean}, \Isabelle~\cite{Isabelle}, and \Zeno\cite{Zeno}, as they are
widely known theorem provers that either support a rewrite tactic, or use
rewriting internally.
\Agda, \Lean, and \Isabelle allow
user-defined rewrites. In \Lean~and \Isabelle, the tactic for applying
rewrite rules multiple times is called !simp!; for simplification.
\Agda, \Coq, and \Isabelle's
implementation of rewriting can diverge for nonterminating rewrite
systems~\cite{Coq,AgdaUserManual,Isabelle}. On
the other hand, \Lean~enforces termination, at least to some degree, by ensuring
that associative and commutative operators can only be applied according to a
well-founded ordering \cite{Lean_tp}.
\Zeno~\cite{Zeno} does not allow for user-defined rewrite
rules, rather it generates rewrites internally based on user-provided axioms.
\texttt{Sledgehammer}~\cite{meng2008translating,paulsson2012three,paulson2007source}
is a powerful tactic supported by \Isabelle that (on top of the built-in
rewriting) dispatches proof obligations to various external provers and succeeds
when any of the external provers succeed; this tactic operates under a built-in
(customizable) timeout.

1. \texttt{Diverge} tests how the prover handles the challenges 1 and \challengeOracle:
restricting the rewrite system to ensure termination and
integrating external oracle steps. This example encodes a single (terminating)
rewrite rule $f(X) \rightarrow g(s(s(X)))$ and terminating, mutually recursive function
definitions for $f$ and $g$. However, the combination of the rules and function
expansions can cause divergence. This test also requires a simple proof that
follows directly from the function definitions.

2. \texttt{Plus AC} tests the challenges \challengeDiffInstantiations and \challengeWQO
by encoding a task that requires a permissive term ordering.
This example encodes !p!, !q!, and !r!, user-defined natural
numbers, and requires that expressions such as !(p + q) + r! can be rewritten
into different groupings such as !(r + q) + p!, via associativity and
commutativity rules.

3. \texttt{Congruence} is an additional test to ensure that the implementation
of the rewrite system is permissive enough to generate the expected result. This
test evaluates a basic expected property, that the expressions $f(g(t))$ and
$f(g'(t))$ can be proved equal if there exists a rewrite rule of the form
$g(X) \rightarrow g'(X)$.

We present our results in Table~\ref{fig:tp_compare}. As expected, \Coq, \Agda,
and \Isabelle diverge on the first example, as they do not ensure termination of
rewriting. \Lean does not diverge, but it also fails to prove the theorem.
Unsurprisingly, the commutativity axiom of !Plus AC! causes theorem provers that
don't ensure termination of rewriting to loop. Although \Lean ensures
termination, it does not generate the necessary rewrite application in every
case, because it orients associative-commutative rewriting applications
according to a fixed order. With the exception of \Zeno, all of the theorem
provers tested were able to prove the necessary theorem for the final example.
Our implementation succeeds on these three examples by implementing a permissive
termination check based on non-strict orderings.

For this selection of simple but illustrative examples, the only tools to
succeed on all cases are our implementation, and Isabelle's Sledgehammer. The
latter combines a great many techniques which go beyond term rewriting.
Nonetheless, we note that our novel approach provides a clear and general formal
basis for incorporation with a wide variety of verifiers and reasoning
techniques (due to its generic definition and formal requirements), and provides
strong formal guarantees for such combinations. In particular, \acronym{}
provides general termination and relative completeness guarantees, which
Sledgehammer (via its timeout mechanism) does not.

\subsection{Comparison with E-matching}
\label{sec:eval:ematching}

To evaluate \algoname against the E-matching based approach to axiom
instantiation, we compared with Dafny \cite{dafny}, a state-of-the\nv{-}art
program verifier. Dafny supports equational reasoning via calculational proofs
\cite{leino2013verified} and calculation with user-defined functions
\cite{amin2014computing}. We ported the calculational proofs of
\cite{leino2013verified} to Liquid Haskell, using rewriting to automatically
instantiate the necessary axioms.

\subsubsection{List Involution}\label{sec:eval:listinv}

Figure \ref{fig:eval:involution} shows \as{an example taken directly from Dafny \cite{leino2013verified}}, proving that the reverse operation on lists is an
involution, \ie $\forall \text{xs} . \textit{reverse(reverse(xs))} = \textit{xs}$.
In this example, both Liquid Haskell and Dafny operate on inductively defined
lists with user-defined functions \concat and \texttt{reverse}. The \as{original Dafny proof goes
through via the combination of a manual application of a lemma called} !ReverseAppendDistrib! \as{(stating that for all lists $xs$ and $ys$,} $reverse(xs \concat ys) = reverse(ys) \concat reverse(xs)$)
and induction on the size of the list.

Using \as{term rewriting as enabled by our \algoname library}, Liquid Haskell is able to simplify the proof, with PLE
expanding the function definitions for !reverse! and !append!, and \algoname~
\as{applying}\asout{generating} the \as{necessary} equality \texttt{reverse (reverse xs ++ [x]) = reverse [x] ++ reverse
(reverse xs)}.

In Dafny, a similar simplification of the calculational proof is not possible\as{; the proof fails if the manual equality steps are simply removed}.
\as{We experimented further and found that} the lemma !ReverseAppendDistrib! can be \as{alternatively} encoded as a \as{user-defined} axiom \as{which, by itself, does not appear to cause trouble for E-matching, and with this change alone the proof succeeds without the need for this single lemma call. On the other hand}, the
equalities must \as{still} be mentioned \as{for} the calculational proof \as{to succeed}. \as{Perhaps surprisingly,} removing \as{these}
intermediate equality steps caused Dafny to stall\footnote{We include this
 version in \appref{evaluation:dafnyloop}}; analysis with the
Axiom Profiler \cite{BeckerMuellerSummers19} indicated the presence of a \as{(rather complex)}
matching loop involving the axiom !ReverseAppendDistrib! \as{in combination with axioms internally generated by the verifier itself}. \as{This illustrates that achieving further automation of such E-matching-based proofs is not straightforward, and can easily lead to performance difficulties due to matching loops which can be hard to predict and understand, even in this state-of-the-art verifier. By contrast, \algoname can automatically provide the necessary equality steps for this proof without introducing any risk of non-termination.}

\begin{figure}[t]
\begin{subfigure}[b]{\textwidth}
  \begin{mcode}
lemma LemmaReverseTwice(xs: List)
    ensures reverse(reverse(xs)) == xs;
{
  match xs {
    case Nil =>
    case Cons(x, xrest) =>
      calc {
        reverse(reverse(xs));
        reverse(append(reverse(xrest), Cons(x, Nil)));
        { ReverseAppendDistrib(reverse(xrest), Cons(x, Nil)); }
        append(reverse(Cons(x, Nil)), reverse(reverse(xrest)));
        { LemmaReverseTwice(xrest); }
        append(reverse(Cons(x, Nil)), xrest);
        append(Cons(x, Nil), xrest);
        xs;
      }
  }
}\end{mcode}
  \caption{Calculation-style proof in Dafny, from \cite{leino2013verified}.}
\end{subfigure}

\quad\\
\begin{subfigure}[b]{\textwidth}
  \begin{mcode}
{-@ involutionP :: xs:[a] -> {reverse (reverse xs) == xs } @-}
{-@ rewriteWith involutionP [distributivityP] @-}
involutionP []     = ()
involutionP (x:xs) = involutionP xs\end{mcode}
  \caption{An equivalent proof implemented in Liquid Haskell extended with \algoname}
\end{subfigure}
\caption{List Involution proofs in Liquid Haskell and Dafny}\label{fig:eval:involution}
\end{figure}

\subsubsection{Set Properties}\label{sec:eval:set}

Figure \ref{fig:eval:dafny:set} shows the Dafny and Liquid Haskell proofs for
the implication
$s_{0} \cap s_{1} = \emptyset \implies f((s_{0} \cup s_{1}) \cap s_{0}) = f(s_{0})$.

Dafny uses a calculational proof to show the equality
$(s_{0} \cup s_{1}) \cap s_{0} = s_{0}$, seemingly by applying distributivity\nvtodo{I do not see distributivity anywhere in the proof}.
In fact, the distributivity aspect is not relevant to the proof; rather, the set
equality in the proof syntax causes Dafny to instantiate the set extensionality
axiom discharging the proof. It is for this reason that Dafny requires an extra
proof step to prove $f((s_{0} \cup s_{1}) \cap s_{0}) = f(s_{0})$, as this term
does not include an equality on sets, but rather on applications of $f$. Dafny's
set axiomatization does not include the distributivity axiom, as \nvout{the presence of
}such an axiom could easily lead to matching loops.

Using \algoname, it is safe to encode arbitrary lemmas as rewrite rules, as the
termination is guaranteed; in this case the distributivity lemma can be used to
complete the proof (and is permitted as a rewrite rule with the precedence
$\cap > \cup$).

\begin{figure}[t]
\begin{subfigure}[b]{\textwidth}
  \begin{mcode}
lemma Proof<a>(s0: set<int>, s1: set<int>, f: set<int> -> a)
   requires s0 * s1 == {}
   ensures f((s0 + s1) * s0) == f(s0) {
     calc {
       (s0 + s1) * s0; (s0 * s0) + (s1 * s0);
       s0;
     }
   }\end{mcode}
  \caption{Proof in Dafny using built-in set axiomatization}
\end{subfigure}

\quad\\
\begin{subfigure}[b]{\textwidth}
  \begin{mcode}
{-@ assume unionEmpty :: ma : Set -> {v : () | ma \/ emptySet = ma } @-}
{-@ assume intersectComm :: ma : Set -> mb : Set -> {v : () | ma /\ mb = mb /\ ma } @-}
{-@ assume intersectSelf :: s0 : Set -> { s0 /\ s0 = s0 } @-}
{-@ assume unionIntersect :: s0 : Set -> s1 : Set -> s2 : Set ->
                           { (s0 \/ s1) /\ s2 = (s0 /\ s2) \/ (s1 /\ s2) } @-}
{-@ rwDisjoint :: s0 : Set -> {s1 : Set | IsDisjoint s0 s1} -> { s0 /\ s1 = emptySet } @-}

{-@ example1 :: s0 : Set -> { s1 : Set | IsDisjoint s0 s1 } ->  f : (Set -> a) ->
                { f ((s0 \/ s1) /\ s0) = f s0 } @-}
example1 s0 s1 _ = ()\end{mcode}
  \caption{An equivalent proof implemented in Liquid Haskell, with a user-defined axiomatization of sets.}
\end{subfigure}
\caption{Set Proofs in Liquid Haskell and Dafny}\label{fig:eval:dafny:set}
\end{figure}

In conclusion, we have shown that using \algoname to apply rewrites \nv{could}\nvout{can} be used as
an alternative to E-matching based axiomatization.
\nvtodo{I think this is a very strong sentence: from two examples we did not show we are an alternative to e mathing. replaced can with could to tone it down}
Furthermore, the termination
guarantee of \algoname enables axioms that may give rise to matching loops to,
instead, be encoded as rewrite rules.

\subsection{Simplification of Equational Proofs}
\label{sec:eval:simplification}
Finally, we evaluate how \algoname can simplify equational proofs.
We chose to include the set example from \cite{leino2013verified} (described in
\secref{eval:set}), data structure proofs from \cite{vazou_theorem_2018},
examples from the Liquid Haskell test suite, as well as our own case study. We
developed each example in Liquid Haskell both with and without rewriting, and
compared the timing and proof complexity.
Each proof using rewriting was evaluated using each different
ordering constraint algebras built-in to our Haskell \algoname~library.
The proofs in
\cite{vazou_theorem_2018} were selected because they require induction,
expansion of user-defined functions, and equational reasoning steps to prove
properties about trees and lists. The examples from the Liquid Haskell test
suite were taken to evaluate the rewriting across a range of representative
proofs. 

Our \texttt{DSL} case study evaluates the performance of our implementation using a \as{larger} set
of rewrite rules, by verifying optimizations for a simple programming language,
containing statements (\ie print, sequence, branches, repeats and no-ops) and
expressions (\ie constants, variables, arithmetic and boolean expressions)
using 23 rewrite rules. Our rewriting technique \as{can prove the} kind of
equivalences used in techniques such as
supercompilation~\cite{10.1145/2088456.1863540, WADLER1990231,eqsat-opt}, by
encoding the basic equality axioms as rewrite rules and using them to prove more
complicated theorems. A full list of the axioms and proved theorems are
available in \appref{eval:casestudy}. We note that we
encoded arithmetic operations as uninterpreted SMT functions, so that the
built-in arithmetic theory of the SMT does not aid proof automation.

\begin{table}
  \begin{tabular}{|l| c | c | c | c | c | c | c | c | c | c |}
    \hline
    \multirow{2}{*}{Name} &
    \multirow{2}{*}{Orig.} &
    \multirow{2}{*}{Cut} & \multirow{2}{*}{Rules} &\multicolumn{5}{c|}{Time}\\
    \cline{5-9}
    &&&&Orig.&RPQO&LPQO&KBQO&Fuel
    \\
    \hline
Set-Dafny      & 4        & 4       & 5    &  1.11s& \checkmark 1.15s& \checkmark 1.19s& \xmark 1.13s& \checkmark 1.22s\\
Set-Mono       & 7        & 7       & 4    &  1.16s& \xmark 1.40s& \xmark 1.41s& \checkmark 1.47s& \checkmark 1.60s\\
List           & 3        & 3       & 3    &  2.46s& \checkmark 3.17s& \xmark 4.21s& \xmark 2.24s& \checkmark 3.54s\\
Tree           & 3        & 3       & 3    &  1.61s& \checkmark 2.64s& \checkmark 3.40s& \checkmark 3.08s& \checkmark 3.12s\\
DSL            & 43       & 43      & 23   &  2.89s& \checkmark 5.46s& \xmark 3.85s& \xmark 4.19s& \checkmark 6.54s\\
LH-FingerTree  & 2        & 1       & 1    &  5.55s& \checkmark 5.60s& \checkmark 5.57s& \checkmark 5.64s& \checkmark 5.95s\\
LH-T1013       & 1        & 1       & 1    &  1.11s& \checkmark 1.06s& \checkmark 1.00s& \checkmark 1.02s& \checkmark 1.06s\\
LH-T1025       & 2        & 2       & 2    &  1.03s& \checkmark 1.05s& \checkmark 1.08s& \checkmark 1.07s& \checkmark 1.13s\\
LH-T1548       & 1        & 1       & 1    &  1.45s& \checkmark 1.33s& \checkmark 1.38s& \checkmark 1.32s& \checkmark 1.45s\\
LH-T1660       & 1        & 1       & 1    &  1.09s& \checkmark 1.12s& \checkmark 1.12s& \checkmark 1.12s& \checkmark 1.20s\\
LH-MapReduce   & 4        & 3       & 2    &  14.38s& \checkmark 29.50s& \checkmark 518.91s& \checkmark 28.49s& \xmark Timeout\\
    \hline
  \end{tabular}
  \caption{
    Results from simplification of proofs with rewriting. \textbf{Set-Dafny} is
    the set example from\cite{leino2013verified}, \textbf{Set-Mono} describes a
    similar property. \textbf{List} and
    \textbf{Tree} are equational proofs from \cite{vazou_theorem_2018}.
    \textbf{DSL} is the program equivalence case study. The remaining proofs are from the Liquid Haskell
    test suite folder \texttt{tests/pos}, excluding those using only inductive or mutually inductive lemmas.
    \textbf{Orig.} is the number of non-inductive lemma applications in the original proof.
    \textbf{Cut} is the number of lemma applications that were removed by rewriting.
    \textbf{Rules} is the number of axioms encoded as rewrite rules.
    \textbf{Time (Orig.)} is verification time in seconds for the original proof.
    \textbf{LPQO} and \textbf{KBQO} are OCAs derived from the Lexicographic Path
Ordering and Knuth-Bendix ordering respectively, and
    \textbf{Fuel} is an OCA allowing up to 5 rewrite applications per
    proof goal.
    }
  \label{table:evaluation:case-studies}
  \vspace{-0.3in}
\end{table}

We present our results in Table \ref{table:evaluation:case-studies}. By using
rewriting, we were able to eliminate all but two of the non-inductive axiom
instantiations, while maintaining a reasonable verification time. As expected,
no ordering constraint algebra was able to complete all the proofs using
rewriting; however, each proof could be verified with at least one of them.

The test cases !LH-FingerTree! and !LH-MapReduce! required manual axiom
instantiations because the structure of the term did not match the rewrite rule
for the axiom. \texttt{LH-MapReduce}, requires proving the identity \texttt{op
(f (take n is)) (mapReduce n f op (drop n is)) = f is}. An inductive lemma
application generates the background equality \texttt{mapReduce n f op (drop n
is) = f (drop n is)}, and a rewrite matching the term \texttt{op (f
(take n is)) (f (drop n is))} must be instantiated to complete the proof.
However, since the background equality is neither a rewrite rule nor an
evaluation step, the necessary term \texttt{op (f (take n is)) (f (drop n is))}
never appears. Therefore, it is necessary to manually instantiate the
lemma. As future work, a limited form of E-matching \cite{Ematching} could be
used to address this issue in the general case.

In conclusion, we've shown that extending Liquid Haskell to use \algoname
enables rewriting functionality not subsumed by existing theorem provers, that
\algoname is effective for axiom instantiation, and that \algoname can simplify
equational proofs.


\section{Related Work}
\label{sec:related}
\astodo{I guess we need to check whether this needs updating?}

\subparagraph*{Theorem Provers \& Rewriting}
Term rewriting is an effective technique to automate
theorem proving~\cite{hsiang_term_1992} supported by most standard theorem provers.
\S~\ref{sec:eval:comparison} compares, by examples, our technique
with \Coq, \Agda, \Lean, and \Isabelle.
%
In short, our approach is different because it uses user-specified rewrite rules
to derive, in a terminating way, equalities that strengthen the SMT-decidable
verification conditions generated during program verification.

\vspace{-0.3em}
\subparagraph*{SMT Verification \& Rewriting}
Our rewrite rules could be encoded in SMT solvers
as universally quantified equations and instantiated
using \textit{E-matching}~\cite{Ematching},
\ie a common algorithm for quantifier instantiation.
E-matching might generate matching loops leading to unpredictable divergence.
\cite{leino_trigger_2016} refers to this unpredictable behavior of E-matching
as the ``the butterfly effect'' and partially addresses it by detecting formulas that
could give rise to matching loops.
Our approach circumvents unpredictability by using the terminating \algoname
algorithm to instantiate the rewrite rules outside of the SMT solver.

Z3~\cite{Z3} and \CVCFour~\cite{CVC4} are state-of-the-art SMT solvers; both
support theory-specific rewrite rules internally. Recent work \cite{CVC4Rewrite}
enables user-provided rewrite rules to be added to \CVCFour. However,
using the SMT solver as a rewrite engine offers little control over rewrite rule
instantiation, which is necessary for ensuring termination.

\vspace{-0.3em}
\subparagraph*{Rewriting in Haskell}
Haskell itself has used various notions of rewriting for program verification.
GHC supports the !RULES! pragma with which the user can specify unchecked,
quantified expression equalities that are used at compile time for program optimization.
\cite{Breitner18} proposes Inspection Testing as a way to check such rewrite rules
using runtime execution and metaprogramming,
while~\cite{Hermit} prove rewrite rules via metaprogramming and user-provided hints.
In a work closely related to ours, Zeno~\cite{Zeno} is using rewriting,
induction, and further heuristics to provide lemma discovery and fully automatic
proof generation of inductive properties.
Unlike our approach, Zeno's syntax is restricted (\eg it does not allow for existentials)
and it does not allow for user-provided hints when automation fails.
HALO~\cite{vytiniotis2013halo} enables Haskell verification by converting
Haskell into logic and using an SMT solver
to verify user-defined formulas.
However, this approach relies on SMT quantifiers to encode user functions,
thus the solver can diverge and verification becomes unpredictable.

\vspace{-0.3em}
\subparagraph*{Termination of Rewriting and Runtime Termination Checking}
Early work on proving termination of rewriting using simplification orderings is
described in \cite{dershowitz1982orderings}. More recent work involves
dependency pairs \cite{arts_termination_2000} and applying the size-change
termination principle \cite{scp} in the context of rewriting
\cite{thiemann_size-change_2007}. Tools like AProVE \cite{aprove} and NaTT \cite{natt}
can statically prove the termination of rewriting.

In contrast, \algoname is not focused on statically proving termination of
rewriting; rather it uses a well-founded ordering to ensure termination at
runtime. This approach enables integration of arbitrary external oracles to
produce rewrite applications, as a static analysis is not possible in principle.
Furthermore, our approach enables nonterminating rewriting systems to be useful:
\algoname will still apply certain rewrite rules to satisfy a proof obligation,
even if the rewrite rules themselves cannot be statically shown to terminate.

We choose to use a well-quasi-ordering~\cite{kruskal_theory_1972} because it
enables rewriting to terms that are not strictly decreasing in a simplification
ordering. WQOs are commonly used in online termination
checking~\cite{goos_homeomorphic_2002}, especially for program optimization
techniques such as supercompilation~\cite{bolingbroke2011termination}.

\subparagraph*{Equality Saturation}

In our implementation, \algoname passes equalities to the SMT environment, ultimately used for \emph{equality saturation} via an E-graph data structure \cite{detlefs2005simplify}. Equality saturation has also been used for supercompilation\cite{eqsat-opt}. \algoname{} does not currently exploit equality saturation (unless indirectly via its oracle). However, as future work we might explore local usage of efficient E-graph implementations.
%
(\eg{} \cite{willsey2021egg}) for caching the
equivalence classes generated via rewrite applications.

\subparagraph*{Associative-Commutative Rewriting}

Traditionally, enforcing a strict ordering on terms prevents the application of
rewrite rules for associativity or commutativity (AC); this problem motivates
\algoname's use of well-quasi orders. However, another solution is to
omit the rules and instead perform the substitution step of rewriting modulo AC.
Termination of the resulting system can be proved using an AC ordering
\cite{dershowitz1983associative}; the essential requirement is that the ordering
also respects AC, such that $t > u$ implies $t' > u'$ for all terms $t'$
AC-equivalent to $t$ and $u'$ AC-equivalent to $u$.

\algoname's use of well-quasi-orderings enables AC axioms to be encoded as
rewrite rules, guaranteeing completeness if the AC-equivalence class of a term
is a subset of the equivalence class induced by the ordering. This is a
significant practical benefit as it does not require \algoname to identify AC symbols
and treat them differently for unification.

However, we note that treating AC axioms as rewrite rules can lead to an
explosion in the number of terms obtained via rewriting. As future work, it
could be possible to extend \algoname to support AC rewriting and unification in
order to reduce the number of explicitly instantiated terms.

\section{Conclusion}
\label{sec:conclusion}

We have presented \algoname, a novel approach to rewriting that uses an
online termination check that simultaneously considers entire families of term
orderings via a newly introduced Ordering Constraint Algebra.
We defined our algebra on well-quasi orderings that are more permissive than
standard simplification orderings, and demonstrated how to derive well-quasi
orderings from well-known simplification orderings.
In addition, we proved correctness, relative completeness, and (online)
termination of our algorithm.
\as{Our \algoname approach is designed, via a generic core algorithm and the pluggable abstraction of our OCAs, to be simple to (re-)implement and adapt to different programming languages or efficient implementations of term ordering families.}
We \as{demonstrated this by writing an implementation of} \algoname as a \as{small Haskell} library suitable for integration with
existing verification tools.
To evaluate \algoname we used our library to extend Liquid Haskell, and showed
that the resulting system \as{compares well with} \asout{is not subsumed by}
existing rewriting techniques, it can be used as an alternative to E-matching
based axiomatizations approaches \as{without risking non-termination}, and can \as{substantially} simplify \as{proofs requiring equational reasoning steps.}
%
%

\bibliography{bib}
\clearpage
\appendix
\section{Metaproperty Proofs}\label{appendix:metaprop:proofs}

\pathinvariant*
\begin{proof}
    By induction on the loop iterations of the algorithm.
    $p$ is initialized with the single element $([\tt_{0}], c)$. $[\tt_0]$ is a
  valid path of $\rrel + \evalSymb$, because it only contains a single term;
  clearly this path also starts with $\tt_0$.

    At each loop iteration, new elements are potentially pushed to $p$.
    Suppose the path $ts$ is popped from $p$ at the beginning of the loop.
    The element to be pushed is a pair $(ts \concat [t'], c)$ where
  $\goestor{\rrel + \evalSymb}{last(ts)}{\tt'}$. This exactly satisfies the
  inductive hypothesis: if $ts$ is a path of $\rrel + \evalSymb$, then
  $ts \concat [\tt']$ is also a path of $\rrel + \evalSymb$. Furthermore, this
  operation preserves the head of the list: $t_0$ is still the first element.
\end{proof}

\relcompleteness*
\begin{proof}
  The proof goes by induction on the number of steps of the path.

  Assume the path has $n$ steps:
  $\goestor{\evalSymb}{\tt_0}{}\goestor{\evalSymb}{\tt_1}{} \dots \goestor{\evalSymb}{}{} \goestor{\evalSymb}{\tt_{n-1}}{\tt_n} \equiv u$.

  For the base case, $n=0$ and $u \equiv \tt_0$.
  Since $p$ is initialized with $([\tt_0], \octop)$, by the Invariant~\ref{invariant:output},
  $\tt \in \algod{\tt_{0}}$.

  For the inductive case, assume that
  $\evalstor{\evalSymb}{\tt_0}{\tt_{n-1}}\goestor{\evalSymb}{}{\tt_{n}}$.
  By inductive hypothesis, $\tt_{n-1} \in \algod{\tt_{0}}$.
  When $\tt_{n-1}$ was added in the result, it was the last
  element of a path $ts$ that was popped from the stack $p$.
  Since \goestor{\evalSymb}{\tt_{n-1}}{\tt_n}, we split cases
  on whether or not $\tt_n \in ts$.
  If $\tt_n \in ts$, then by Invariant~\ref{invariant:output} $\tt_n \in \algod{\tt_{0}}$.
  Otherwise, $(ts \concat [\tt_n],c)$ will be pushed into $p$ and, again,
  by Invariant~\ref{invariant:output} it will appear in the output.
\end{proof}

\relcompletenesstwo*

First, we observe the (somewhat standard) property that if any path justifies $\evalstor{\rrel + \evalSymb}{\tt_0}{\uu}$, there is a \emph{duplicate-free variant} of such path (intuitively, obtained by cutting out all subpaths leading from a term to itself).

  Below, we prove that
  if \evalstor{\rrel + \evalSymb}{\tt_0}{\uu} and the ordering \ord orients the path,
  then a duplicate-free variant path $ts$ belongs in the stack $p$ with some
  constraints $c$  and  $\ord~\in \gamma(c)$.

  \begin{invariant}\label{invariant:three}
    For any execution of \algod{\tt_0},
    if \evalstor{\rrel + \evalSymb}{\tt_0}{\tt_n}
    and $\ord~\in\gamma(\octop)$ is an ordering that orients
    \evalstor{\rrel + \evalSymb}{\tt_0}{\uu}, then at some iteration of the main loop,
    a duplicate-free variant path $ts$ of this path is stored in $p$,
    with some ordering constraints $c$ and $\ord~\in \gamma(c)$.
  \end{invariant}
  \begin{proof}
    The proof goes by strong induction on the length $n+1$ of the path justifying $\evalstor{\rrel + \evalSymb}{\tt_0}{\tt_n}$.

    First, consider the case $n = 0$, where the path is $[\tt_0]$ and the constraints
    $\octop$. $([\tt_0], \octop) \in p$ by initialization and trivially $\ord~\in \gamma(\octop)$.

    Otherwise, when $n > 0$, assume that
    $\evalstor{\rrel + \evalSymb}{\tt_0}{\tt_{n-1}}\goestor{\rrel + \evalSymb}{}{\tt_{n}}$.
    If there are any duplicate terms in this path, a duplicate-free variant exists of shorter length, and we can conclude by our induction hypothesis. Otherwise, consider this path with the last element $\tt_{n}$ removed. Being already duplicate-free, by our induction hypothesis we must have that, at some iteration of our main loop, this path is contained in $p$ along with a constraint $c_{n-1}$ such that $\ord~\in \gamma(c_{n-1})$. By the assumption that \ord orients the original path, in particular we must have $\tt_{n-1}\ord\tt_n$, and
    so, by \defref{oca},
    $\ord~\in \gamma(\ocrefine{c_{n-1}}{\tt}{\tt'})$ and therefore $\ocrefine{c_{n-1}}{\tt}{\tt'}$ is satisfiable. Therefore, the original path will be pushed to $p$ with this constraint in this loop iteration.
  \end{proof}

\begin{proof}
  The proof is similar to Theorem~\ref{thm:rewrite_complete_eval},
  but now we need to also show that the relation that orients the path
  satisfies all the ordering constraints generated by the respective \algoname path. By Invariant~\ref{invariant:three}, at some iteration of the main loop, there must be some path ending in $\uu$ contained in $p$. Then, by Invariant~\ref{invariant:output} it follows that all the elements of the path,
  thus also \uu, belong in the result.

\end{proof}

\thmterminating*

\begin{proof}

  At every iteration of $\algoname$, a path with length $n$ is popped off the
stack and due to Requirement~\ref{tass:eval}, and the fact that only a finite
number of new terms can be generated by single applications of the rules \rrel
to an arbitrary term, a finite number of paths with length $n+1$ is pushed on.
Therefore, $\algoname$ implicitly builds (via its set of paths $p$) a \emph{finitely-branching} tree starting from $t_{0}$.
For $\algoname$ to not terminate, there must be an infinite path down the tree (note that Requirement \ref{tass:funcs} eliminates the possibility that the operations called from the \oca{} cause non-termination).

Consider an arbitrary path down the tree explored by $\algoname$, represented by the $(\ts,\ordconstraints)$ pairs iteratively generated. Firstly, due to the first condition in the \texttt{foreach} of $\algoname$ (\cf{} \figref{algo-rewrite}), this path will remain duplicate-free.
 By Requirement \ref{tass:wf}, at only finitely many steps is the constraint
 tracked \emph{strictly} refined. Consider then, the postfix of the path after
 the last time that this happens; at every step, the constraint
 $\ordconstraints$ remains identical. The normalization assumption (Requirement
 \ref{tass:eval}) of \evalSymb entails that this path contains no infinite
 sequence of steps all justified by $\evalSymb$. However, for each step justified
 instead by a rewriting step from $\rrel$, the additional condition $\sat{c}$
 must hold; by \defref{oca} this means that there is some $\ord~\in\gamma(c)$
 which orients all of these steps. As $\ord$ must be an instance of a term
 ordering family (\defref{tofamily}), it is a thin well-quasi-order.
 Therefore $\ord$ can only orient a finite number of steps, and the path down
 the tree must be finite.

  Since every path in the finitely-branching tree explored is finite, the algorithm (always) terminates.
\end{proof}

\clearpage

\section{Proofs on Orderings}
\label{appendix:ordering:proofs}

\begin{lemma}\label{lemma:mul_subset}
  If $T \mul U$, then $T \mul U'$ for all $U' \subset U$.
\end{lemma}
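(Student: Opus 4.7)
The plan is to prove the lemma by strong (course-of-values) induction on the size measure $|T| + |U|$, case-splitting on which of the three clauses of \defref{set:ordering} witnesses the hypothesis $T \mul U$. The case $U' = U$ is immediate from the hypothesis, so I will assume $U' \subsetneq U$; clause~(1) is then vacuous because $\emptyset$ has no proper subset.

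For clause~(2), with witnesses $t \in T$, $u \in U$ satisfying $t \approx u$ and sub-derivation $(T - t) \mul (U - u)$, I split on whether $u \in U'$. If $u \in U'$, then $U' - u \subsetneq U - u$ and the IH (applicable because the sub-derivation has measure $|T| + |U| - 2$) yields $(T - t) \mul (U' - u)$, and clause~(2) concludes $T \mul U'$. If $u \notin U'$, then $U' \subseteq U - u$ and the IH supplies $(T - t) \mul U'$ directly; a second invocation of the IH on this freshly derived statement (still of strictly smaller measure, since $|T - t| + |U'| \leq |T| + |U| - 2$) shrinks the right-hand side to $U' \setminus \{u' \in U' \mid u' <_X t\}$, from which clause~(3) with witness $t$ delivers $T \mul U'$.

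For clause~(3), with witness $t$ and sub-derivation $(T - t) \mul (U \setminus S_t)$ where $S_t = \{u \in U \mid u <_X t\}$, the key set-theoretic observation is that $U' \setminus S'_t \subseteq U \setminus S_t$ (writing $S'_t = \{u' \in U' \mid u' <_X t\}$): any element of the left-hand side lies in $U'$, hence in $U$, and is not strictly below $t$, hence not in $S_t$. The IH applied to the sub-derivation (whose measure is $|T| + |U| - 1 - |S_t|$) then yields $(T - t) \mul (U' \setminus S'_t)$, and clause~(3) concludes $T \mul U'$.

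The principal obstacle is clause~(3) in the edge case $S_t = \emptyset$: then $U \setminus S_t = U$ and the sub-derivation is over a multiset of the same size, so induction on $|U|$ alone would fail. Including $|T|$ in the measure resolves this, because $|T - t| < |T|$ strictly decreases the measure regardless of $S_t$. A secondary subtlety is the second application of the IH inside clause~(2), which is applied not to a sub-derivation of the original witness but to one constructed on the fly; this is valid precisely because the induction is on the numerical measure rather than on derivation structure.
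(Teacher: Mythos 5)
Your proof is correct, but it reaches the conclusion by a somewhat different route than the paper. The paper first reduces arbitrary subsets to the removal of a single element $u'$, inducts on $|T|$ alone, and then—in the clause-(2) subcase where the removed element is the matched $u$ itself—performs an explicit \emph{proof transformation}: it takes the derivation of $(T-t) \mul (U-u)$ and re-plumbs every node to reinsert $t$ on the left, yielding $T \mul (U-u)$ (in effect proving a left-weakening property by surgery on the derivation tree). You avoid both moves: you handle arbitrary $U' \subset U$ in one pass under a strong induction on $|T|+|U|$, and in the corresponding subcase ($u \notin U'$) you instead reassign $t$ as a clause-(3) witness, obtaining $(T-t) \mul (U' \setminus \{u' \in U' \mid u' <_X t\})$ from two applications of the induction hypothesis. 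This is cleaner in that it never reasons about the internal structure of a derivation, only about which clause was applied last; the price is the stronger measure, which you correctly identify as necessary (clause (3) with an empty filtered set does not shrink the right-hand side, so $|U|$ alone would not decrease). Your clause-(3) case, via the inclusion $U' \setminus S'_t \subseteq U \setminus S_t$, is essentially the paper's argument stated for general subsets rather than element by element. Two harmless imprecisions: when $U' = U - u$ or when the filter set is empty, the ``IH applications'' you invoke are really just the sub-derivation itself (the subset relation degenerates to equality), so the argument should be read as allowing non-strict inclusion there; neither affects correctness.
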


\begin{proof}
    It is sufficient to show that $T \mul U$ implies $T \mul (U - u')$, for any
  $u' \in U$, since the subset can be obtained by removing a finite number of
  elements. That is, if $U'$ was obtained by removing elements \range{u}{n} from
  $U$, we
  can show that $T \mul (U \setminus \{u_{1}\})$ implies
  $T \mul (U \setminus \{u_{1}, u_{2}\})$ and so on.

  The proof goes by induction on the size of $T$ and case analysis on $T \mul U$.

  For case one there are no $u'$ in $U$, so the proof holds vacuously.

  For case two, we have either $u = u'$ or $u \neq u'$. If $u = u'$, a proof of
$T \mul (U - u)$ can be made by modifying the proof of $(T - t) \mul (U - u)$.
The base case of that proof must be of the form $T' \mul \emptyset$. We modify
the base case to be $(T' + t) \mul \emptyset$. Each recursive case is also
modified to replace $T'$ with $(T' + t)$, yielding $(T' + t) \mul (U - u)$ =
$T \mul (U - u)$, as required. The proof that $T \mul (U - u')$ for all other
$u' \in U$ is obtained by induction. By the inductive hypothesis, we have
$(T - t) \mul (U - u - u')$,  since $u \neq u'$, we also
have $u \in (U - u')$. Therefore applying case two we get $T \mul (U - u')$.

  For case three, we have either $u' < t$ or $u' \nless t$. If $u' < t$, then
the proof $(T - t) \mul (U \setminus \setfilter{u}{U}{u < t})$ is also a proof
of $(T - t) \mul ((U - u') \setminus \setfilter{u}{U}{u < t})$, thus we obtain
obtain the proof directly. The proof for all other $u' \in U$ is obtained by
induction. By the inductive hypothesis we have
$(T - t) \mul ((\mulsthreerhs{u}{U}{t}) - u')$.
Then, applying the same top-level proof yields $T \mul (U - u')$, since
$u'$ is not in the set \setfilter{u}{U}{u < t}.

\end{proof}

\begin{lemma}\label{lemma:mul_qo}
If $\ord_{X}$ is a quasi-order,
then the multiset extension $\mul$ is also a quasi-order.
\end{lemma}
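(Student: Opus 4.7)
The plan is to prove reflexivity and transitivity separately, since a quasi-order is precisely a relation that is reflexive and transitive.

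For reflexivity, I would proceed by induction on the size of $T$ to show $T \mul T$. The base case $T = \emptyset$ is immediate from Definition~\ref{defn:set:ordering}, case (\ref{defn:set:ordering:casei}). For the inductive step, pick any $t \in T$; by reflexivity of $\ord_X$ we have $t \approx t$, and by the induction hypothesis $(T - t) \mul (T - t)$. Applying case (\ref{defn:set:ordering:caseii}) with this same $t$ on both sides gives $T \mul T$.

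For transitivity, the plan is to show that $T \mul U$ and $U \mul V$ imply $T \mul V$, by a well-founded induction on the triple $(|T|, |U|, |V|)$ ordered lexicographically (or on $|T| + |U| + |V|$), with case analysis on the last rule used to derive each of $T \mul U$ and $U \mul V$. The easy cases are those involving $V = \emptyset$ (immediate from case (\ref{defn:set:ordering:casei})) and those where both derivations use case (\ref{defn:set:ordering:caseii}), since transitivity of $\ord_X$ (hence of $\approx$) together with the inductive hypothesis on smaller multisets yields the result directly. The remaining cases require pairing each element of the intermediate $U$ that is witnessed on one side with its matching element on the other side, then appealing to the induction hypothesis on the residual multisets.

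The main obstacle will be the mixed cases involving rule (\ref{defn:set:ordering:casei}) (the third case, where a single larger element may replace a finite multiset of smaller ones). In such cases, a single element $t \in T$ could dominate multiple elements in $U$, each of which in turn dominates or is equivalent to further elements in $V$; the derivation of $T \mul V$ must be assembled so that the ``chains'' through $U$ are collapsed correctly, which requires careful bookkeeping of which elements of $V$ are justified by which element of $T$. Here \textbf{Lemma~\ref{lemma:mul_subset}} (monotonicity in the right-hand side) will be the key technical tool, as it lets us discard elements of the intermediate multiset that are dominated by $t$ without needing to track them explicitly, and similarly lets us weaken intermediate conclusions to align the two derivations. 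Once the structure of the combined derivation is in place, each recursive appeal to the induction hypothesis is on strictly smaller multisets, so the induction goes through.
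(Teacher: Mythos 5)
Your overall route differs from the paper's: you prove reflexivity and transitivity directly by induction on the recursive definition, whereas the paper first characterizes $\mul$ as the reflexive--transitive closure of a single-step relation (replace one element by an equivalent one, or by a finite multiset of strictly smaller ones) and then gets transitivity for free from the closure. Your reflexivity argument is complete and correct, and is essentially the same as the reflexivity sub-argument buried in the paper's proof.

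The transitivity half, however, has a genuine gap, and it sits exactly where you flag ``careful bookkeeping.'' Suppose $T \mul U$ is derived by the third case with witness $t$, so $(T - t) \mul U'$ where $U'$ is $U$ with all elements below $t$ removed, and $U \mul V$ is derived by the second or third case with some witness $u_0 \in U$. To build a derivation of $T \mul V$ headed by $t$, you need something like $U' \mul V''$ for the appropriate residue $V''$ of $V$ --- but $U'$ is obtained from $U$ by deleting elements on the \emph{left} of the relation $U \mul V$, and Lemma~\ref{lemma:mul_subset} only lets you shrink the \emph{right}-hand side. Deleting elements from the left can destroy the derivation, because the deleted elements of $U$ may be precisely the ones responsible for elements of $V$. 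To repair this you need an additional ingredient your plan does not name: either a decomposition lemma (from $U \mul V$, extract an assignment of each element of $V$ to a responsible element of $U$, and restrict both sides accordingly), or monotonicity of the relation under multiset union on both sides. The paper supplies exactly this machinery, but packages it differently: union-monotonicity is proved for the single-step closure, and the ``responsible element'' assignment appears in the direction showing that a chain of single steps can be compressed into one application of Definition~\ref{defn:set:ordering} (which is also where Lemma~\ref{lemma:mul_subset} is actually used). Your direct induction can be made to work, but only after you add and prove one of these auxiliary lemmas; as written, the mixed cases do not close.
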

\begin{proof}
        To show that $\mul$ is a quasi-order, we define a single-step version
  $\mulstep$, and show that $T \mul U$ if and only if $T \mulstar U$, where
  $\mulstar$ is the reflexive transitive closure of $U$.

  We define $\mulstep$ as:
  \begin{enumerate}
    \item For all elements $t, u$ if $t \in T$ and $u \approx t$, then $T \mulstep (T - t + u)$
    \item For all elements $t \in T$ and finite multisets $U$, if $t > u$ for all
          $u \in U$, then $T \mulstep ((T - t) \cup U)$
  \end{enumerate}

      First, observe that $\mulstar$ is monotonic with respect to multiset union:
  for all multisets $T$, $U$, and $V$, $T \mulstar U$ implies
  $(T \cup V) \mulstar (U \cup V)$.

  The reflexive case is given by $T \cup V = T \cup V$; we show the transitive
  case by showing there is a correspondence for each single-step. The proof
  for each case assumes an arbitrary multiset $V$.

  In case one we must show for all $t, u \in T$, $T \mulstar (T - t + u)$ implies
  $(T \cup V) \mulstar ((T - t + u) \cup V)$. $t$ and $u$ are also in $T \cup V$,
  therefore we have $(T \cup V) \mulstar ((T \cup V) - t + u)$. We have
  $(T \cup V) - t + u  =  (T - t + u) \cup V$, giving us the desired result.
  Case two is similar: $t \in T$ implies $t \in (T \cup V)$, and
  $((T \cup V) - t) \cup U = ((T -t) \cup U) \cup V$ for all $U$, $V$.

  Now we show the if direction by case analysis.
  \smallskip
  \begin{enumerate}[label=]
    \item Case 1: $U = \emptyset$.
    \\If $T = \emptyset$, then we have $T \mulstar U$ via
    reflexivity. Otherwise we can select an arbitrary $t$ to remove from $T$, and
    by definition of $\mulstep$ we have $T \mulstep ((T - t) \cup \emptyset)$.
    Then by induction on the size of $T$ we have
    $((T - t) \cup \emptyset) \mulstar \emptyset$. Then
    $T \mulstep ((T - t) \cup \emptyset) \mulstar \emptyset$, as required.
    \\
    \item Case 2:
    $t \in T \land u \in U \land t \approx u \land (T - t) \geqslant_{mul} (U - u)$.
    \\
    Let $T' = T - t$ and $U' = U - u$.
          Then we have
          $(T' + t) \mulstep (T' + u)$ by definition and
          $T' \mul U'$ implies $T' + u \mulstar U' + u$ via the inductive
          hypothesis and monotonicity.
          Thus
          $T = (T' + t) \mulstep (T' + u) \mulstar (U' + u) = U$ as required.
    \\
    \item Case 3: $t \in T \land (T - t) \geqslant_{mul} (U \setminus \{ u \in U~|~u < t \})$\\
          Partition $U$ into two sets $U_1$ and $U_2$ where
          $U_{1} = \{u \in U~|~ u \not< t\}$ and
          $U_{2} = \{u \in U~|~ u < t\}$.
          By definition we have $U = U_{1} \cup U_{2}$.
          As before $T' = T - t$. Then we have
$(T' + t) \mulstep (T' \cup U_{2})$. $T' \mul U_{1}$ implies
$(T' \cup U_{2}) \mulstar (U_{1} \cup U_{2})$ via monotonicity and induction.
          Thus
          $T = (T' + t) \mulstep (T' \cup U_{2}) \mulstar (U_{1} \cup U_{2}) = U$
          as required.
  \end{enumerate}

Now the only-if direction. First we have that $\mul$ is reflexive via
induction on size with base case $T = U = \emptyset$ handled by case 1, and
recursive case by case 2, similar to above, we remove an arbitrary $t$ from
$T$. Now we show how to handle one or more steps from $\mulstep$ in a single
step of $\mul$.

The key observation is that all elements $u$ of $U$, have exactly one
``responsible'' element $t$ in $T$ that justifies $T \mulstar U$: we must have
either $t > u$ or $t \approx u$ (in which case $t$ is uniquely responsible for
$u$ and no other elements of $U$). To prove $T \mul U$, for each $t$ in $T$, we
recursively build a tuple $(T', U', p)$ where $T'$, and $U'$ are multisets and
$p$ is the proof that $T' \mul U'$. The tuple is initialized to
$(\emptyset, \emptyset, U = \emptyset)$.

For each $t$ uniquely responsible for one $u$, we update the tuple to
$(T' + t, U' + u, t \in T \land u \in U \land t \approx u \land p)$. The new
proof state is valid because by induction we have $p$ being a proof of
$T' \mul U'$, as required.

Now consider each $t \in T$ where $t$ justified some multiset $U''$.
By induction, we have a proof of $T' \mul U'$; we need a proof that
$T' \mul ((U' \cup U'') \setminus \setfilter{u}{(U' \cup U'')}{u < t})$.
Since we have $t > u$ for all $u \in U''$, this simplifies to:
$T' \mul (U' \setminus \setfilter{u}{U'}{u < t})$, which we can obtain via the
hypothesis $T' \mul U'$ and lemma \ref{lemma:mul_subset}.

\end{proof}

\begin{lemma}\label{lemma:mul_wf}
  If $\ord_{X}$ is a well-quasi-order, the strict part of it's multiset
extension defined as $t \muls u$ if $t \mul u $ and $u \not\mul t$ is
a well-founded order.
\end{lemma}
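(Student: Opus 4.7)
The plan is to derive well-foundedness of $\muls$ from a classical consequence of Higman's lemma: if $\ord_X$ is a WQO, then the ``domination embedding'' order on finite multisets over $X$ is itself a WQO. With that in hand, any infinite strictly descending $\muls$-chain can be shown to contradict the good-pair property of this embedding WQO, together with the transitivity of $\mul$ already established in Lemma~\ref{lemma:mul_qo}.

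First I would introduce the \emph{embedding order} $\preceq^\ast$ on finite multisets over $X$: write $U \preceq^\ast T$ iff there exists an injection $\phi\colon U \to T$ with $\phi(u) \geqslant u$ for every $u \in U$. By Higman's lemma applied to finite multisets, $\preceq^\ast$ is a WQO whenever $\ord_X$ is.

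Second, I would prove that $\preceq^\ast$ refines $\mul$: whenever $U \preceq^\ast T$ (witnessed by some $\phi$), $T \mul U$ holds. The argument is by induction on $|U|$. The case $U = \emptyset$ is immediate from case 1 of \defref{set:ordering}. For the inductive step, pick any $u \in U$: if $\phi(u) \approx u$, apply case 2 of \defref{set:ordering} with $t = \phi(u)$ and reduce to $T \setminus \{\phi(u)\} \mul U \setminus \{u\}$, with the restricted injection $\phi|_{U \setminus \{u\}}$ witnessing the embedding for the subgoal; if $\phi(u) > u$, apply case 3 with $t = \phi(u)$ and reduce to $T \setminus \{\phi(u)\} \mul U \setminus \{u' \in U : u' < \phi(u)\}$, again using the restricted $\phi$ to witness the embedding of the remaining elements into $T \setminus \{\phi(u)\}$ (noting that $u$ is among the discarded elements and that $\phi$ is injective).

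Third, suppose for contradiction that $T_1 \muls T_2 \muls T_3 \muls \cdots$ is an infinite strictly descending chain. The WQO property of $\preceq^\ast$ yields indices $i < j$ with $T_i \preceq^\ast T_j$, hence $T_j \mul T_i$ by the refinement just established. On the other hand, transitivity of $\mul$ (Lemma~\ref{lemma:mul_qo}) applied along the chain from $T_{i+1}$ to $T_j$ gives $T_{i+1} \mul T_j$. Chaining, $T_{i+1} \mul T_j \mul T_i$ forces $T_{i+1} \mul T_i$, which contradicts $T_i \muls T_{i+1}$, since by definition the latter requires $T_{i+1} \not\mul T_i$.

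The main obstacle is the appeal to Higman's lemma in its multiset form; while classical, it is a non-trivial combinatorial result. If a self-contained proof were preferred, we could replace this step with a Nash--Williams minimal-bad-sequence argument applied to the tree of element ancestries across the chain, extracting via K\"onig's lemma an infinite $\ord_X$-descending branch whose existence would contradict the WQO of $\ord_X$ at the level of individual elements.
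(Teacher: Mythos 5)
Your proof is correct, but it takes a genuinely different route from the paper's. The paper works with the single-step relation $\mulstep$ introduced in Lemma~\ref{lemma:mul_qo}: from a hypothetical infinite $\muls$-chain it builds a finitely-branching ``ancestry'' tree recording which smaller elements each replaced element gave rise to, extracts an infinite branch (K\"onig's lemma), and reads that branch as an infinite strict descent in $\ord_{X}$, contradicting well-foundedness of the strict part of the WQO; this is essentially the Dershowitz--Manna argument and uses only well-foundedness of $>_{X}$, not the full good-pair property. Your argument instead leans on Higman's lemma: the domination-embedding order $\preceq^{\ast}$ on finite multisets is a WQO, it refines $\mul$ (your induction on $|U|$ goes through, and in the $\phi(u) > u$ case the element $u$ is indeed among the discarded ones, since $u \not\geqslant_{X} \phi(u)$ is exactly the paper's definition of $u <_{X} \phi(u)$), and a good pair $T_i \preceq^{\ast} T_j$ combined with reflexivity and transitivity of $\mul$ from Lemma~\ref{lemma:mul_qo} yields $T_{i+1} \mul T_i$, contradicting $T_i \muls T_{i+1}$. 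What each approach buys: the paper's proof is elementary modulo K\"onig's lemma and actually establishes the conclusion under the weaker hypothesis that $>_{X}$ is well-founded; yours outsources the combinatorial core to a classical but non-trivial theorem, genuinely exploits the WQO hypothesis, and the embedding-refinement step is reusable (it is most of the way toward showing that $\mul$ is itself a WQO, which is stronger than what the lemma asks for).
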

\begin{proof}

    This proof operates on the single-step relation defined in \ref{lemma:mul_qo}.
    Proving the well-founded property is done by showing that an infinite
    descent in $\muls$ would correspond to an infinite descent in the underlying ordering.

    Now, consider a tree built from an infinite path $T_{1}, T_{2}, \ldots$
    of multisets related by $\mul$. With the exception of special nodes $\top$
    and $\bot$, each node in the tree represents an element in a multiset, and
    the vertices connect the elements to the smaller ones they were replaced with
    via an application of $\mulstep$. Crucially, every edge represents an descent in
    a well-founded order.

    The tree is constructed as follows: let $\top$ be the root of the tree, and
    let the elements of $T_{1}$ be the children of $\top$. Then, for each $T_{i}$ in
    the infinite list, it was either obtained by replacing some element in $T_{i-1}$
    with a same-sized element, or by removing some element $t$ and replacing it with
    a finite number of smaller elements \textit{ts}.

    In the former case, the tree is not modified.

    In the latter case, if $\textit{ts} = \emptyset$, add a single child $\bot$ to the
    $t$ in the tree. Otherwise, let \textit{ts} be the children of $t$.

        Now, we note that the case one of $\mulstep$ is symmetric. Therefore, each
    pair of terms related by $\muls$ must correspond to at least one step in case
    two of $\mulstep$, Therefore in an infinite path of terms related by $\muls$
    contains an infinite number of applications of case two in $\mulstep$.

    Therefore, an infinite number of vertices will be added to the tree. Since
    the tree is finitely branching, it must have an infinitely descending path.
    However, this infinitely descending path would correspond to an infinite descent
    in the underlying ordering, contradicting that hypothesis that $\ord_{X}$ is
    a WQO.
\end{proof}

\begin{lemma}\label{lemma:rpo_simp}
If $\opgeq$ is a total quasi-ordering, then \rpogeq is a quasi-simplification
ordering.
\end{lemma}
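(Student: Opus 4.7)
The plan is to verify each defining property of a quasi-simplification ordering in turn: reflexivity, transitivity, the subterm property, and monotonicity (closure under contexts). All four proofs will proceed by structural induction on terms, leveraging the fact that Definition~\ref{defn:rpo} is already structured around the top-level function symbols and the multiset extension, for which we have already established (Lemma~\ref{lemma:mul_qo}) that it preserves the quasi-ordering property.

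First I would establish reflexivity: for any ground term $t = f(t_1,\ldots,t_n)$, since $\opgeq$ is reflexive we have $f \approx f$, so Case~\ref{defn:rpo:caseiii} of Definition~\ref{defn:rpo} applies and reduces the goal to $\{t_1,\ldots,t_n\} \rpogeqm \{t_1,\ldots,t_n\}$. By the outer induction hypothesis each $t_i \rpogeq t_i$, and then reflexivity of the multiset extension (from Lemma~\ref{lemma:mul_qo}) gives the result. Next, for the subterm property $f(t_1,\ldots,t_n) \rpogeq t_i$, I would induct on the size of the term and case-split using totality of $\opgeq$ on the relation between $f$ and the head symbol $g$ of $t_i$; in each case Definition~\ref{defn:rpo} reduces to a multiset comparison that can be discharged by combining the induction hypothesis (each immediate subterm of $t_i$ is smaller than the original) with the appropriate case of Definition~\ref{defn:set:ordering}. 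Monotonicity is the easiest: given $t \rpogeq u$, the two terms $f(\ldots,t,\ldots)$ and $f(\ldots,u,\ldots)$ have matching heads, so Case~\ref{defn:rpo:caseiii} reduces the claim to a multiset comparison that follows from pointwise reflexivity on unchanged positions together with the hypothesis on the single changed position.

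The main obstacle will be transitivity: given $s \rpogeq t$ and $t \rpogeq u$, show $s \rpogeq u$. My plan is strong induction on the sum of the sizes of $s$, $t$, and $u$, with an inner case analysis on which clause of Definition~\ref{defn:rpo} justifies each of the two hypotheses. This yields nine combinations; in each, unpacking the definitions produces multiset comparisons that one can compose using transitivity of the multiset extension (again Lemma~\ref{lemma:mul_qo}) together with the induction hypothesis, which is applicable because every term that appears inside the recursive multiset comparison is either a proper subterm of one of $s,t,u$ or arises only paired with strictly smaller terms. Totality of $\opgeq$ is the key ingredient that lets us collapse the mixed strict/equivalence atoms $f >_\ops g$, $g >_\ops f$, $f \approx g$ into a consistent ordering between the three head symbols of $s$, $t$, and $u$, thereby selecting which of the three clauses of Definition~\ref{defn:rpo} should justify the conclusion $s \rpogeq u$. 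I expect the bookkeeping to combine multiset steps across the two hypotheses into one to be the most delicate part, and the subterm property together with monotonicity proved above may be needed as auxiliary facts to bridge certain cases (e.g.\ to move a term from inside a multiset comparison into a direct head comparison when a head symbol strictly dominates).
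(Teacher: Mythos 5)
Your plan matches the paper's proof essentially step for step: reflexivity and replacement via Case~3 of Definition~\ref{defn:rpo} together with Lemma~\ref{lemma:mul_qo}, the subterm property by induction on term size with a totality-driven case split on the head symbols, and transitivity by induction on size with a case analysis on the relations among the three head symbols (the paper collapses your nine clause-combinations into four cases on $f,g,h$, but the argument is the same). The only cosmetic difference is that the paper also notes the deletion property, which is immediate from the multiset ordering and vacuous here given fixed arities.
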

\begin{proof}

  We must show that $\rpogeq$ is a quasi-ordering, i.e it is reflexive and
  transitive; and also that it satisfies the replacement, subterm, and deletion
  properties.

  Reflexivity occurs via case 3 and \ref{lemma:mul_qo}.Replacement and deletion
  follow from case 3 of RPO and the definition of the multiset ordering.

    To prove the subterm property, we show a slightly stronger property: for all
  terms $t = f(\range{t}{m})$ and (not necessarily immediate) subterms
  $u = g(\range{u}{n})$, $t \rpogt u$. The proof goes by induction on the term
  size, where terms are bigger than their subterms, and by case analysis on the
  relationship between $f$ and $g$. Because $\opgeq$ is total, we have either
  $f \opgt g$, $f \approx g$, or $g \opgt f$.

    If $f \opgt g$, then to get $t \rpogeq u$ we must show
  $\{ t \} \rpogtm \srange{u}{n}$. Via induction, we have $t \rpogt u_{i}$ for
  all $1 \leqslant i \leqslant n$, as each $u_{i}$ is a subterm of $u$.
    To show $u \not\rpogeq t$, observe that we need
  $\srange{u}{n} \rpogeqm \{ t \}.$ This is impossible via the inductive
  hypothesis and the definition of $\rpogeqm$: we already have $t \rpogt u_{i}$
  for all $u_{i}$.

  If $f \approx g$, then we must show
  $\{\range{t}{m}\} \rpogtm \{\range{u}{n}\}$. If $u$ is a direct subterm of
  $t$, then $u = t_{i}$ for some $i$. By the inductive hypothesis we have
  $t_{i} \approx u \rpogt u_{j}$ for all $u_{j}$, which implies
  $\{\range{t}{m}\} \rpogtm \{\range{u}{n}\}$. If $u$ is a nested subterm, then
  we have some $t_{i} \rpogt u_{j}$ for all $u_{j}$ via the induction hypothesis:
  all $u_{j}$ are subterms of $t_{i}$.

  If $g \opgt f$, to get $t \rpogeq u$ then we must show $\srange{t}{m} \rpogeqm \{
u \}$. If $u$ was a direct subterm, then $t_{i} = u$ gives us the desired result;
otherwise we have $t_{i} \rpogt u$ via the inductive hypothesis.
  To show $u \not\rpogeq t$, observe that showing $u \rpogeq t$ would require
$\{ u \} \rpogtm \srange{t}{m}$. However we already have some $t_{i} \approx u$,
which prevents this possibility.

    Transitivity is also proven via induction on size. Assume we have
  $s = f(s_1, \ldots, s_{m}) \rpogeq{} t = g(t_{1}, \ldots, t_{n})$ and
  $t \rpogeq u = h(u_{1}, \ldots, u_{p})$. We proceed to show $s \rpogeq u$ by for
  each relationship between $f$, $g$, and $h$.
  \begin{enumerate}
    \item $f \opgt g \opgt h$, or $f \opgt g > h$: Via transitivity of
$\opgt$ we have $f \opgt h$, therefore we must show
$\{ s \} \rpogtm \{u_{1}, \ldots, u_{p}\}$. $\{ s \} \rpogeqm \{ t \}$ follows
from our assumption $s \rpogeq t$, and
$\{ t \} \rpogtm \{u_{1}, \ldots, u_{p} \}$ follows from $t \rpogeq u$. By the
inductive hypothesis, we have $s \rpogeq t \rpogeq u_{i}$ for all $u_{i}$, and
therefore $\{ s \} \rpogeqm \{ t \} \rpogtm \{u_{1}, \ldots, u_{p} \}$.
    \item $h \opgt g$: There must exist some subterm $t_{i}$ such that
          $t_{i} \rpogeq u$. Therefore we have
          $s \rpogeq t_{i}$ and $t_{i} \rpogeq u$, the inductive hypothesis
          gives us $s \rpogeq t_{i} \rpogeq u$.
    \item $g \opgt f$: There must exist some subterm $s_{i}$ such that
          $s_{i} \rpogeq t$. As above, using the induction hypothesis allows us
          to show $s_{i} \rpogeq u$, by the subterm property we have
          $s \rpogeq s_{i}$. We show $s \rpogeq u$ by the definition of $\rpogeq$.
    \item $f \approx g \approx h$. We clearly have $f \approx h$, we need to
          show $\srange{s}{m} \rpogeqm \srange{u}{p}$, which we have via \ref{lemma:mul_qo}.
  \end{enumerate}

\end{proof}

\begin{theorem}
  \label{theorem:rpo_wqo}
If~$\opgeq$ is a total \wqo, then \rpogeq{} is a \wqo.
\end{theorem}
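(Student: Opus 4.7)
The plan is to reduce the statement to the classical Kruskal tree theorem. First, I would invoke Lemma~\ref{lemma:rpo_simp} to conclude that $\rpogeq$ is a quasi-simplification ordering: reflexive, transitive, monotonic under contexts, and satisfying the subterm and deletion properties. From these properties, a straightforward induction on term structure shows that $\rpogeq$ contains the homeomorphic embedding relation $\hookrightarrow$ on terms induced by $\opgeq$ on function symbols, in the sense that whenever $t \hookrightarrow u$ we have $u \rpogeq t$. The base case uses reflexivity; the inductive step uses subterm together with replacement (to dig into a matched child) and deletion (to ignore extra children of the larger term).

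Second, I would appeal to Kruskal's tree theorem: when $\opgeq$ is a WQO on the (finite) signature $\ops$, the induced embedding relation $\hookrightarrow$ is itself a WQO on ground terms built from $\ops$. This is the non-trivial classical ingredient, the same one used by Dershowitz~\cite{dershowitz1982orderings} to derive well-foundedness in the strict setting, and it does not rely on antisymmetry of the label ordering.

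Combining the two steps: given any infinite chain $t_1, t_2, \ldots$, Kruskal yields indices $i < j$ with $t_i \hookrightarrow t_j$, and hence $t_j \rpogeq t_i$ by containment. This is exactly the defining property of a WQO for $\rpogeq$.

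The main obstacle I anticipate is verifying that the classical Kruskal/Dershowitz machinery transfers cleanly to this \emph{quasi}-setting, where the precedence on symbols is a WQO rather than a strict well-founded order, and where equivalent-but-distinct function symbols can occur. The Kruskal argument already caters for WQO-labelled trees, so no essential change is needed; but the step establishing $\hookrightarrow \;\subseteq\; \rpogeq$ must be careful to use the correct variant of embedding (with labels compared by $\opgeq$, not equality), and totality of $\opgeq$ — already assumed in the hypothesis — is convenient here because it rules out incomparable label pairs and makes the embedding relation behave as in the classical case.
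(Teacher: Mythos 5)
Your proposal is correct and follows essentially the same route as the paper: both rest on Lemma~\ref{lemma:rpo_simp} (that $\rpogeq$ is a quasi-simplification ordering) together with the Dershowitz well-quasi-ordering theorem for quasi-simplification orderings compatible with a WQO on function symbols. The only difference is that you unfold that cited theorem into its standard proof (containment of the homeomorphic embedding plus Kruskal's tree theorem), whereas the paper invokes it as a black box.
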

\begin{proof}

    To show that $\rpogeq$ is \wqo, via the well-foundedness theorem of
  Dershowitz~\cite{dershowitz1982orderings}, which states that a
  quasi-simplification ordering $\geqslant'$ is \wqo if there exists a well-quasi
  ordering $\geqslant$ such that $f \geqslant g$ implies
  $f(\range{t}{n}) \geqslant' g(\range{t}{n})$.

  By \ref{lemma:rpo_simp} we have that $\rpogeq$ is a quasi-simplification
  ordering, and there exists an ordering over function symbols to satisfy the
  condition of the well-foundedness theorem: namely the underlying order
  $\opgeq$ from which $\rpogeq$ is constructed.

\end{proof}

\begin{theorem}
  If~$\opgeq$ is a total \wqo, then \rpogeq is thin
\end{theorem}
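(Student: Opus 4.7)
The plan is to show that any term $u$ with $t \approx u$ has the same number of function-symbol occurrences as $t$, and then to invoke the finiteness of $\ops$ to conclude that the equivalence class of $t$ is finite.

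First, I would prove a head-matching lemma: if $t \approx u$ with $t = f(t_1,\ldots,t_m)$ and $u = g(u_1,\ldots,u_n)$, then $f \approx g$ under $\opgeq$. Totality of $\opgeq$ leaves three cases. I would rule out $f \opgt g$ by noting that $t \rpogeq u$ can then only be justified via case 1 of \defref{rpo}, yielding $t \rpogt u_i$ for every $i$, while $u \rpogeq t$ can only be justified via case 2, which (through the multiset ordering) forces some $u_j \rpogeq t$; the antisymmetry of the strict part of a quasi-ordering (a consequence of Lemma~\ref{lemma:rpo_simp}) then gives a contradiction. The case $g \opgt f$ is symmetric, so only $f \approx g$ remains, and both directions of equivalence reduce via case 3 to multiset equivalence of the argument multisets under $\rpogeqm$.

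Next I would prove a matching lemma for the multiset extension: if $T \rpogeqm U$ and $U \rpogeqm T$, then $|T| = |U|$ and there is a bijection $\phi\colon T \to U$ with $t \approx \phi(t)$ for every $t \in T$. Using totality of $\rpogeq$ (inherited from that of $\opgeq$), I would pick $t \in T$ maximal under $\rpogeq$, then argue by case analysis on \defref{set:ordering} that $T \rpogeqm U$ together with maximality of $t$ forces some $u \in U$ with $u \approx t$, while $U \rpogeqm T$ prevents any element of $U$ from strictly dominating $t$. Peeling off the pair $(t, u)$ and invoking the induction hypothesis on $(T - t, U - u)$ yields the desired bijection.

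With these lemmas in hand, I would prove by structural induction on $t$ that $t \approx u$ implies $|t| = |u|$, where $|\cdot|$ counts function-symbol occurrences. The head-matching lemma gives $f \approx g$, and the matching lemma applied to the argument multisets gives $m = n$ together with a permutation $\pi$ with $t_i \approx u_{\pi(i)}$. The induction hypothesis applied to each immediate subterm yields $|t_i| = |u_{\pi(i)}|$, so $|t| = 1 + \sum_i |t_i| = 1 + \sum_i |u_{\pi(i)}| = |u|$. Since $\ops$ is finite, only finitely many ground terms of a given size can be built from $\ops$, so $\{u \mid t \approx u\}$ is finite, establishing the thin property.

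The main obstacle is the matching lemma: the three cases of \defref{set:ordering} interact subtly, and one must carefully argue that the maximal element $t \in T$ is matched by an equivalent element of $U$ rather than being ``covered'' by a strict descent in case 3, and that the remaining comparisons $(T - t) \rpogeqm (U - u)$ and $(U - u) \rpogeqm (T - t)$ both survive the peel-off step so that the induction hypothesis applies.
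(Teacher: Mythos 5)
Your proposal is correct and follows essentially the same route as the paper's proof: first show that $t \approx u$ forces the head symbols to satisfy $f \approx g$ (via the subterm property / a contradiction with some $u_j \rpogeq t$), then show the argument multisets must be matched elementwise up to $\approx$, and finally conclude by a finiteness argument over the finite signature. The only cosmetic difference is the last step, where you count symbol occurrences and bound the number of terms of a fixed size, whereas the paper directly inducts on term size to show each term has finitely many equivalents; both close the argument in the same way.
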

\begin{proof}
  We show that for any term $t = f(\range{t}{m})$, the set of terms
  $\{u ~|~ t \approx u = g(\range{u}{m}) \}$ is finite.

  If $t \approx u$, then we must have $t \rpogeq u$ and $u \rpogeq t$.
  Assume we have $t \rpogeq u$.

  First, we show that if $f > g$ then $u \not\rpogeq t$.
  Assume $u \rpogeq t$, then there must have some $u_{i}$ such that
  $u_{i} \rpogeq t$. But via the subterm property, we have
  $u \rpogt u_{i} \rpogeq t$, contradicting $t \rpogeq u$.

  Likewise, if $g > f$, then there is some $t_{i} \rpogeq {u}$. Then
  $t \rpogt t_{i} \rpogeq u$. Therefore we also have $u \not\rpogeq t$.

  Therefore, $t \approx u$ only if $f \approx g$. Since there are only a finite
  number of function symbols, then to show thinness we must show that only a
  finite number of multisets $\srange{u}{n}$ such that
  $\srange{t}{m} \rpogeqm \srange{u}{n}$ and $\srange{u}{n} \rpogeqm \srange{t}{m}$.
  If $\srange{t}{m} = \emptyset$, then the only such set is $\emptyset$.
  Otherwise, only such multisets are those where $\srange{u}{n}$
  is obtained from $\srange{t}{m}$ by removing zero or more terms $t_{i}$ and
  replacing them the same number of terms $u_{j}$ where $t_{i} \approx u_{j}$.
  If $\srange{t}{m} \rpogeqm \srange{u}{n}$ was justified by removing $t_{i}$
  from $\srange{t}{m}$ and removing smaller terms $\{u'~|~ u' < t_{i}\}$  from
  $\srange{u}{n}$, then we would have $\srange{t}{m} \rpogtm \srange{u}{n}$:
  this corresponds to the irreflexive single-step operation shown to form a
  well-founded order in lemma \ref{lemma:mul_wf}.

  Since the multisets contain a finite number of elements, and each term only
  has a finite number of equivalent terms (by induction on term size), there are
  only a finite number of such multisets.
\end{proof}

\clearpage
\section{Basic Equalities and Proved Theorems in the Program Equivalence Case Study}
\label{appendix:eval:casestudy}

\newcommand\addDist{1}
\newcommand\subDist{2}
\newcommand\timesTwoPlus{3}
\newcommand\plusZero{4}
\newcommand\mulZero{5}
\newcommand\mulOne{6}
\newcommand\subSelf{7}
\newcommand\divSelf{8}
\newcommand\subAdd{9}
\newcommand\mulSym{10}
\newcommand\addSym{11}
\newcommand\mulAssoc{12}
\newcommand\addAssoc{13}
\newcommand\ifT{14}
\newcommand\ifF{15}
\newcommand\seqNop{16}
\newcommand\seqNopP{17}
\newcommand\repeatNop{18}
\newcommand\repeatN{19}
\newcommand\ifJoin{20}
\newcommand\mapFusion{21}
\newcommand\foldMap{22}
\newcommand\foldFusion{23}

\begin{table}[H]
  \begin{tabular}{ |l|l|l| }
    \hline
    & Name & Formula  \\
  \hline
  1. & addDist & $(x * y) + (z * y) = (x + z) * y$
  \\
  2. & subDist & $(x * y) - (z * y) =  (x - z) * y$
  \\
  3. & times2Plus & $x * 2 = x + x$
  \\
  4. & plus0 & $x + 0 = x$
  \\
  5. & mul0& $x * 0 = 0$
  \\
  6. & mul1 & $x * 1 = x$
  \\
  7. & subSelf& $x - x = 0$
  \\
  8. & divSelf& $x / x = 1$
  \\
  \subAdd. & subAdd& $x - y = x + (-y)$
  \\
  \mulSym. & mulSym& $e * e' = e' * e$
  \\
  \addSym. & addSym& $e + e' = e' + e$
  \\
  \mulAssoc. & mulAssoc& $(x * y) * z = x * (y * z)$
  \\
  \addAssoc. & addAssoc& $(x + y) + z = x + (y + z)$
  \\
  \ifT. & ifT & \texttt{if True then lhs else rhs = lhs}
  \\
  \ifF. & ifF& \texttt{if False then lhs else rhs = rhs}
  \\
  \seqNop . & seqNop & \texttt{seq lhs nop = lhs}
  \\
  \seqNopP. & seqNop\textquotesingle & \texttt{seq nop rhs = rhs}
  \\
  \repeatNop . & repeatNop & \texttt{repeat 0 body = nop}
  \\
  \repeatN . & repeatN1 & \texttt{repeat (S n) body = seq body (repeat n body)}
  \\
  \ifJoin . & ifJoin & \texttt{if c1 then (if c2 then op else nop) else nop}
  \\
  & & \texttt{ = if (c1 and c2) then op else nop}
  \\
  \mapFusion. & mapFusion &  \texttt{map g (map f xs) = map (g . f) xs}
  \\
  \foldMap. & foldMap & \texttt{(foldr f e) . (map g) = foldr (f . g) e}
  \\
  \foldFusion. & foldFusion & $\forall$ \texttt{x y . h (f x y) = f\textquotesingle~x (h y)}
  \\
  & &            $\implies$ \texttt{h . (foldr f e) xs = foldr f\textquotesingle~(h e) xs}
  \\
  \hline
  \end{tabular}
  \caption{Basic Equality Axioms used in our Program Equivalence Case Study }\label{table:ast-basic-eq}
  \end{table}
  \begin{table}
\begin{tabular}{ |l|l| }
  \hline
Formula	& Rewrites\\
\hline
$(-(x + x)) + (x + x) = 0$ & 	\subSelf, \addSym, \subAdd
\\
$(x * 2) * 2 = (x + x + x + x)$	& \timesTwoPlus, \addAssoc
\\
$(x * y) + (y * x) = (x * 2 * y)$	& \timesTwoPlus, \mulSym, \mulAssoc
\\
$(x * y) + (y * z) - ( (x + z) * y) = 0$	& \addDist, \subSelf, \mulSym
\\
$(x * y) - (0 * y) = x * y$	& \subDist, \subAdd, \subSelf, \plusZero
\\
$x * (1 - (x / x)) = 0$	& \mulZero, \subSelf, \divSelf
\\
$x * 1 = x + 0$	& \plusZero, \mulOne
\\
$\texttt{if true then (seq nop hw) else nop = hw}$ & 	\seqNopP, \ifT
\\
$\texttt{repeat (S (S Z)) hw = seq hw hw}$ & \seqNop, \repeatNop, \repeatN
\\
$\texttt{if True then (if False then hw else nop) else nop}$ & \\
$\texttt{ = if (True and False) then hw else nop}$ &	\ifJoin
\\
$\texttt{map p1 (map p2 list) = map p3 list}$ &	\mapFusion
\\
$\texttt{( (foldr add 0) . (map p1)) list = foldr addP1 0 list}$ &	\foldMap
\\
$\texttt{double . (foldr add 0) list = foldr twicePlus 0 list}$ &	\foldFusion
\\
\hline
\end{tabular}
\caption{\small{Theorems Proved via Rewriting using the Basic Equality axioms in
  \ref{table:ast-basic-eq}}}
\vspace{-0.2in}
\label{table:ast-checked-eq}
\end{table}

\clearpage
\section{Dafny Matching Loop Example}
\label{appendix:evaluation:dafnyloop}
\begin{figure}[H]
  \begin{mcode}
datatype List = Nil | Cons(head: int, tail: List)

function append(xs: List, ys: List): List
{
	match xs
	case Nil => ys
	case Cons(x, xrest) => Cons(x, append(xrest, ys))
}

function reverse(xs: List): List
{
	match xs
	case Nil => Nil
	case Cons(x, xrest) => append(reverse(xrest), Cons(x, Nil))
}

lemma AppendNil(xs: List)
    ensures append(xs, Nil) == xs; {}

lemma AppendAssoc(xs: List, ys: List, zs: List)
  ensures append(xs, append(ys, zs)) == append(append(xs, ys), zs); {}

lemma ReverseAppendDistrib(xs: List, ys: List)
  ensures reverse(append(xs, ys)) == append(reverse(ys), reverse(xs));
{
    forall xs : List {AppendNil(xs);}
    forall xs : List, ys: List, zs: List {AppendAssoc(xs, ys, zs);}
}

lemma ReverseInvolution(xs: List)
    ensures reverse(reverse(xs)) == xs;
{
    // Axiom definition inserted here
    { forall (xs, ys) { ReverseAppendDistrib(xs, ys); } }

    match xs {
      case Nil =>
      case Cons(x, xrest) =>
        calc { // Equational reasoning steps removed here
            reverse(reverse(xs));
            {ReverseInvolution(xrest);}
            xs;
         }
    }
}
  \end{mcode}
  \caption{A version of the reverse involution proof (\figref{eval:involution})
from \cite{leino2013verified} with intermediate equality steps removed.
Attempting to verify this code causes a matching loop when using Dafny version
3.3.0 and Z3 version 4.8.5}
\end{figure}

\end{document}